\newcommand{\cc}[1]{\overline{#1}}
\newcommand{\e}[0]{\E}
\newcommand{\bbset}[1]{\mathbb{#1}}
\newcommand{\N}[0]{\bbset{N}}
\newcommand{\Z}[0]{\bbset{Z}}
\newcommand{\R}[0]{\bbset{R}}
\newcommand{\C}[0]{\bbset{C}}
\newcommand{\Bs}[0]{\mathcal{B}}
\newcommand{\PWs}[0]{\mathcal{PW}}
\newcommand{\Fto}[0]{\mathcal{F}}
\newcommand{\Ft}[1]{\hat{#1}}
\newcommand{\iu}[0]{\I}
\newcommand{\defequal}[0]{\mathrel{\mathop{:}}=}
\newcommand{\di}[1]{\;\D#1}
\newcommand{\spacedot}[0]{\,\cdot\,}
\DeclareMathOperator{\sinc}{sinc}
\DeclareMathOperator*{\esssup}{ess\,sup}
\begin{document}

\title*{Signal and System Approximation from General Measurements}
\subtitle{\mdseries \normalsize Dedicated to Professor Paul Butzer on his 85th birthday}
\author{Holger~Boche\thanks{H. Boche was supported by the German Research Foundation (DFG)
    under grant \mbox{BO 1734/13-2}.} and Ullrich~J.~M\"onich\thanks{U. M{\"o}nich was supported
    by the German Research Foundation (DFG) under grant \mbox{MO 2572/1-1}.}}
\institute{Holger~Boche \at Technische Universit\"at M\"unchen,
Lehrstuhl f\"ur Theoretische Informationstechnik\\
Arcisstr. 21, 80290 M\"unchen, Germany, \email{boche@tum.de}
\and Ullrich~J.~M\"onich \at Massachusetts Institute of Technology,
Research Laboratory of Electronics\\
77 Massachusetts Avenue, Cambridge, MA 02139, USA, \email{moenich@mit.edu}}
\maketitle

\abstract{In this paper we analyze the behavior of system approximation processes for
  stable linear time-invariant (LTI) systems and signals in the Paley--Wiener space
  $\PWs_{\pi}^{1}$. We consider approximation processes, where the input signal is not
  directly used to generate the system output, but instead a sequence of numbers is used
  that is generated from the input signal by measurement functionals. We consider
  classical sampling which corresponds to a pointwise evaluation of the signal, as well as
  several more general measurement functionals. We show that a stable system approximation
  is not possible for pointwise sampling, because there exist signals and systems such
  that the approximation process diverges. This remains true even with oversampling.
  However, if more general measurement functionals are considered, a stable approximation
  is possible if oversampling is used. Further, we show that without oversampling we have
  divergence for a large class of practically relevant measurement procedures.}

\section{Introduction}

Sampling theory plays a fundamental role in modern signal and information processing,
because it is the basis for today's digital world \cite{shannon49}. The reconstruction of
continuous-time signals from their samples is also essential for other applications and
theoretical concepts \cite{jerri77,higgins85,marvasti01_book}. The reconstruction of
non-bandlimited signals, which was analyzed for example in
\cite{butzer80a,butzer92,ferreira95}, will not be considered in this paper, instead we
focus on bandlimited signals. For an overview of existing sampling theorems see for
example \cite{jerri77,higgins96_book}, and \cite{butzer88}.

The core task of digital signal processing is to process data. This means that, usually,
the interest is not in a reconstruction of the sampled signal itself, but in some
processed version of it. This might be the derivative, the Hilbert transform or the output
of any other stable linear system $T$. Then the goal is to approximate the desired
transform $T f$ of a signal $f$ by an approximation process, which uses only finitely
many, not necessarily equidistant, samples of the signal $f$. Exactly as in the case of
signal reconstruction, the convergence and approximation behavior is important for
practical applications \cite{butzer12}.

Since sampling theory is so fundamental for applications it is essential to have this
theory developed rigorously. From the first beginnings in engineering, see for example
\cite{butzer11,butzer11a} for historical comments, one main goal in research was to extend
the theory to different practically relevant classes of signals and systems. The first
author's interest for the topic was aroused in discussions with Paul Butzer in the early
1990s at RWTH Aachen. Since 2005 both authors have done research in this field and
contributed with publications, see for example the second author's thesis
\cite{monich11_phdthesis} for a summary.

In order to continue the ``digital revolution'', enormous capital expenditures and
resources are used to maintain the pace of performance increase, which is described by
Moore's law. But also the operation of current communication systems requires huge amounts
of resources, e.g. energy. It is reasonable to ask whether this is necessary. In this
context, from a signal theoretic perspective, three interesting questions are: Do there
exist fundamental limits that determine which signals and systems can be implemented
digitally? In what technology---analog, digital, or mixed signal--- can the systems be
implemented? What are the necessary resources in terms of energy and hardware to implement
the systems?

Such an implementation theory is of high practical relevance, and it already influences
the system design, although there is no general system theoretic approach available yet to
answer the posed questions. For example, the question whether to use a system
implementation based on the Shannon series operating at Nyquist rate or to use an approach
based on oversampling, which comes with higher technological effort, plays a central role
in the design of modern information processing systems. A further important question
concerns the measurement procedures. Can we use classical sampling-based measurement
procedures, where the signal values are taken at certain time instants, or is it better to
use more general measurement procedures? As already mentioned, no general methodical
approach is known that could answer these questions. Regardless of these difficulties,
Hilbert's vision applies: ``We must know. We will know.''

In this paper we analyze the convergence behavior of system approximation processes for
different kinds of sampling procedures. The structure of this paper is as follows: First,
we introduce some notation in Section~\ref{1}. Then, we treat pointwise
sampling in Section~\ref{4}. In Section~\ref{40} we
study general sampling functionals and oversampling. In Section~\ref{65} we
analyze the convergence of subsequences of the approximation process. Finally, in
Section~\ref{83} we discuss the structure of more general measurement
functionals.

The material in this paper will be presented in part at the IEEE International Conference
on Acoustics, Speech, and Signal Processing 2014 (ICASSP 2014)
\mbox{\cite{boche14a_submitted,boche14b_submitted}}.

\section{Notation}\label{1}

In order to continue the discussion, we need some preliminaries and notation. Let $\Ft{f}$
denote the Fourier transform of a function $f$, where $\Ft{f}$ is to be understood in the
distributional sense. By $L^p(\R)$, $1\leq p \leq \infty$, we denote the usual
$L^p$-spaces with norm $\lVert\spacedot\rVert_p$. $C[a,b]$ is the space of all continuous
functions on $[a,b]$. Further, $l^p$, $1\leq p < \infty$, is the space of all sequences
that are summable to the $p$th power.

For $\sigma > 0$ let $\Bs_{\sigma}$ be the set of all entire functions $f$ with the
property that for all $\epsilon > 0$ there exists a constant $C(\epsilon)$ with $\lvert
f(z) \rvert \leq C(\epsilon) \exp \bigl( (\sigma+\epsilon) \lvert z \rvert \bigr)$ for all
$z \in \C$. The Bernstein space $\Bs_{\sigma}^{p}$ consists of all functions in
$\Bs_{\sigma}$, whose restriction to the real line is in $L^p(\R)$, $1 \leq p \leq
\infty$. A function in $\Bs_{\sigma}^{p}$ is called bandlimited to $\sigma$. By the
Paley--Wiener--Schwartz theorem, the Fourier transform of a function bandlimited to
$\sigma$ is supported in $[-\sigma,\sigma]$. For $1\leq p \leq 2$ the Fourier
transformation is defined in the classical and for $p>2$ in the distributional sense. It
is well known that $\Bs_{\sigma}^{p} \subset \Bs_{\sigma}^{s}$ for $1 \leq p \leq s \leq
\infty$. Hence, every function $f\in \Bs_{\sigma}^{p}$, $1 \leq p \leq \infty$, is
bounded.

For $-\infty <\sigma_1 < \sigma_2<\infty$ and $1\leq p\leq \infty$ we denote by
$\PWs_{[\sigma_1,\sigma_2]}^{p}$ the Paley--Wiener space of functions $f$ with a
representation $f(z)=1/(2\pi) \int_{-\sigma_1}^{\sigma_2} g(\omega) \e^{\iu z\omega}
\di{\omega}$, $z \in \C$, for some $g \in L^p[\sigma_1,\sigma_2]$. The norm for
$\PWs_{[\sigma_1,\sigma_2]}^{p}$, $1\leq p < \infty$, is given by $\lVert f
\rVert_{\PWs_{[\sigma_1,\sigma_2]}^{p}}=( 1/(2\pi)\int_{\sigma_1}^{\sigma_2} \lvert
\Ft{f}(\omega) \rvert^p \di{\omega} )^{1/p}$. For $\PWs_{[-\sigma,\sigma]}^{p}$,
$0<\sigma<\infty$, we use the abbreviation $\PWs_{\sigma}^{p}$. The nomenclature
concerning the Bernstein and Paley--Wiener spaces, we introduced so far, is not consistent
in the literature. Sometimes the space that we call Bernstein space is called
Paley--Wiener space \cite{seip98}. We adhere to the notation used in
\cite{higgins96_book}.

Since our analyses involve stable linear time-invariant (LTI) systems, we briefly review
some definitions and facts. A linear system $T:\PWs_{\pi}^{p} \rightarrow \PWs_{\pi}^{p}$,
$1 \leq p \leq \infty$, is called stable if the operator $T$ is bounded, i.e., if $\lVert
T \rVert \defequal \sup_{\lVert f \rVert_{\PWs_{\pi}^{p}}\leq 1} \lVert Tf
\rVert_{\PWs_{\pi}^{p}} < \infty$. Furthermore, it is called time-invariant if
$(Tf(\spacedot - a))(t)=(Tf)(t-a)$ for all $f \in \PWs_{\pi}^{p}$ and $t,a \in \R$.

For every stable LTI system $T:\PWs_{\pi}^{1} \rightarrow \PWs_{\pi}^{1}$ there exists
exactly one function $\Ft{h}_T \in L^\infty[-\pi,\pi]$ such that
\begin{equation}\label{2}
  (Tf)(t)=\frac{1}{2\pi} \int_{-\pi}^{\pi} 
  \Ft{f}(\omega) \Ft{h}_T(\omega) \e^{\iu \omega t} \di{\omega} ,\quad t \in \R,
\end{equation}
for all $f \in \PWs_{\pi}^{1}$ \cite{boche10g}. Conversely, every function $\Ft{h}_T \in
L^\infty[-\pi,\pi]$ defines a stable LTI system $T:\PWs_{\pi}^{1} \rightarrow
\PWs_{\pi}^{1}$. The operator norm of a stable LTI system $T$ is given by $\lVert T
\rVert=\lVert \Ft{h} \rVert_{L^\infty[-\pi,\pi]}$. Furthermore, it can be shown that the
representation \eqref{2} with $\Ft{h}_T \in
L^\infty[-\pi,\pi]$ is also valid for all stable LTI systems $T: \PWs_{\pi}^{2}
\rightarrow \PWs_{\pi}^{2}$. Therefore, every stable LTI system that maps $\PWs_{\pi}^{1}$
in $\PWs_{\pi}^{1}$ maps $\PWs_{\pi}^{2}$ in $\PWs_{\pi}^{2}$, and vice versa. Note that
$\Ft{h}_T \in L^\infty[-\pi,\pi] \subset L^2[-\pi,\pi]$, and consequently $h_T \in
\PWs_{\pi}^{2}$.

An LTI system can have different representations. In textbooks, usually the frequency
domain representation \eqref{2}, and the time domain
representation in the form of a convolution integral
\begin{equation}\label{3}
  (Tf)(t)=\int_{-\infty}^{\infty} f(\tau) h_T(t-\tau) \di{\tau} 
\end{equation}
are given \cite{franks69_book,oppenheim09_book}. Although both are well-defined for stable
LTI systems $T:\PWs_{\pi}^{2} \rightarrow \PWs_{\pi}^{2}$ operating on $\PWs_{\pi}^{2}$,
there are systems and signal spaces where these representations are meaningless, because
they are divergent \cite{ferreira99,boche08c}. For example, it has been shown that there
exist stable LTI systems $T:\PWs_{\pi}^{1} \rightarrow \PWs_{\pi}^{1}$ that do not have a
convolution integral representation in the form of
\eqref{3}, because the integral diverges for certain
signals $f \in \PWs_{\pi}^{1}$ \cite{boche08c}. However, the frequency domain
representation \eqref{2}, which we will use in this
paper, holds for all stable LTI systems $T:\PWs_{\pi}^{1} \rightarrow \PWs_{\pi}^{1}$.

\section{Sampling-Based Measurements}\label{4}

\subsection{Basics of Non-Equidistant Sampling}

In the classical non-equidistant sampling setting the goal is to reconstruct a bandlimited
signal $f$ from its non-equidistant samples $\{f(t_k)\}_{k \in \Z}$, where $\{t_k\}_{k \in
  \Z}$ is the sequence of sampling points. One possibility to do the reconstruction is to
use the sampling series
\begin{equation}\label{5}
  \sum_{k=-\infty}^{\infty} f(t_k) \phi_k(t) ,
\end{equation}
where the $\phi_k$, $k \in \Z$, are certain reconstruction functions.

In this paper we restrict ourselves to sampling point sequences $\{t_k\}_{k \in \Z}$ that
are real and a complete interpolating sequence for $\PWs_{\pi}^{2}$.
\begin{definition}\label{6}
  We say that $\{t_k\}_{k\in \Z}$ is a complete interpolating sequence
  for $\PWs_{\pi}^{2}$
  if the interpolation problem $f(t_k) = c_k$, $k\in \Z$, has exactly one
  solution $f \in \PWs_{\pi}^{2}$ for every sequence $\{c_k\}_{k \in
    \Z}\in l^2$.
\end{definition}
We further assume that the sequence of sampling points $\{t_k\}_{k \in \Z}$ is ordered
strictly increasingly, and, without loss of generality, we assume that $t_0=0$. Then, it
follows that the product
\begin{equation}\label{7}
  \phi(z)=z \lim_{N \rightarrow \infty} \prod_{\substack{\lvert k \rvert \leq N\\k
    \neq 0}}
    \left( 1-\frac{z}{t_k} \right)
\end{equation}
converges uniformly on $\lvert z \rvert \leq R$ for all $R<\infty$, and $\phi$ is an
entire function of exponential type $\pi$
\cite{levin96_book}.
It can be seen from \eqref{7} that $\phi$, which is often called
generating function, has the zeros $\{t_k\}_{k \in \Z}$.
Moreover, it follows that
\begin{equation}\label{8}
  \phi_k(t)=\frac{\phi(t)}{\phi'(t_k)(t-t_k)} 
\end{equation}
is the unique function in $\PWs_{\pi}^{2}$ that solves the
interpolation problem $\phi_k(t_l)=\delta_{kl}$, where $\delta_{kl}=1$
if $k=l$, and $\delta_{kl}=0$ otherwise.

\begin{definition}\label{9}
  A system of vectors $\{\phi_k\}_{k \in \Z}$ in a separable Hilbert
  space $\mathcal{H}$ is called Riesz basis if $\{\phi_k\}_{k \in \Z}$ is
  complete in $\mathcal{H}$, and there exist positive constants $A$
  and $B$ such that for all $M,N \in \N$ and arbitrary scalars $c_k$
  we have
  \begin{equation}\label{10}
  A \sum_{k=-M}^{N} |c_k|^2
  \leq
  \left\| \sum_{k=-M}^{N} c_k \, \phi_k \right\|^2
  \leq
  B \sum_{k=-M}^{N} |c_k|^2 .
\end{equation}
\end{definition}

A well-known fact is the following theorem \cite[p. 143]{young01_book}.
\begin{theorem}[Pavlov]\label{11}
  The system $\{\e^{\iu \omega t_k} \}_{k \in \Z}$ is a Riesz basis for $L^2[-\pi,\pi]$ if
  and only if $\{t_k\}_{k \in \Z}$ is a complete interpolating sequence for
  $\PWs_{\pi}^{2}$.
\end{theorem}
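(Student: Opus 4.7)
The plan is to exploit the Fourier isometry between $L^2[-\pi,\pi]$ and $\PWs_\pi^2$ in order to translate the interpolation problem into a moment problem for the exponential system, and then to close the argument via a standard open-mapping step. Equip $L^2[-\pi,\pi]$ with the inner product carrying the factor $1/(2\pi)$, so that $U\colon L^2[-\pi,\pi]\to\PWs_\pi^2$, $(Ug)(t)=(1/(2\pi))\int_{-\pi}^{\pi}g(\omega)\e^{\iu\omega t}\di{\omega}$, is an isometric isomorphism. The definition of $\PWs_\pi^2$ gives $f(t_k)=\langle\hat f,\e^{-\iu\omega t_k}\rangle$, so the interpolation problem $f(t_k)=c_k$ in $\PWs_\pi^2$ is equivalent to the moment problem $\langle g,\e^{-\iu\omega t_k}\rangle=c_k$ in $L^2[-\pi,\pi]$. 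Since complex conjugation is an $L^2$-isometry, $\{\e^{\iu\omega t_k}\}$ is a Riesz basis iff $\{\e^{-\iu\omega t_k}\}$ is, so I will work with the latter.

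For the direction $(\Rightarrow)$, if $\{\e^{-\iu\omega t_k}\}$ is a Riesz basis, the synthesis map $V\colon\ell^2\to L^2[-\pi,\pi]$, $\{c_k\}\mapsto\sum_k c_k\e^{-\iu\omega t_k}$, is a bounded bijection (the finite-truncation inequality of Definition~\ref{9}, together with completeness, extends $V$ uniquely to all of $\ell^2$ and makes it invertible). Its Hilbert adjoint $V^\ast\colon L^2[-\pi,\pi]\to\ell^2$, $g\mapsto\{\langle g,\e^{-\iu\omega t_k}\rangle\}$, is then also a bounded bijection, giving unique $\ell^2$-solvability of the moment problem and hence of the interpolation problem.

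For the direction $(\Leftarrow)$, unique solvability makes the interpolation map $I\colon\ell^2\to\PWs_\pi^2$, $\{c_k\}\mapsto f$ with $f(t_k)=c_k$, a well-defined bijection. I verify boundedness by the closed-graph theorem: if $c^{(n)}\to c$ in $\ell^2$ and $f_n=I(c^{(n)})\to f$ in $\PWs_\pi^2$, then $\hat f_n\to\hat f$ in $L^2[-\pi,\pi]$, and the Cauchy-Schwarz inequality applied to the inversion formula yields uniform convergence $f_n\to f$ on $\R$; thus $f(t_k)=\lim_n f_n(t_k)=\lim_n c_k^{(n)}=c_k$, so $I(c)=f$. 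Open mapping then makes $I$ a topological isomorphism. Transferring through $U$, the analysis map $W\colon g\mapsto\{\langle g,\e^{-\iu\omega t_k}\rangle\}$ is a topological isomorphism $L^2[-\pi,\pi]\to\ell^2$, and its Hilbert adjoint is the synthesis map $\{c_k\}\mapsto\sum_k c_k\e^{-\iu\omega t_k}$, which is therefore a topological isomorphism $\ell^2\to L^2[-\pi,\pi]$. Restricting its defining norm equivalence to finitely supported sequences and using surjectivity for completeness recovers the two inequalities and the spanning property required by Definition~\ref{9}.

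The delicate direction is $(\Leftarrow)$: Definition~\ref{6} supplies only unique solvability, whereas a Riesz basis requires quantitative norm bounds. The bridge is the closed-graph step above, which is soft functional analysis but crucially uses the Paley--Wiener fact that $L^2$-convergence on the frequency side implies uniform convergence on $\R$.
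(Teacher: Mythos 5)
The paper offers no proof of Theorem~\ref{11} to compare against: it is quoted as a known fact with a citation to Young's book. Judged on its own, your reduction of interpolation to the moment problem for $\{\e^{-\iu \omega t_k}\}_{k\in\Z}$ via the Fourier isometry is correct, and your $(\Rightarrow)$ direction (synthesis operator bounded below with dense range, hence invertible, hence its adjoint --- the moment map --- invertible) is sound.

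The $(\Leftarrow)$ direction, however, has a genuine gap at the very first step: you declare $I\colon l^2\to\PWs_{\pi}^{2}$ a \emph{bijection}, but Definition~\ref{6} only makes $I$ well defined and injective. Surjectivity of $I$ is exactly the claim that every $f\in\PWs_{\pi}^{2}$ has square-summable samples $\{f(t_k)\}_{k\in\Z}\in l^2$, and this does not follow from unique solvability by soft functional analysis. The abstract skeleton of your argument is false: in $H=l^2$ with $g_k=k\,e_k$, the moment problem $\langle f,g_k\rangle=c_k$ has the unique solution $f_k=c_k/k$ for every $\{c_k\}\in l^2$, the solution operator is bounded and injective, yet it is not surjective, the analysis map $f\mapsto\{\langle f,g_k\rangle\}$ does not land in $l^2$, and $\{g_k\}$ is not a Riesz basis. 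Consequently every subsequent step --- applying the open mapping theorem to $I$, treating $W\colon g\mapsto\{\langle g,\e^{-\iu \omega t_k}\rangle\}$ as an operator into $l^2$, passing to its adjoint --- rests on the unproved surjectivity; the closed-graph step you single out as the delicate point is in fact the easy part. To close the gap you must use something specific to exponentials: first show that a complete interpolating sequence is separated, $\inf_{k}(t_{k+1}-t_k)>0$ (for instance from the boundedness of $I$ together with the embedding of $\PWs_{\pi}^{2}$ into $\Bs_{\pi}^{\infty}$ and Bernstein's inequality applied to $\phi_k=I(e_k)$, which satisfies $\phi_k(t_k)=1$ and $\phi_k(t_{k+1})=0$; the paper cites Pavlov for this separation), and then invoke the Plancherel--P\'olya inequality $\sum_{k}|f(t_k)|^2\le C\,\lVert f\rVert_{\PWs_{\pi}^{2}}^2$ to conclude that the evaluation map is bounded into $l^2$ and inverse to $I$. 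Alternatively one can verify the hypotheses of Bari's theorem (completeness of both biorthogonal families plus the two Bessel bounds). With that ingredient supplied, your transfer to the adjoint synthesis operator does complete the proof.
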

It follows immediately from Theorem~\ref{11} that $\{\phi_k\}_{k \in \Z}$, as
defined in \eqref{8}, is a Riesz basis for $\PWs_{\pi}^{2}$ if $\{t_k\}_{k \in
  \Z}$ is a complete interpolating sequence for $\PWs_{\pi}^{2}$.

For further results and background information on non-equidistant sampling we would like
to refer the reader to \cite{higgins96_book,marvasti01_book}.

\subsection{Basics of Sampling-Based System Approximation}\label{12}

In many signal processing applications the goal is to process a signal $f$. In this paper
we consider signals from the space $\PWs_{\pi}^{1}$. A common method to do such a
processing is to use LTI systems. Given a signal $f \in \PWs_{\pi}^{1}$ and a stable LTI
system $T \colon \PWs_{\pi}^{1} \to \PWs_{\pi}^{1}$ we can use
\eqref{2} to calculate the desired system output $Tf$.
Equation \eqref{2} can be seen as an analog
implementation of the system $T$. As described in Section~\ref{1},
\eqref{2} is well defined for all $f \in \PWs_{\pi}^{1}$
and all stable LTI systems $T \colon \PWs_{\pi}^{1} \to \PWs_{\pi}^{1}$, and we have no
convergence problems.

However, often only the samples $\{f(t_k)\}_{k \in \Z}$ of a signal are available, like it
is the case in digital signal processing, and not the whole signal. In this situation we
seek an implementation of the stable LTI system $T$ which uses only the samples
$\{f(t_k)\}_{k \in \Z}$ of the signal $f$ \cite{stens83}. We call such an implementation
an implementation in the digital domain. For example, the sampling series
\begin{equation}\label{13}
  \sum_{k=-\infty}^{\infty} f(t_k) (T \phi_k)(t) 
\end{equation}
is a digital implementation of the system $T$.
However, in contrast to \eqref{2}, the convergence of
\eqref{13} is not guaranteed, as we will see in
Section~\ref{19}.

In Figure~\ref{14} the different approaches that are taken for an
analog and a digital system implementation are visualized. The general motive for the
development of the ``digital world'' is the idea that every stable analog system can be
implemented digitally, i.e., that the diagram in Figure~\ref{14} is
commutative.

\begin{figure}
  \centering
  \includegraphics{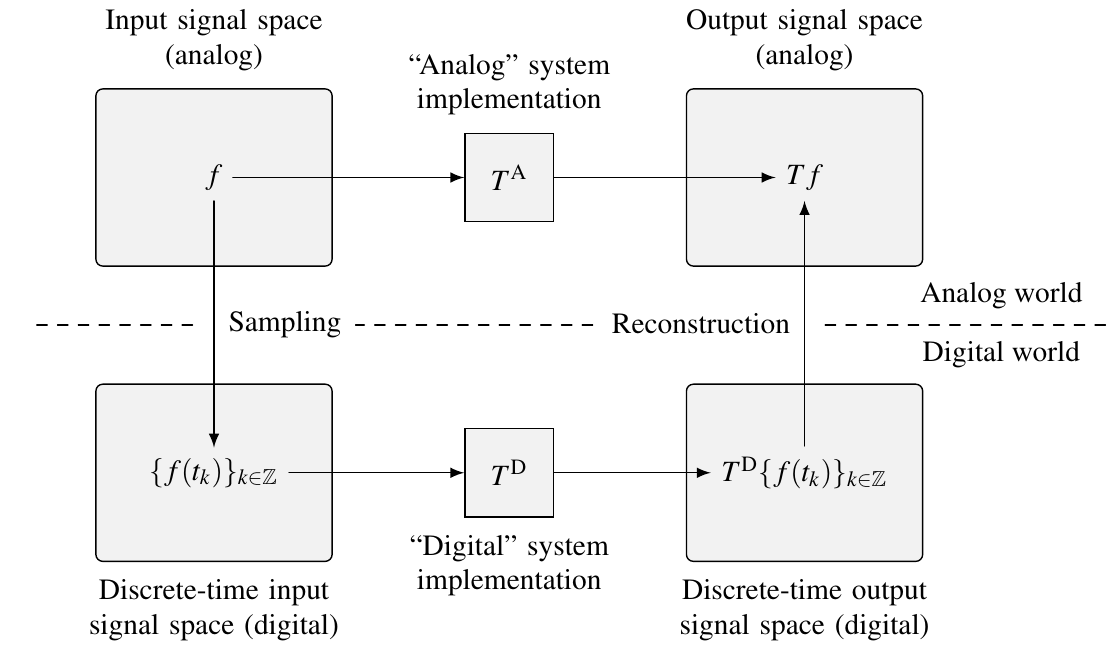}
  \caption{Analog versus digital system implementation of a stable LTI system $T$.}
  \label{14}
\end{figure}

\begin{remark}\label{15}
  In this paper the systems are always linear and well defined. However, there exist practically
  important systems that do not exist as a linear system \cite{boche10f}.
  For a discussion about non-linear systems, see \cite{ferreira01}.
\end{remark}

\subsection{Two Conjectures}

In \cite{boche11a} we posed two conjectures, which we will prove in this paper. The first
conjecture is about the divergence of the system approximation process for complete
interpolating sequences in the case of classical pointwise sampling.
\begin{conjecture}\label{16}
  Let $\{t_k\}_{k\in \Z} \subset \R$ be an ordered complete interpolating sequence for
  $\PWs_{\pi}^{2}$, $\phi_k$ as defined in \eqref{8}, and $0<\sigma<\pi$. Then,
  for all $t \in \R$ there exists a stable LTI system $T_* \colon \PWs_{\pi}^{1} \to
  \PWs_{\pi}^{1}$ and a signal $f_* \in \PWs_{\sigma}^{1}$ such that
  \begin{equation*}
    \limsup_{N \rightarrow \infty} \left| (T_*f_*)(t) - \sum_{k=-N}^{N}
      f_*(t_k) (T_*\phi_k)(t) \right|
    =
    \infty .
  \end{equation*}
\end{conjecture}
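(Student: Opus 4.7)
The plan is to use Banach--Steinhaus twice. First, for fixed $t\in\R$ and stable LTI system $T$ with symbol $\Ft{h}_T\in L^\infty[-\pi,\pi]$, I would rewrite the error $E_N^T(f)\defequal (Tf)(t)-\sum_{k=-N}^{N}f(t_k)(T\phi_k)(t)$ as a continuous linear functional on $\PWs_\sigma^1$. Inserting \eqref{2} together with $f(t_k)=\frac{1}{2\pi}\int_{-\sigma}^{\sigma}\Ft{f}(\omega)\e^{\iu\omega t_k}\di{\omega}$, which is legitimate because $\Ft{f}$ is supported in $[-\sigma,\sigma]$, gives
\begin{equation*}
  E_N^T(f)=\frac{1}{2\pi}\int_{-\sigma}^{\sigma}\Ft{f}(\omega)\,g_N^T(\omega)\di{\omega},
\end{equation*}
with $g_N^T(\omega)\defequal\Ft{h}_T(\omega)\e^{\iu\omega t}-\sum_{k=-N}^{N}(T\phi_k)(t)\,\e^{\iu\omega t_k}$. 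By the duality $(\PWs_\sigma^1)^\ast=L^\infty[-\sigma,\sigma]$, the operator norm of $E_N^T$ equals $\lVert g_N^T\rVert_{L^\infty[-\sigma,\sigma]}$.

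Next I would interpret $g_N^T$ via Theorem~\ref{11}: the system $\{\e^{\iu\omega t_k}\}_{k\in\Z}$ is a Riesz basis of $L^2[-\pi,\pi]$, and the interpolation identity $\phi_k(t_l)=\delta_{kl}$ shows that $\{\overline{\Ft{\phi}_k}\}_{k\in\Z}$ is its biorthogonal dual. Since $\Ft{h}_T(\omega)\e^{\iu\omega t}\in L^2[-\pi,\pi]$, its expansion in this basis has coefficients $(T\phi_k)(t)$ by \eqref{2}, so $g_N^T$ is precisely the tail beyond index $N$ of the $L^2$-convergent series $\Ft{h}_T(\omega)\e^{\iu\omega t}=\sum_{k\in\Z}(T\phi_k)(t)\,\e^{\iu\omega t_k}$. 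The problem therefore reduces to producing $\Ft{h}_{T_\ast}\in L^\infty[-\pi,\pi]$ for which the partial sums of this non-harmonic Fourier expansion are unbounded in $L^\infty[-\sigma,\sigma]$.

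The main obstacle is to show that the associated partial-sum operators $P_N\colon L^\infty[-\pi,\pi]\to L^\infty[-\sigma,\sigma]$ have unbounded norm as $N\to\infty$. Their integral kernels are $K_N(\omega_0,\omega')=\sum_{\lvert k\rvert\leq N}\e^{\iu\omega_0 t_k}\,\Ft{\phi}_k(\omega')$, and testing with unimodular functions gives the lower bound $\lVert P_N\rVert\geq\frac{1}{2\pi}\sup_{\omega_0\in[-\sigma,\sigma]}\lVert K_N(\omega_0,\cdot)\rVert_{L^1[-\pi,\pi]}$. In the uniform case $t_k=k$ this is the classical Dirichlet-kernel estimate of order $\log N$; for a general complete interpolating sequence the corresponding logarithmic lower bound has to be extracted from properties of the generating function $\phi$ in \eqref{7} and the separation of its zeros, and this kernel estimate is where the real work of the proof lies. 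Once $\sup_N\lVert P_N\rVert=\infty$ is established, Banach--Steinhaus furnishes $F_\ast\in L^\infty[-\pi,\pi]$ with $\sup_N\lVert P_NF_\ast\rVert_{L^\infty[-\sigma,\sigma]}=\infty$, and setting $\Ft{h}_{T_\ast}(\omega)\defequal F_\ast(\omega)\e^{-\iu\omega t}$ produces a stable LTI system $T_\ast$ for which $\lVert g_N^{T_\ast}\rVert_{L^\infty[-\sigma,\sigma]}$ is unbounded in $N$.

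With this $T_\ast$ fixed, the family $\{E_N^{T_\ast}\}$ of bounded linear functionals on $\PWs_\sigma^1$ has unbounded operator norms, so a second application of Banach--Steinhaus, in its Baire-category form, produces $f_\ast\in\PWs_\sigma^1$ with $\limsup_{N\to\infty}\lvert E_N^{T_\ast}(f_\ast)\rvert=\infty$, which is the required conclusion.
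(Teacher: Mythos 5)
Your functional-analytic reduction is sound and in fact matches the paper's strategy for Theorem~\ref{20}: by duality the norm of the error functional on $\PWs_{\sigma}^{1}$ is $\lVert g_N^T\rVert_{L^\infty[-\sigma,\sigma]}$, the coefficients of the nonharmonic expansion of $\Ft{h}_T(\omega)\e^{\iu\omega t}$ in the Riesz basis $\{\e^{\iu\omega t_k}\}_{k\in\Z}$ are indeed $(T\phi_k)(t)$, and two applications of Banach--Steinhaus (one over transfer functions to produce $T_*$, one over $\PWs_{\sigma}^{1}$ to produce $f_*$) finish the argument once the kernel norms $\lVert K_N(\omega_0,\spacedot)\rVert_{L^1[-\pi,\pi]}$ are known to be unbounded in $N$. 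Your ``testing with unimodular functions'' step is exactly the content of Lemma~\ref{37}.

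The genuine gap is that you leave unproved precisely the step that carries all of the difficulty: the unboundedness of $\sup_{\omega_0}\lVert K_N(\omega_0,\spacedot)\rVert_{L^1[-\pi,\pi]}$ for a \emph{general} ordered complete interpolating sequence. You propose to ``extract'' a logarithmic lower bound ``from properties of the generating function $\phi$ in \eqref{7} and the separation of its zeros,'' but no such elementary extraction is known, and the paper does not obtain the estimate that way. Lemma~\ref{24} instead derives the bound $C_1\log N$ from Szarek's lemmas on biorthogonal systems (Lemmas~\ref{27} and~\ref{30}), whose hypotheses are verified using the Riesz-basis upper bound \eqref{31}, the Hausdorff--Young inequality, and a Plancherel--P\'olya-type inequality comparing $\lVert g_{k,n}\rVert_{L^q(\R)}$ with the $l^q$ norm of its samples at the $t_l$. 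Even with this machinery the estimate is only obtained with a $\max_{1\leq M\leq N}$ in front (see Remark~\ref{26} and Conjecture~\ref{80}); that weaker form still gives $\sup_N\lVert P_N\rVert=\infty$ and hence suffices for Banach--Steinhaus, but it shows that the clean Dirichlet-kernel analogue you implicitly hope for is in fact an open problem. As written, your argument establishes the conjecture only for sequences such as $t_k=k$, where the kernel estimate is classical; for the general case the core of the proof is missing.
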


For the special case of equidistant sampling, the system approximation process
\eqref{13} reduces to
\begin{equation}\label{17}
  \frac{1}{a} \sum_{k=-\infty}^{\infty} f \left( \frac{k}{a} \right) h_T \left(
    t-\frac{k}{a} \right) ,
\end{equation}
where $a \geq 1$ denotes the oversampling factor and $h_T$ is the impulse response of the
system $T$. It has already been shown that the Hilbert transform is a universal system for
which there exists, for every amount of oversampling, a signal such that the peak value of
\eqref{17} diverges \cite{boche10g}. In
Conjecture~\ref{16} now, the statement is that this divergence even occurs for
non-equidistant sampling, which introduces an additional degree of freedom, and even
pointwise. However, in this case, the Hilbert transform is no longer the universal
divergence creating system.

Conjecture~\ref{16} will be proved in Section~\ref{19}.

The second conjecture is about more general measurement procedures and states that with
suitable measurement procedures and oversampling we can obtain a convergent approximation
process.
\begin{conjecture}\label{18}
  Let $\{t_k\}_{k\in \Z} \subset \R$ be an ordered complete interpolating sequence for
  $\PWs_{\pi}^{2}$, $\phi_k$ as defined in \eqref{8}, and $0<\sigma<\pi$. There
  exists a sequence of continuous linear functionals $\{c_k\}_{k \in \Z}$ on
  $\PWs_{\pi}^{1}$ such that for all stable LTI systems $T \colon \PWs_{\pi}^{1} \to
  \PWs_{\pi}^{1}$ and all $f \in \PWs_{\sigma}^{1}$ we have
  \begin{equation*}
    \lim_{N \rightarrow \infty} \sup_{t \in \R} \left| (Tf)(t)-\sum_{k=-N}^{N} c_k(f) \, (T
      \phi_k)(t) \right|
    =
    0.
  \end{equation*}  
\end{conjecture}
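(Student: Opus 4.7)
The plan is to construct the measurement functionals from a smooth spectral cutoff that exploits the gap $\sigma < \pi$. Fix $\Ft{\psi} \in C^\infty(\R)$ with $\Ft{\psi}(\omega) \equiv 1$ on $[-\sigma, \sigma]$ and $\mathrm{supp}\,\Ft{\psi} \subset [-\pi, \pi]$, and set $\psi \defequal \Fto^{-1}\Ft{\psi}$, so $\psi \in \PWs_{\pi}^{2}$ and decays rapidly on $\R$. Define
\[
 c_k(f) \defequal \frac{1}{2\pi}\int_{-\pi}^{\pi} \Ft{f}(\omega)\,\Ft{\psi}(\omega)\,\e^{\iu\omega t_k}\,\di{\omega} = (\psi \ast f)(t_k).
\]
Each $c_k$ is continuous on $\PWs_{\pi}^{1}$ with $|c_k(f)| \le \|\Ft{\psi}\|_{L^\infty}\,\|f\|_{\PWs_{\pi}^{1}}$ uniformly in $k$, and for $f \in \PWs_{\sigma}^{1}$ the cutoff acts as the identity on $\mathrm{supp}\,\Ft{f}$, giving $c_k(f) = f(t_k)$.

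Because every stable LTI system $T$ satisfies $\|Tg\|_\infty \le \|Tg\|_{\PWs_{\pi}^{1}} \le \|T\|\,\|g\|_{\PWs_{\pi}^{1}}$ for $g \in \PWs_{\pi}^{1}$, it is enough to prove the $\PWs_{\pi}^{1}$-norm convergence
\[
 \bigl\| f - S_N f \bigr\|_{\PWs_{\pi}^{1}} \longrightarrow 0, \qquad S_N f \defequal \sum_{k=-N}^{N} c_k(f)\,\phi_k,
\]
for every $f \in \PWs_{\sigma}^{1}$, since then $\sup_{t \in \R} |(Tf)(t) - (TS_N f)(t)| \le \|T\|\,\|f - S_N f\|_{\PWs_{\pi}^{1}} \to 0$. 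I would proceed by Banach--Steinhaus: (i) convergence on the dense subspace $\PWs_{\sigma}^{2} \subset \PWs_{\sigma}^{1}$ is immediate, because by Theorem~\ref{11} (Pavlov) $\{\phi_k\}$ is a Riesz basis of $\PWs_{\pi}^{2}$ with coefficient functionals $f \mapsto f(t_k)$, and the inclusion $\PWs_{\pi}^{2} \hookrightarrow \PWs_{\pi}^{1}$ is continuous; (ii) it remains to establish the uniform operator-norm bound $\sup_{N \in \N} \|S_N\|_{\PWs_{\sigma}^{1} \to \PWs_{\pi}^{1}} < \infty$.

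The main obstacle is the uniform bound (ii). On the Fourier side,
\[
 \widehat{S_N f}(\omega) = \frac{1}{2\pi}\int_{-\sigma}^{\sigma} \Ft{f}(\xi)\,\Ft{\psi}(\xi)\,K_N(\omega, \xi)\,\di{\xi}, \qquad K_N(\omega, \xi) \defequal \sum_{k=-N}^{N} \e^{\iu\xi t_k}\,\Ft{\phi_k}(\omega),
\]
so (ii) reduces to showing $\sup_{\xi \in [-\sigma, \sigma]} \bigl\| \Ft{\psi}(\xi)\,K_N(\cdot, \xi) \bigr\|_{L^{1}[-\pi, \pi]}$ is bounded uniformly in $N$. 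The key analytic observation is that $\Ft{\psi}$ vanishes smoothly on a neighborhood of $\pm\pi$, which is precisely the boundary behavior responsible for the Dirichlet-type divergence underlying Conjecture~\ref{16}. A single summation by parts in $k$ against $\Ft{\psi}$, combined with the $L^{2}$-Riesz-basis estimate for $\{\e^{\iu\xi t_k}\}_{k\in\Z}$ from Theorem~\ref{11} and the companion estimate for the dual basis $\{\Ft{\phi_k}\}_{k\in\Z}$, should convert the Dirichlet-type kernel into an absolutely convergent series whose $L^{1}[-\pi, \pi]$-norm is controlled by $\|\Ft{\psi}\|_{C^{1}}$, the Riesz-basis constants, and the gap $\pi - \sigma$, uniformly in $N$, thereby yielding (ii).
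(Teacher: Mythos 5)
Your construction cannot work, and the paper itself tells you why: since $\Ft{\psi}\equiv 1$ on $[-\sigma,\sigma]$, for every $f\in\PWs_{\sigma}^{1}$ you get $c_k(f)=(\psi*f)(t_k)=f(t_k)$ exactly, so your approximation process $\sum_{k=-N}^{N}c_k(f)(T\phi_k)(t)$ is literally the pointwise-sampling process of Theorem~\ref{20}, which the paper proves diverges for some $f_*\in\PWs_{\sigma}^{1}$ and some stable $T_*$ (that is Conjecture~\ref{16}). Concretely, the uniform bound (ii) you need is false: the prefactor $\Ft{\psi}(\xi)$ in your kernel estimate is only ever evaluated at $\xi\in[-\sigma,\sigma]$, where it equals $1$, so the quantity to be bounded is $\sup_{\xi\in[-\sigma,\sigma]}\lVert K_N(\cdot,\xi)\rVert_{L^{1}[-\pi,\pi]}$, and Lemma~\ref{24} shows that for every fixed $\xi\in[-\pi,\pi]$ this grows at least like $\log N$ along a subsequence. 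The proposed ``summation by parts in $k$ against $\Ft{\psi}$'' has no content because $\Ft{\psi}(\xi)$ does not depend on $k$; the smooth vanishing of $\Ft{\psi}$ near $\pm\pi$ never enters, since the reconstruction side of the kernel (the $\omega$-variable, where the $\Ft{\phi}_k$ live) is untouched by a prefilter applied to the signal before sampling.

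The obstruction is structural, not technical. If the reconstruction functions are kept as the $\phi_k$ and $\sum_{k=-N}^{N}c_k(f)\phi_k\to f$ uniformly (take $T=\mathrm{Id}$ in the conjecture), then evaluating at $t_l$ and using $\phi_k(t_l)=\delta_{kl}$ forces $c_l(f)=f(t_l)$; so no choice of continuous measurement functionals paired with the $\phi_k$ can escape Theorem~\ref{20}. The paper's actual proof (Theorem~\ref{47}) therefore replaces the whole biorthogonal pair: it takes an Olevskii system $\{\Ft{\theta}_n\}_{n\in\N}$ --- a uniformly bounded complete orthonormal system in $L^2[-\pi,\pi]$, closed in the continuous functions, whose Dirichlet-type kernels $\sum_{n=1}^{N}\Ft{\theta}_n(\omega)\Ft{\theta}_n(\omega_1)$ have $L^1$-norm in $\omega_1$ bounded uniformly in $N$ for all $\omega\in[-\sigma,\sigma]$ (Theorem~\ref{50}) --- and uses the $\theta_n$ both for the measurement functionals and as reconstruction functions. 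That uniform kernel bound on the oversampling region is precisely the property your kernel $K_N$ lacks, and it is where the hypothesis $\sigma<\pi$ genuinely does its work; your reduction of the problem to an $L^1$ kernel bound is the same as the paper's, but the exponential/$\phi_k$ system can never satisfy it.
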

Conjecture~\ref{18} will be proved in Section~\ref{40}, where we
also introduce the general measurement procedures more precisely.

\subsection{Approximation for Sampling-Based Measurements}\label{19}

In this section we analyze the system approximation process which is given by the digital
implementation \eqref{13}. The next theorem proves
Conjecture~\ref{16}.

\begin{theorem}\label{20}
  Let $\{t_k\}_{k \in \Z} \subset \R$ be an ordered complete interpolating sequence for
  $\PWs_{\pi}^{2}$, $\phi_k$ as defined in \eqref{8}, and $t \in \R$. Then
  there exists a stable LTI system $T_* \colon \PWs_{\pi}^{1} \to \PWs_{\pi}^{1}$ such
  that for every $0<\sigma<\pi$ there exists a signal $f_* \in \PWs_{\sigma}^{1}$ such
  that
  \begin{equation}\label{21}
    \limsup_{N \rightarrow \infty} \left| \sum_{k=-N}^{N} f_*(t_k) (T_* \phi_k)(t) \right|
    =
    \infty .
  \end{equation}
\end{theorem}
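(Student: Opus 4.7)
The plan is a double application of the Banach--Steinhaus theorem: first I produce the universal system $T_*$, then, for each $\sigma$, the singular signal $f_* \in \PWs_\sigma^1$. The subtlety is that the single $T_*$ has to serve all $\sigma \in (0,\pi)$ simultaneously, and I handle this by forcing divergence at the single frequency $\omega_0 = 0$, which lies in $[-\sigma,\sigma]$ for every $\sigma > 0$.

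\emph{Setup.} For $f \in \PWs_\sigma^1$ one has $\Ft{f} \in L^1[-\sigma,\sigma]$, and for the stable LTI system $T$ one has the frequency-domain representation \eqref{2} with $\Ft{h}_T \in L^\infty[-\pi,\pi]$. Substituting $f(t_k) = (2\pi)^{-1}\int_{-\sigma}^{\sigma}\Ft{f}(\omega)\e^{\iu\omega t_k}\di{\omega}$ and the corresponding representation of $(T\phi_k)(t)$ into the partial sum, and interchanging the finite sum with the integrals, yields
\[
\sum_{k=-N}^N f(t_k)(T\phi_k)(t)
= \frac{1}{2\pi}\int_{-\sigma}^{\sigma}\Ft{f}(\omega)\,\chi_N^T(\omega)\,\di{\omega},\quad
\chi_N^T(\omega) := \sum_{k=-N}^N (T\phi_k)(t)\,\e^{\iu\omega t_k}.
\]
Via the Fourier duality $(\PWs_\sigma^1)^* \cong L^\infty[-\sigma,\sigma]$, the operator norm of the functional $f\mapsto\sum_{k=-N}^N f(t_k)(T\phi_k)(t)$ on $\PWs_\sigma^1$ equals $\|\chi_N^T\|_{L^\infty[-\sigma,\sigma]}$. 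It therefore suffices to exhibit $T_*$ with $\sup_N \|\chi_N^{T_*}\|_{L^\infty[-\sigma,\sigma]} = \infty$ for every $\sigma \in (0,\pi)$.

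\emph{Constructing $T_*$.} Since $\omega_0 = 0 \in [-\sigma,\sigma]$ for every $\sigma > 0$, it is enough to arrange $\sup_N |\chi_N^{T_*}(0)| = \infty$. View $\Ft{h}_T \mapsto \chi_N^T(0) = \sum_{k=-N}^N (T\phi_k)(t)$ as a continuous linear functional $M_N$ on $L^\infty[-\pi,\pi]$; expanding $(T\phi_k)(t)$ via \eqref{2} and swapping sum and integral gives
\[
M_N(\Ft{h}_T) = \frac{1}{2\pi}\int_{-\pi}^{\pi}\Ft{h}_T(\omega')\,\e^{\iu\omega' t}\,G_N(\omega')\,\di{\omega'},\quad
G_N := \sum_{k=-N}^N \Ft{\phi}_k,
\]
whose operator norm is $\|M_N\|_{(L^\infty[-\pi,\pi])^*} = (2\pi)^{-1}\|G_N\|_{L^1[-\pi,\pi]}$. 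The crucial technical ingredient is the Dirichlet-type divergence $\sup_N \|G_N\|_{L^1[-\pi,\pi]} = \infty$ ($\ast$). Granted $(\ast)$, the uniform boundedness principle on $L^\infty[-\pi,\pi]$ produces $\Ft{h}_{T_*}\in L^\infty[-\pi,\pi]$ with $\sup_N |M_N(\Ft{h}_{T_*})| = \infty$, giving a stable LTI system $T_*$ with $\sup_N |\chi_N^{T_*}(0)| = \infty$, and hence $\sup_N \|\chi_N^{T_*}\|_{L^\infty[-\sigma,\sigma]} = \infty$ for every $\sigma \in (0,\pi)$. A second application of Banach--Steinhaus, now to the functionals $\{L_N^{T_*}\}$ on $\PWs_\sigma^1$ for each fixed $\sigma$, furnishes the required $f_*\in\PWs_\sigma^1$ with $\limsup_N|\sum_{k=-N}^N f_*(t_k)(T_*\phi_k)(t)|=\infty$.

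\emph{Main obstacle.} The nontrivial step is $(\ast)$. For the equidistant case $t_k = k$ one has $\Ft{\phi}_k(\omega) = \e^{-\iu k\omega}\chi_{[-\pi,\pi]}(\omega)$, so $G_N$ coincides with the classical Dirichlet kernel and $(\ast)$ reduces to $\|D_N\|_{L^1[-\pi,\pi]} \sim \log N$. For a general complete interpolating sequence $\Ft{\phi}_k$ is no longer a pure exponential, and $(\ast)$ requires a genuinely new quantitative estimate. I expect to extract it from the explicit representation $\phi_k(z) = \phi(z)/[\phi'(t_k)(z-t_k)]$ in \eqref{8}, the exponential-type-$\pi$ structure of the generating function $\phi$, and the observation that the time-domain partial sum $g_N(z) := \sum_{|k|\le N}\phi_k(z)$ interpolates $1$ at $t_{-N},\dots,t_N$ and $0$ at the remaining sampling points, so that $G_N$ concentrates distributionally near $\omega = 0$ while inheriting Dirichlet-type oscillations away from it; the delicate part is to turn this qualitative picture into a rigorous quantitative $L^1$-blow-up. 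That step is the genuine non-equidistant generalization of the classical Dirichlet estimate and is the true heart of the proof.
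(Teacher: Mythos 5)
Your overall architecture is exactly the paper's: reduce the problem to the divergence of the $L^1$-norms of the kernels $G_N=\sum_{k=-N}^{N}\Ft{\phi}_k$, note that the relevant functional norm on $\PWs_{\sigma}^{1}$ is $\max_{\omega\in[-\sigma,\sigma]}\bigl|\sum_{k=-N}^{N}\e^{\iu\omega t_k}(T\phi_k)(t)\bigr|$, force divergence at the single frequency $\omega=0$ so that $T_*$ is universal in $\sigma$, and finish with two applications of the Banach--Steinhaus theorem (the paper works on $C[-\pi,\pi]$ and uses a Lusin-type argument, Lemma~\ref{37}, to identify the sup over continuous unit-ball transfer functions with the $L^1$-norm; your variant on $L^\infty[-\pi,\pi]$ is an acceptable substitute). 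All of that is correct and matches the paper.

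However, you have not proved the statement you label $(\ast)$, namely $\sup_N\lVert G_N\rVert_{L^1[-\pi,\pi]}=\infty$ for a general ordered complete interpolating sequence, and you correctly identify this as the heart of the matter. This is precisely the content of the paper's Lemma~\ref{24} (specialized to $\omega=0$), and its proof is genuinely nontrivial: it does not proceed from the product representation of the generating function $\phi$ or from any ``concentration near $\omega=0$'' heuristic, as you propose. Instead it applies Szarek's two lemmas on biorthogonal systems to the block differences $G_{k+n}-G_k=\sum_{k<|l|\le k+n}\Ft{\phi}_l$: the Riesz-basis property of $\{\phi_l\}$ gives the upper bound $\lVert G_{k+n}-G_k\rVert_{L^2}^2\le 2Bn$, while the Hausdorff--Young inequality combined with a Plancherel--P\'olya-type inequality (using $\inf_k(t_{k+1}-t_k)>0$, which follows from the complete interpolating property) gives the lower bound $\lVert G_{k+n}-G_k\rVert_{L^{5/4}}^{5/4}\gtrsim n^{1/4}$; Szarek's machinery then converts these two-sided block estimates into $\max_{1\le M\le N}\lVert G_M\rVert_{L^1}\ge C_1\log N$. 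Your sketched route via the interpolation properties of $g_N=\sum_{|k|\le N}\phi_k$ and ``Dirichlet-type oscillations'' is not developed to the point where one could check it, and it is not clear it can be made rigorous for arbitrary complete interpolating sequences. As it stands, the proposal reduces the theorem to its hardest ingredient and leaves that ingredient open.
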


\begin{remark}
  It is interesting to note that the system $T_*$ in
  Theorem~\ref{20} is universal in the sense that it does
  not depend on $\sigma$, i.e., on the amount of oversampling. In other words, we can find
  a stable LTI system $T_*$ such that regardless of the oversampling factor
  $1<\alpha<\infty$ there exists a signal $f_* \in \PWs_{\pi/\alpha}^{1}$ for which the
  system approximation process diverges as in
  \eqref{21}.
\end{remark}

\begin{remark}\label{22}
  Since $\{\phi_k\}_{k \in \Z}$ is a Riesz basis for $\PWs_{\pi}^{2}$, it follows that the
  projections of $\{\phi_k\}_{k \in \Z}$ onto $\PWs_{\sigma}^{2}$ form a frame for
  $\PWs_{\sigma}^{2}$, $0<\sigma<\pi$ \cite[p.~231]{heil11_book}.
  Theorem~\ref{20} shows that the usually nice behavior of
  frames is destroyed in the presence of a system $T$. Even though the projections of
  $\{\phi_k\}_{k \in \Z}$ onto $\PWs_{\sigma}^{2}$ form a frame for $\PWs_{\sigma}^{2}$,
  $0<\sigma<\pi$, we have divergence when we add the system $T$. This behavior was known
  before for pointwise sampling: The reconstruction functions in the Shannon sampling
  series form a Riesz basis for $\PWs_{\pi}^{2}$, and the convergence of the series is
  globally uniform for signals in $\PWs_{\sigma}^{1}$, $0<\sigma<\pi$, i.e., if
  oversampling is applied. However, with a system $T$ we can have even pointwise
  divergence \cite{boche10g}. Theorem~\ref{20} illustrates
  that this is true not only for pointwise sampling but also if more general measurement
  functionals are used.
\end{remark}

\begin{remark}
  The system $T_*$ from Theorem~\ref{20} can, as a stable
  LTI system, of course be implemented, using the analog system implementation
  \eqref{2}. However,
  Theorem~\ref{20} shows that a digital, i.e., sampling
  based, implementation is not possible. This also illustrates the limits of a general
  sampling-based technology. We will see later, in
  Section~\ref{46}, that the system can be
  implemented by using more general measurement functionals and oversampling.
\end{remark}

The result of Theorem~\ref{20} is also true for bandpass
signals. However, in this case the stable LTI system $T_*$ is no longer universal but
depends on the actual frequency support of the signal space.

\begin{theorem}\label{23}
  Let $\{t_k\}_{k \in \Z} \subset \R$ be an ordered complete interpolating sequence for
  $\PWs_{\pi}^{2}$, $\phi_k$ as defined in \eqref{8}, $t \in \R$, and
  $0<\sigma_1<\sigma_2<\pi$. Then there exist a stable LTI system $T_* \colon
  \PWs_{\pi}^{1} \to \PWs_{\pi}^{1}$ and a signal $f_* \in \PWs_{[\sigma_1,\sigma_2]}^{1}$
  such that
  \begin{equation*}
    \limsup_{N \rightarrow \infty} \left| \sum_{k=-N}^{N} f_*(t_k) (T_* \phi_k)(t) \right|
    =
    \infty .
  \end{equation*}
\end{theorem}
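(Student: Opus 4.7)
The plan is to adapt the Banach--Steinhaus argument used for Theorem~\ref{20}, replacing the symmetric band $[-\sigma,\sigma]$ by the asymmetric band $[\sigma_1,\sigma_2]$ and constructing $T_*$ so that its frequency response is tailored to this shifted interval. Fix $t\in\R$ and consider, for every $N$, the linear functional $\Phi_N\colon\PWs_{[\sigma_1,\sigma_2]}^{1}\to\C$ defined by
\begin{equation*}
  \Phi_N(f)=\sum_{k=-N}^{N} f(t_k)(T_*\phi_k)(t).
\end{equation*}
Substituting $f(t_k)=(1/(2\pi))\int_{\sigma_1}^{\sigma_2}\Ft{f}(\omega)\e^{\iu\omega t_k}\di{\omega}$ and interchanging the finite sum with the integral yields
\begin{equation*}
  \Phi_N(f)=\frac{1}{2\pi}\int_{\sigma_1}^{\sigma_2}\Ft{f}(\omega)\,K_N(\omega)\di{\omega},\qquad K_N(\omega)=\sum_{k=-N}^{N}\e^{\iu\omega t_k}(T_*\phi_k)(t).
\end{equation*}
Because $\PWs_{[\sigma_1,\sigma_2]}^{1}$ is isometric to $L^1[\sigma_1,\sigma_2]$ with dual $L^\infty[\sigma_1,\sigma_2]$, the operator norm of $\Phi_N$ equals $\esssup_{\omega\in[\sigma_1,\sigma_2]}\lvert K_N(\omega)\rvert$ up to a universal constant, so the theorem reduces to exhibiting a symbol $\Ft{h}_{T_*}\in L^\infty[-\pi,\pi]$ for which this essential supremum is unbounded as $N\to\infty$.

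Using the frequency-domain representation~\eqref{2}, each $(T_*\phi_k)(t)$ is an inner product of $\Ft{\phi_k}$ against $\Ft{h}_{T_*}(\xi)\e^{\iu\xi t}$, and $K_N(\omega)$ becomes a bilinear expression of $\Ft{h}_{T_*}$ against the partial Riesz-basis kernel $D_N(\omega,\xi)=\sum_{k=-N}^{N}\e^{\iu\omega t_k}\Ft{\phi_k}(\xi)$; this kernel is well controlled by Theorem~\ref{11} and the biorthogonality $\phi_k(t_l)=\delta_{kl}$. The construction used for Theorem~\ref{20} provides, for any fixed subinterval $I\subset(-\pi,\pi)$ and any $t\in\R$, a choice of $\Ft{h}_{T_*}\in L^\infty[-\pi,\pi]$ making $\esssup_{\omega\in I}\lvert K_N(\omega)\rvert$ unbounded in $N$; applying this with $I=[\sigma_1,\sigma_2]$ yields a symbol $\Ft{h}_{T_*}$ depending on $\sigma_1$ and $\sigma_2$, i.e., a non-universal system, as foreshadowed in the statement. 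With $T_*$ fixed, the Banach--Steinhaus theorem applied to $\{\Phi_N\}$ on the Banach space $\PWs_{[\sigma_1,\sigma_2]}^{1}$ immediately produces the required $f_*$ with $\limsup_N\lvert\Phi_N(f_*)\rvert=\infty$, which is exactly \eqref{21} with $I=[\sigma_1,\sigma_2]$.

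The main obstacle is the explicit construction of $\Ft{h}_{T_*}$ on the asymmetric interval. In the lowpass case, symmetry of $[-\sigma,\sigma]$ can be exploited—for instance via Hermitian-symmetric symbols or real-valued testing signals—to control $K_N$ by its modulus on a symmetric band. No such symmetry is available here, which is precisely why the system $T_*$ can no longer be universal in the sense of the remark following Theorem~\ref{20}. I expect the argument to proceed by first producing a sequence of unit-norm witnesses $(f_n,\Ft{h}_n)\in\PWs_{[\sigma_1,\sigma_2]}^{1}\times L^\infty[-\pi,\pi]$ that make the corresponding partial sum at $t$ arbitrarily large, and then using a gliding-hump/condensation step to assemble a single $T_*$ and $f_*$ that inherit the divergence. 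Ensuring that the gliding humps stay within the narrower signal space $\PWs_{[\sigma_1,\sigma_2]}^{1}$—rather than only within the wider $\PWs_\pi^{1}$—and that the concentrations of mass in the $\Ft{h}_n$ do not destructively overlap when superposed, is the technical point that distinguishes this argument from the symmetric one.
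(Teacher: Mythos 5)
Your reduction---rewrite $\sum_{k=-N}^{N} f(t_k)(T_*\phi_k)(t)$ as an $L^1$--$L^\infty$ pairing against $K_N(\omega)=\sum_{k=-N}^{N}\e^{\iu\omega t_k}(T_*\phi_k)(t)$, show $\max_{\omega\in[\sigma_1,\sigma_2]}\lvert K_N(\omega)\rvert$ is unbounded for a suitable $\Ft{h}_{T_*}$, and then apply Banach--Steinhaus on $\PWs_{[\sigma_1,\sigma_2]}^{1}$---is exactly the paper's route, which declares the proof identical to that of Theorem~\ref{20} except that one fixes $\omega\in[\sigma_1,\sigma_2]$ instead of $\omega=0$. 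The ``main obstacle'' you flag is illusory: Lemma~\ref{24} and Lemma~\ref{37} hold pointwise for \emph{every} $\omega\in[-\pi,\pi]$, so the Banach--Steinhaus construction of $T_\omega$ at a single chosen frequency $\omega\in[\sigma_1,\sigma_2]$ already forces $\limsup_N\lvert K_N(\omega)\rvert=\infty$ and hence the divergence of the maximum over the band; no gliding-hump assembly and no symmetry of the spectral interval are needed, and the only price paid is that $T_*=T_\omega$ now depends on $[\sigma_1,\sigma_2]$, which is precisely the stated loss of universality.
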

For the proof of Theorems \ref{20} and
\ref{23}, we need two lemmas,
Lemma~\ref{24} and Lemma~\ref{37}. The proof of
Lemma~\ref{24} heavily relies on a result of Szarek, which was published in
\cite{szarek80}.

\begin{lemma}\label{24}
  Let $\{t_k\}_{k \in \Z} \subset \R$ be an ordered complete interpolating sequence for
  $\PWs_{\pi}^{2}$ and $\phi_k$ as defined in \eqref{8}. Then there exists a
  positive constant $C_{1}$ such that for all $\omega \in [-\pi,\pi]$ and all $N \in \N$
  we have
  \begin{equation}\label{25}
    \max_{1 \leq M \leq N}
    \frac{1}{2\pi} \int_{-\pi}^{\pi} \left| \sum_{k=-M}^{M} \e^{\iu \omega t_k}
      \Ft{\phi}_k(\omega_1) \right| \di{\omega_1}
    \geq
    C_{1} \log(N) .
  \end{equation}
\end{lemma}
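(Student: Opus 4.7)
The plan is to recognise the integrand in \eqref{25} as the Dirichlet-type kernel of a partial-sum projection associated with the Riesz basis $\{\Ft\phi_k\}_{k\in\Z}$ of $L^2[-\pi,\pi]$, and then to invoke Szarek's lower bound \cite{szarek80} on the associated Lebesgue function.

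First I would set up the functional-analytic framework. The interpolation property $\phi_k(t_l)=\delta_{k,l}$, combined with the representation $\phi_k(t)=(1/(2\pi))\int_{-\pi}^{\pi}\Ft\phi_k(\omega)\e^{\iu\omega t}\di\omega$, yields the biorthogonality relation $(1/(2\pi))\int_{-\pi}^{\pi}\Ft\phi_k(\omega)\e^{\iu\omega t_l}\di\omega=\delta_{k,l}$. Together with Pavlov's theorem (Theorem~\ref{11}), this identifies $\{\Ft\phi_k\}_{k\in\Z}$ as a Riesz basis of $L^2[-\pi,\pi]$ whose dual system is the Riesz basis $\{\e^{-\iu\omega t_k}\}_{k\in\Z}$. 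Consequently the partial-sum projection $S_M g=\sum_{|k|\leq M}c_k(g)\Ft\phi_k$, with coefficients $c_k(g)=(1/(2\pi))\int_{-\pi}^{\pi}g(\omega)\e^{\iu\omega t_k}\di\omega$, is an integral operator with kernel $K_M(\omega,\omega_1)=\sum_{|k|\leq M}\e^{\iu\omega t_k}\Ft\phi_k(\omega_1)$, and the quantity in \eqref{25} is exactly the $L^1_{\omega_1}$-norm of $K_M(\omega,\cdot)$, i.e.\ the Lebesgue function of $\{S_M\}$ at the point~$\omega$.

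Second, I would apply Szarek's theorem \cite{szarek80}, which provides an absolute constant $c>0$ such that for every orthonormal basis of $L^2$ of a bounded interval, every $N\in\N$, and every point $\omega$ of the interval, $\max_{1\leq M\leq N}\|K_M^{\mathrm{on}}(\omega,\cdot)\|_{L^1}\geq c\log N$, where $K_M^{\mathrm{on}}$ denotes the corresponding orthogonal projection kernel. To transfer this bound to the Riesz basis $\{\Ft\phi_k\}$, I would use the bounded invertible operator $R$ on $L^2[-\pi,\pi]$ satisfying $R\psi_k=\Ft\phi_k$ for some orthonormal basis $\{\psi_k\}_{k\in\Z}$; by Definition~\ref{9} the norms $\|R\|$ and $\|R^{-1}\|$ are bounded in terms of the Riesz-basis constants $A$ and $B$, independently of $\omega$ and $M$. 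A careful comparison of the two kernels, using that $\tilde S_M=R^{-1}S_M R$ is the orthogonal projection onto the span of $\{\psi_k\}_{|k|\leq M}$, then yields \eqref{25} with a constant $C_1$ depending only on $c$, $A$, and $B$.

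The main obstacle I expect is this transfer step: Szarek's bound is naturally pointwise in the evaluation variable $\omega$, whereas $R$ acts only on the $\omega_1$-variable, so the comparison of $L^1_{\omega_1}$-norms must be carried out uniformly in $\omega$. The saving feature is that all $\omega$-dependence in $K_M(\omega,\omega_1)$ enters only through the coefficient sequence $\{\e^{\iu\omega t_k}\}_k$, whose $\ell^\infty$-norm is identically~$1$; this rigidity, together with the $\omega$-independent norms of $R$ and $R^{-1}$, preserves the pointwise, uniform-in-$\omega$ character of Szarek's lower bound and produces the universal constant $C_1$.
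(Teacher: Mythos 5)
Your reduction of \eqref{25} to the Lebesgue function of the partial-sum projections associated with the biorthogonal pair $\{\e^{-\iu\omega t_k},\Ft{\phi}_k\}$ is the right starting point, but the key step fails: the version of Szarek's theorem you invoke does not exist. What Szarek proves (Proposition~\ref{64} in the paper) is an \emph{averaged} lower bound, $\max_{1\leq M\leq N}\int_S\int_S\bigl|\sum_{n\leq M}\cc{g_n(t)}f_n(s)\bigr|\di{m}(t)\di{m}(s)\geq C'\log N$, i.e.\ a bound on the double integral over \emph{both} kernel variables. A pointwise-in-$\omega$ lower bound of the form $\max_{1\leq M\leq N}\lVert K_M^{\mathrm{on}}(\omega,\cdot)\rVert_{L^1}\geq c\log N$ valid for every orthonormal basis and every point $\omega$ is false: Olevskii's construction (Theorem~\ref{50}, used later in the paper to prove Theorem~\ref{47}) produces a bounded complete orthonormal system on $[0,1]$ whose Lebesgue functions are uniformly bounded in $N$ for all $x\in[\delta,1]$. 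So the statement you want to transfer is already wrong at the orthonormal level, and no amount of care in the Riesz-basis transfer step can repair it.

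The content of Lemma~\ref{24} is precisely that the pointwise bound \emph{does} hold for this particular kernel, and the reason is structural: the analysis functionals are unimodular exponentials evaluated at a separated sequence. The paper's proof verifies, for each fixed $\omega$, the hypotheses of Szarek's two technical lemmas (Lemmas~\ref{27} and~\ref{30}) for the increments $G_{k,n}(\cdot,\omega)=\sum_{k<|l|\leq k+n}\e^{\iu\omega t_l}\Ft{\phi}_l$: the $L^2$ upper bound $\leq 2Bn$ comes from the Riesz-basis property, and---this is the ingredient your proposal has no substitute for---the lower bound $\frac{1}{2\pi}\int_{-\pi}^{\pi}|G_{k,n}|^{5/4}\di{\omega_1}\gtrsim n^{1/4}$ is obtained by combining the Hausdorff--Young inequality with a Plancherel--P\'olya-type inequality (Boas) for the separated points $t_l$, using that $\sum_l|g_{k,n}(t_l,\omega)|^q=2n$ exactly because $|\e^{\iu\omega t_l}|=1$. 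Your closing observation that all $\omega$-dependence enters through a coefficient sequence of modulus one is the right instinct, but it must be fed into Szarek's level-set machinery through this $L^{5/4}$ estimate, not used merely to justify a transfer from an orthonormal model. Separately, even granting a pointwise orthonormal bound, the conjugation $S_M=R\tilde S_M R^{-1}$ does not control $L^1$ norms of kernel sections, since $R$ is bounded only on $L^2$; but this objection is moot given the first gap.
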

\begin{remark}\label{26}
  Later, in Section~\ref{65}, we will see what potential implications the
  presence of the $\max$-operator in \eqref{25} can have on the
  convergence behavior of the approximation process. Currently, our proof technique is not
  able to show more, however, we conjecture that \eqref{25} is also true
  without $\max_{1 \leq M \leq N}$.
\end{remark}
For the proof of Lemma~\ref{24} we need Lemmas 2 and 3 from Szarek's paper
\cite{szarek80}. For completeness and convenience, we state them next in a slightly
simplified version, which is sufficient for our purposes.
\begin{lemma}[Szarek]\label{27}
  Let $f$ be a nonnegative measurable function, $C_{2}$ a positive constant, and $n$ a
  natural number such that
  \begin{equation}\label{28}
    \frac{1}{2\pi} \int_{-\pi}^{\pi} (f(t))^2 \di{t}
    \leq
    C_{2} n
  \end{equation}
  and
  \begin{equation}\label{29}
    \frac{1}{2\pi} \int_{-\pi}^{\pi} (f(t))^{5/4} \di{t}
    \geq
    \frac{n^{1/4}}{ C_{2}}.
  \end{equation}
  Then there exists a number $\alpha=\alpha( C_{2})$, $0<\alpha<2^{-3}$ and
  a natural number $s$ such that
  \begin{equation*}
    \frac{1}{2\pi} \int_{\{t \in [-\pi,\pi] : f(t) > \frac{n}{\alpha^2}\}} f(t) \di{t}
    \leq
    \frac{\alpha}{2^4}
  \end{equation*}
  and
  \begin{equation*}
    \frac{1}{2\pi} \int_{\{t \in [-\pi,\pi] : \frac{\alpha^s n}{\alpha^2} < f(t) \leq
      \frac{\alpha^s n}{\alpha^3}\}} f(t) \di{t}
    \geq
    s \alpha .
  \end{equation*}
\end{lemma}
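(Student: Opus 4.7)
The plan is to choose $\alpha = \alpha(C_2) \in (0, 2^{-3})$ small enough, depending only on $C_2$, and verify both conclusions with this single choice. The first conclusion follows by a Chebyshev-type truncation: on $\{f > M\}$ one has $f \leq f^2/M$, so by \eqref{28}
\begin{equation*}
  \frac{1}{2\pi}\int_{\{f(t) > M\}} f(t)\di{t} \leq \frac{1}{M}\cdot\frac{1}{2\pi}\int_{-\pi}^{\pi} f(t)^2\di{t} \leq \frac{C_2 n}{M}.
\end{equation*}
Taking $M = n/\alpha^2$ yields $C_2\alpha^2$, which is at most $\alpha/2^4$ as soon as $\alpha \leq 1/(16 C_2)$; this settles the first displayed inequality.

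For the second conclusion, I would slice the sub-level region $\{0 < f \leq n/\alpha^2\}$ into the geometric shells
\begin{equation*}
  E_s = \{t \in [-\pi,\pi] : \alpha^{s-2} n < f(t) \leq \alpha^{s-3} n\},\qquad s = 1, 2, 3, \ldots,
\end{equation*}
which are exactly the sets appearing in the target bound (since $\alpha^s n/\alpha^2 = \alpha^{s-2} n$ and $\alpha^s n/\alpha^3 = \alpha^{s-3} n$). First I would check that the mass of $f^{5/4}$ above the threshold $n/\alpha^2$ is negligible: from $f^{5/4} \leq f^2/M^{3/4}$ on $\{f > M\}$ together with \eqref{28},
\begin{equation*}
  \frac{1}{2\pi}\int_{\{f > n/\alpha^2\}} f(t)^{5/4}\di{t} \leq C_2 \alpha^{3/2} n^{1/4},
\end{equation*}
which is at most $n^{1/4}/(2 C_2)$ whenever $\alpha \leq (2 C_2^2)^{-2/3}$. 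Combined with \eqref{29}, the shells must collectively carry at least half of the $L^{5/4}$-integral, that is, $\sum_{s \geq 1} (2\pi)^{-1}\int_{E_s} f(t)^{5/4}\di{t} \geq n^{1/4}/(2 C_2)$.

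The decisive step is a pigeonhole over $s$. Since $f \leq \alpha^{s-3} n$ on $E_s$, one has $f^{5/4} \leq \alpha^{(s-3)/4} n^{1/4} f$, so writing $I_s = (2\pi)^{-1}\int_{E_s} f\di{t}$ the previous inequality becomes $\sum_{s \geq 1} \alpha^{(s-3)/4} I_s \geq 1/(2 C_2)$. Suppose, for contradiction, that $I_s < s\alpha$ for every $s \geq 1$. Then
\begin{equation*}
  \sum_{s \geq 1} \alpha^{(s-3)/4} I_s < \alpha^{1/4} \sum_{s \geq 1} s\, \alpha^{s/4} = \frac{\alpha^{1/2}}{(1 - \alpha^{1/4})^2},
\end{equation*}
and the right-hand side tends to $0$ as $\alpha \to 0^+$. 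Hence for $\alpha$ sufficiently small (depending only on $C_2$) this contradicts the preceding lower bound, forcing some index $s$ to satisfy $I_s \geq s\alpha$, which is the second conclusion.

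The main obstacle is making all constraints on $\alpha$ compatible: $\alpha < 2^{-3}$, the bound $\alpha \leq 1/(16 C_2)$ for the truncation, the bound $\alpha \leq (2 C_2^2)^{-2/3}$ for discarding the $L^{5/4}$ tail, and $\alpha^{1/2}/(1 - \alpha^{1/4})^2 < 1/(2 C_2)$ for the pigeonhole contradiction. Each gives an explicit upper bound on $\alpha$ depending only on $C_2$, so taking $\alpha = \alpha(C_2)$ to be the minimum of these thresholds closes the argument and keeps $\alpha$ within the required range $(0, 2^{-3})$.
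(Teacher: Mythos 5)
Your argument is correct and self-contained. Note that the paper itself gives no proof of this lemma: it is quoted verbatim (in simplified form) from Szarek \cite{szarek80}, so there is nothing in the text to compare against; your reconstruction is in the spirit of Szarek's original argument. The three ingredients all check out: the Chebyshev truncation $f\leq f^2/M$ on $\{f>M\}$ gives $C_2\alpha^2\leq\alpha/2^4$ for $\alpha\leq 1/(16C_2)$; the tail estimate $f^{5/4}\leq f^2/M^{3/4}$ correctly yields $C_2\alpha^{3/2}n^{1/4}$, so the shells $E_s=\{\alpha^{s-2}n<f\leq\alpha^{s-3}n\}$ (which do coincide with the sets in the statement and partition $\{0<f\leq n/\alpha^2\}$) retain at least $n^{1/4}/(2C_2)$ of the $L^{5/4}$-mass; and the pigeonhole via $f^{5/4}\leq\alpha^{(s-3)/4}n^{1/4}f$ on $E_s$ together with $\sum_{s\geq1}s\,\alpha^{s/4}=\alpha^{1/4}/(1-\alpha^{1/4})^2$ gives the contradiction for $\alpha$ small. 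All four thresholds on $\alpha$ depend only on $C_2$, so taking their minimum (together with $\alpha<2^{-3}$) is legitimate, and the resulting $s$ is a natural number as required.
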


\begin{lemma}[Szarek]\label{30}
  Let $0<\alpha<2^{-3}$ and $\{F_k\}_{k=1}^{N}$ be a sequence of measurable functions.
  Further, define $F_{k,n} \defequal F_{k+n} - F_{k}$. Assume that for all $k,n$
  satisfying $1 \leq k,n$ and $1 \leq k+n \leq N$ there exists a natural number $s=s(k,n)$
  such that
  \begin{equation*}
    \frac{1}{2\pi} \int_{\{t \in [-\pi,\pi] : \lvert F_{k,n}(t) \rvert >
      \frac{n}{\alpha^2}\}} \lvert F_{k,n}(t) \rvert \di{t}
    \leq
    \frac{\alpha}{2^4}   
  \end{equation*}
  and
  \begin{equation*}
    \frac{1}{2\pi} \int_{\{t \in [-\pi,\pi] : \frac{\alpha^s n}{\alpha^2} < \lvert
      F_{k,n}(t) \rvert \leq \frac{\alpha^s n}{\alpha^3}\}} \lvert F_{k,n}(t) \rvert
    \di{t}
    \geq
    s \alpha .
  \end{equation*}
  Then there exists a positive constant
  $C_{3}=C_{3}(\alpha)$ such that
  \begin{equation*}
    \max_{1 \leq k \leq N} \frac{1}{2\pi} \int_{-\pi}^{\pi} \lvert F_k(t) \rvert \di{t}
    \geq
    C_{3}(\alpha) \log(N) .
  \end{equation*}
\end{lemma}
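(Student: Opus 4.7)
My plan is to prove Lemma~\ref{30} by a greedy chaining argument: construct indices $1 \le k_1 < k_2 < \cdots < k_J \le N$ with $J \gtrsim \log N$ along which $\|F_{k_j}\|_{L^1}$ gains an additive $\gtrsim \alpha$ at each stage, so that some $F_{k_j}$ has $L^1$-norm at least $C(\alpha)\log N$. The guiding identity is the telescoping $F_{k_{j+1}} = F_{k_j} + F_{k_j,\,n_j}$ with $n_j = k_{j+1}-k_j$; the hypotheses then control where and how much the increment $F_{k_j,n_j}$ contributes. Concretely, the ``core band'' $E_{k,n} = \{t : \alpha^{s-2} n < |F_{k,n}(t)| \le \alpha^{s-3} n\}$ carries $L^1$-mass $\ge s\alpha$, while the tail $\{|F_{k,n}|>n/\alpha^2\}$ carries at most $\alpha/16$, so $F_{k,n}$ behaves essentially like a bump at the single $\lambda$-dyadic height $\alpha^{s-3}n$ with $\lambda = 1/\alpha > 8$, which is exactly the structure one needs for contributions to add rather than cancel.

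Next I would restrict the allowed increments $n_j$ to dyadic values $\{1,2,4,\ldots,2^L\}$ with $L = \lfloor \log_2 N\rfloor$, and this is the source of the $\log N$. At stage $j$ I pick the smallest dyadic $n_j$ for which the already-constructed $F_{k_j}$ is much smaller in modulus than $\alpha^{s_j-2}n_j$ on the bulk of $E_{k_j,n_j}$; this is arranged by combining the inductive $L^1$-bound on $F_{k_j}$ with a Chebyshev estimate on its large-value set. On that ``clean'' portion of $E_{k_j,n_j}$, the inequality $|F_{k_{j+1}}| \ge |F_{k_j,n_j}| - |F_{k_j}| \ge \tfrac12 |F_{k_j,n_j}|$ holds pointwise, so $F_{k_{j+1}}$ inherits $\ge \alpha/4$ of fresh $L^1$-mass from the increment, while the tail bound $\alpha/16$ forbids the very high values from wiping the gain out.

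The main obstacle will be the lack of any a~priori lower bound on the levels $s(k,n)$. If $s \equiv 1$, the cores sit at heights between $\alpha^{-1}n$ and $\alpha^{-2}n$, so the entire $\log N$ must come from disjointness across dyadic scales $n$. I would handle this by pigeonholing on the integer $s_j$: partition the stages into classes on which $s$ is constant, and within each class argue that the cores $E_{k,n}$ for different dyadic $n$ live at the well-separated heights $\alpha^{s-3}n$, hence are essentially pairwise disjoint. The hypothesis $\alpha < 2^{-3}$ enters precisely here, since it provides the multiplicative gap between the core height $\alpha^{s-3}n$ and the tail cutoff $n/\alpha^2$ that keeps successive stages from interfering destructively. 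Summing over the $\sim \log N$ most populated classes should then yield the advertised $C(\alpha)\log N$ lower bound.
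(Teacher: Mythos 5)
First, a point of reference: the paper does not prove this lemma at all. It is quoted, in a slightly simplified form, from Szarek's paper \cite{szarek80} (where it is Lemma~3) and is used here as a black box. So your proposal has to stand entirely on its own, and as it stands it contains two gaps that I do not see how to repair within the strategy you describe.

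The most serious gap is in the bookkeeping of the additive gain. At stage $j$ you want to discard the part of the core band $E_{k_j,n_j}=\{t:\alpha^{s-2}n_j<\lvert F_{k_j,n_j}(t)\rvert\le\alpha^{s-3}n_j\}$ on which the accumulated function $F_{k_j}$ is comparable to the band height. Chebyshev gives that bad set (normalized) measure at most $2\lVert F_{k_j}\rVert_{1}/(\alpha^{s-2}n_j)$, and since $\lvert F_{k_j,n_j}\rvert\le\alpha^{s-3}n_j$ on the band, the increment's mass lost there is at most $2\lVert F_{k_j}\rVert_{1}/\alpha$: the factor $n_j$ cancels, so choosing $n_j$ large, or dyadic, buys nothing. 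The fresh mass available is only $s\alpha$, and the hypotheses permit $s(k,n)=1$ for every pair, so the net gain $s\alpha-2\lVert F_{k_j}\rVert_{1}/\alpha$ becomes nonpositive as soon as $\lVert F_{k_j}\rVert_{1}\ge s\alpha^{2}/2$. Your iteration therefore stalls after $O(1)$ stages and cannot produce $\log N$ gains of size $\alpha$. The second gap is the disjointness claim within a fixed $s$-class: $E_{k,n}$ and $E_{k',n'}$ are level sets of \emph{different} functions, so separation of the value windows $(\alpha^{s-2}n,\alpha^{s-3}n]$ would not make these sets disjoint as subsets of $[-\pi,\pi]$; and the windows are in any case not separated for consecutive dyadic $n$, since each spans a multiplicative factor $1/\alpha>8>2$. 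Because the case $s\equiv 1$ puts the entire burden of the $\log N$ on precisely this cross-scale disjointness, the argument does not close. A correct proof requires a genuinely more delicate combinatorial mechanism than ``gain $\alpha$ per dyadic scale''; for that you would need to consult \cite{szarek80} directly.
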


Now we are in the position to prove Lemma~\ref{24}.

\begin{proof}[Lemma~\ref{24}]
  \smartqed Let $\{t_k\}_{k \in \Z} \subset \R$ be an arbitrary but fixed ordered complete
  interpolating sequence for $\PWs_{\pi}^{2}$ and $\phi_k$ as defined in
  \eqref{8}. Further, let $\omega \in [-\pi,\pi]$ be arbitrary but fixed. For
  $\omega_1 \in [-\pi,\pi]$ consider the functions
  \begin{equation*}
    G_k(\omega_1,\omega)
    \defequal
    \sum_{l=-k}^{k} \e^{\iu \omega t_l} \Ft{\phi}_l(\omega_1) ,
  \end{equation*}
  and
  \begin{align*}
    G_{k,n}(\omega_1,\omega)
    &\defequal
    G_{k+n}(\omega_1,\omega)
    -
    G_{k}(\omega_1,\omega) \\
    &=
    \sum_{k<\lvert l \rvert \leq k+n} \e^{\iu \omega t_l} \Ft{\phi}_l(\omega_1) .
  \end{align*}
  We will show that $\lvert G_{k,n}(\omega_1,\omega) \rvert$ satisfies the conditions
  \eqref{28} and \eqref{29} of Lemma~\ref{27}.

  We have
  \begin{align}
    \frac{1}{2\pi}
    \int_{-\pi}^{\pi} \lvert G_{k,n}(\omega_1,\omega) \rvert^2 \di{\omega_1}
    &=
    \int_{-\infty}^{\infty} \left| \sum_{k<\lvert l \rvert \leq k+n} \e^{\iu \omega t_l}
      \phi_l(t) \right|^2 \di{t} \notag \\
    &\leq
    B \sum_{k<\lvert l \rvert \leq k+n} 1 \notag \\
    &=
    B 2n , \label{31}
  \end{align}
  where we used the fact that $\{\phi_l\}_{k \in \Z}$ is a Riesz basis for $\PWs_{\pi}^{2}$.
  
  Next, we analyze the expression
  \begin{equation*}
    \frac{1}{2\pi} \int_{-\pi}^{\pi}  \left| G_k(\omega_1,\omega) \right|^p \di{\omega_1}
    =
    \frac{1}{2\pi} \int_{-\pi}^{\pi}  \left| \sum_{l=-k}^{k} \e^{\iu \omega t_l}
      \Ft{\phi}_l(\omega_1) \right|^p \di{\omega_1}
  \end{equation*}
  for $1<p<2$.
  We set
  \begin{equation}\label{32}
    G_{k,n}(\omega_1,\omega)
    =
    0 \qquad \text{for} \quad \text{$\lvert \omega_1 \rvert > \pi$}
  \end{equation}
  and consider the Fourier transform
  \begin{equation}\label{33}
    (\Fto G_{k,n}(\spacedot,\omega))(t)
    =
    \int_{-\infty}^{\infty} G_{k,n}(\omega_1,\omega) \e^{-\iu \omega_1 t} \di{\omega_1} .
  \end{equation}
  Due to \eqref{32}, the integral in \eqref{33} is absolutely
  convergent.
  We have
  \begin{equation*}
    (\Fto G_{k,n}(\spacedot,\omega))(t)
    =
    2\pi g_{k,n}(-t,\omega) ,
  \end{equation*}
  where
  \begin{equation*}
    g_{k,n}(t,\omega)
    \defequal
    \sum_{k<\lvert l \rvert \leq k+n} \e^{\iu \omega t_l} \phi_l(t) .
  \end{equation*}
  Let $q$ be the conjugate of $p$, i.e., $1/p+1/q=1$, then the Hausdorff--Young
  inequality \cite{butzer94},\cite[p. 19]{higgins96_book} shows that there exists a constant
  $C_{4}=C_{4}(p)$ such that
  \begin{equation*}
    \left( \int_{-\infty}^{\infty} \lvert (\Fto G_{k,n}(\spacedot,\omega))(t) \rvert^q
      \di{t} \right)^{\frac{1}{q}}
    \leq
    C_{4}(p) \left( \int_{-\pi}^{\pi} \lvert G_{k,n}(\omega_1,\omega) \rvert^p
      \di{\omega_1} \right)^{\frac{1}{p}} ,
  \end{equation*}
  which implies that
  \begin{align}\label{34}
    \frac{1}{2\pi}
    \int_{-\pi}^{\pi} \lvert G_{k,n}(\omega_1,\omega) \rvert^p
    \di{\omega_1}
    \geq
    \frac{(2 \pi)^{p-1}}{( C_{4}(p))^p}
    \left( \int_{-\infty}^{\infty} \lvert g_{k,n}(t,\omega) \rvert^q \di{t}
    \right)^{\frac{p}{q}} .
  \end{align}
  Note that the constant $C_{4}(p)$ is independent of $\omega$.
  We analyze the integral on the right-hand side of \eqref{34}.
  We have $g_{k,n}(\spacedot,\omega) \in \Bs_{\pi}^{q}$.
  Since $\{t_k\}_{k \in \Z}$ is a complete interpolating sequence for $\PWs_{\pi}^{2}$, we
  have \cite{pavlov79}
  \begin{equation*}
    \inf_{k \in \Z} (t_{k+1}-t_{k})
    >
    0 ,
  \end{equation*}
  and it is known \cite[p.~101]{boas54_book} that there exists a
  positive constant $C_{5}(q)$ that is independent of
  $k$, $n$, and $\omega$ such that
  \begin{equation*}
    \left( \int_{-\infty}^{\infty} \lvert g_{k,n}(t,\omega) \rvert^q \di{t}
    \right)^{\frac{1}{q}}
    \geq
    C_{5}(q)
    \left( \sum_{l=-\infty}^{\infty} \lvert g_{k,n}(t_l,\omega) \rvert^q 
    \right)^{\frac{1}{q}} .
  \end{equation*}
  Since
  \begin{equation*}
    \sum_{l=-\infty}^{\infty} \lvert g_{k,n}(t_l,\omega) \rvert^q
    =
    \sum_{k<\lvert l \rvert \leq n+k} 1
    =
    2n ,
  \end{equation*}
  we obtain
  \begin{equation}\label{35}
    \left( \int_{-\infty}^{\infty} \lvert g_{k,n}(t,\omega) \rvert^q \di{t}
    \right)^{\frac{1}{q}}
    \geq
    C_{5}(q) (2n)^{\frac{1}{q}} .
  \end{equation}
  Combining \eqref{34} and \eqref{35} gives
  \begin{align*}
    \frac{1}{2\pi}
    \int_{-\pi}^{\pi} \lvert G_{k,n}(\omega_1,\omega) \rvert^p
    \di{\omega_1}
    \geq
    \frac{(2 \pi)^{p-1}( C_{5}(q))^p}{( C_{4}(p))^p}
    (2n)^{\frac{p}{q}} ,
  \end{align*}
  and for $p=5/4$ we obtain
  \begin{equation}\label{36}
    \frac{1}{2\pi}
    \int_{-\pi}^{\pi} \lvert G_{k,n}(\omega_1,\omega) \rvert^{\frac{5}{4}}
    \di{\omega_1}
    \geq
    \frac{(4 \pi)^{\frac{1}{4}} ( C_{5}(5))^{\frac{5}{4}}}{( C_{4}(\frac{5}{4}))^{\frac{5}{4}}}
    n^{\frac{1}{4}} .
  \end{equation}
  
  Choosing
  \begin{equation*}
    C_{2}=\max \left\{2 B^2, \frac{( C_{4}(\frac{5}{4}))^{\frac{5}{4}}}{(4
        \pi)^{\frac{1}{4}} ( C_{5}(5))^{\frac{5}{4}}} \right\} ,
  \end{equation*}
  we see from \eqref{31} and \eqref{36} that the
  function $\lvert G_{k,n}(\omega_1,\omega) \rvert$ satisfies conditions
  \eqref{28} and \eqref{29}, that is the assumptions of
  Lemma~\ref{27}. Hence, as a result of Lemma~\ref{27}, $\lvert
  G_{k}(\omega_1,\omega) \rvert$ also satisfies the assumptions of Lemma~\ref{30},
  and application of Lemma~\ref{30} completes the proof. \qed
\end{proof}

Next, we state the second lemma which we need for the proofs of Theorems
\ref{20} and
\ref{23}. We will use it to analyze the influence
of the transfer function $\Ft{h}_T$ on the approximation process.
\begin{lemma}\label{37}
  Let $\{t_k\}_{k \in \Z} \subset \R$ be an ordered complete interpolating sequence for
  $\PWs_{\pi}^{2}$ and $\phi_k$ as defined in \eqref{8}. For all $\omega \in
  [-\pi,\pi]$, all $t \in \R$, and all $N \in \N$ we have
  \begin{align*}
    &\sup_{\substack{\lVert \Ft{g} \rVert_{L^\infty[-\pi,\pi]} \leq 1\\\Ft{g} \in C[-\pi,\pi]}}
    \left| \sum_{k=-N}^{N} \e^{\iu \omega t_k} \frac{1}{2\pi} \int_{-\pi}^{\pi}
      \Ft{g}(\omega_1) \Ft{\phi}_k(\omega_1) \e^{\iu \omega_1 t} \di{\omega_1} \right| \\
    &\hspace{5cm}=
    \frac{1}{2\pi}
    \int_{-\pi}^{\pi} \left| \sum_{k=-N}^{N} \e^{\iu \omega t_k} \Ft{\phi}_k(\omega_1)
    \right| \di{\omega_1} .
  \end{align*}
\end{lemma}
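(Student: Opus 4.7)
The plan is to pull out the sum from the inner integral, reduce the identity to a standard $L^1$--$L^\infty$ duality statement, and then handle the continuity constraint by a Lusin approximation argument.

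First, since the sum over $k$ is finite, I interchange summation and integration to rewrite the left-hand side as
\[
\frac{1}{2\pi} \int_{-\pi}^{\pi} \Ft{g}(\omega_1) \, F_{N,t,\omega}(\omega_1) \di{\omega_1},
\qquad
F_{N,t,\omega}(\omega_1) \defequal \e^{\iu \omega_1 t} \sum_{k=-N}^{N} \e^{\iu \omega t_k} \Ft{\phi}_k(\omega_1).
\]
Because $\{\phi_k\}_{k \in \Z}$ is a Riesz basis of $\PWs_{\pi}^{2}$, each $\Ft{\phi}_k$ lies in $L^2[-\pi,\pi]$, so $F_{N,t,\omega}$ is a finite sum in $L^2[-\pi,\pi] \subset L^1[-\pi,\pi]$. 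Since $\lvert F_{N,t,\omega}(\omega_1) \rvert$ coincides with $\lvert \sum_{k=-N}^{N} \e^{\iu \omega t_k} \Ft{\phi}_k(\omega_1) \rvert$, the claim reduces to the general identity
\[
\sup_{\substack{\lVert \Ft{g} \rVert_{L^\infty[-\pi,\pi]} \leq 1 \\ \Ft{g} \in C[-\pi,\pi]}}
\left| \frac{1}{2\pi} \int_{-\pi}^{\pi} \Ft{g}(\omega_1) F(\omega_1) \di{\omega_1} \right|
=
\frac{1}{2\pi} \int_{-\pi}^{\pi} \lvert F(\omega_1) \rvert \di{\omega_1},
\]
valid for every $F \in L^1[-\pi,\pi]$, applied to $F = F_{N,t,\omega}$.

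Second, the upper bound ``$\leq$'' is immediate from H\"older's inequality: $\lvert \int \Ft{g} F \di{\omega_1} \rvert \leq \lVert \Ft{g} \rVert_{L^\infty} \lVert F \rVert_{L^1} \leq \lVert F \rVert_{L^1}$. The genuine content lies in the lower bound. The formal optimizer is $\Ft{g}^{*}(\omega_1) = \overline{F(\omega_1)}/\lvert F(\omega_1) \rvert$ (and $0$ where $F = 0$), which attains the bound but is only measurable, not continuous. This is the main obstacle.

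Third, to bypass the continuity constraint I would invoke Lusin's theorem: for every $\epsilon>0$ there is a closed set $E_\epsilon \subset [-\pi,\pi]$ with $|[-\pi,\pi]\setminus E_\epsilon|<\epsilon$ such that $\Ft{g}^{*}$ is continuous on $E_\epsilon$. By Tietze's extension theorem I extend $\Ft{g}^{*}\rvert_{E_\epsilon}$ to a continuous function $\Ft{g}_\epsilon \in C[-\pi,\pi]$ with $\lVert \Ft{g}_\epsilon \rVert_{L^\infty[-\pi,\pi]} \leq 1$. Then
\[
\frac{1}{2\pi} \int_{-\pi}^{\pi} \Ft{g}_\epsilon(\omega_1) F(\omega_1) \di{\omega_1}
\geq
\frac{1}{2\pi} \int_{-\pi}^{\pi} \lvert F(\omega_1) \rvert \di{\omega_1}
- \frac{1}{\pi} \int_{[-\pi,\pi]\setminus E_\epsilon} \lvert F(\omega_1) \rvert \di{\omega_1},
\]
and the last integral vanishes as $\epsilon \to 0$ by absolute continuity of the Lebesgue integral applied to $F \in L^1[-\pi,\pi]$. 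This yields the matching lower bound and completes the proof.
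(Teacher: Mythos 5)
Your proof is correct and follows essentially the same route as the paper: both identify the unimodular a.e.\ optimizer $\overline{F}/\lvert F\rvert$ and then use Lusin's theorem to approximate it by continuous functions of sup-norm at most $1$. The only cosmetic difference is that the paper passes to the limit via dominated convergence along a sequence $\Ft{g}_n \to \Ft{g}$ a.e., whereas you use a single Lusin--Tietze approximant and bound the error directly by absolute continuity of the integral of $\lvert F\rvert$ over the exceptional set.
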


\begin{proof}[Lemma~\ref{37}]
  \smartqed
  Let $\{t_k\}_{k \in \Z} \subset \R$ be an ordered complete interpolating sequence for $\PWs_{\pi}^{2}$,
  $\omega \in [-\pi,\pi]$, $t \in \R$, and $N \in \N$, all be arbitrary but fixed.
  Further, let $\phi_k$ be defined as in \eqref{8}.
  For 
  \begin{equation*}
    \Ft{g}(\omega_1)
    =
    \exp \left(
      -\iu \arg \left( \e^{\iu \omega_1 t} \sum_{k=-N}^{N} \e^{\iu \omega t_k} \Ft{\phi}_k(\omega_1) \right)
    \right)
  \end{equation*}
  we have
  \begin{equation}\label{38}
    \left| \sum_{k=-N}^{N} \e^{\iu \omega t_k} \frac{1}{2\pi} \int_{-\pi}^{\pi}
      \Ft{g}(\omega_1) \Ft{\phi}_k(\omega_1) \e^{\iu \omega_1 t} \di{\omega_1} \right|
    =
    \frac{1}{2\pi}
    \int_{-\pi}^{\pi} \left| \sum_{k=-N}^{N} \e^{\iu \omega t_k} \Ft{\phi}_k(\omega_1)
    \right| \di{\omega_1} .
  \end{equation}
  Further, as a consequence of Lusin's theorem \cite[p.~56]{rudin87_book}, there
  exists a sequence of functions $\{\Ft{g}_n\}_{n \in \N}$ with $\Ft{g}_n \in C[-\pi,\pi]$ and
  $\lVert \Ft{g}_n \rVert_{L^\infty[-\pi,\pi]} \leq 1$, such that\linebreak
  \mbox{$\lim_{n \rightarrow \infty} \Ft{g}_n(\omega_1)=\Ft{g}(\omega_1)$} almost everywhere.
  It follows from Lebesgue's dominated convergence theorem and
  \eqref{38} that
  \begin{align*}
    &\lim_{n \rightarrow \infty}
    \left| \sum_{k=-N}^{N} \e^{\iu \omega t_k} \frac{1}{2\pi} \int_{-\pi}^{\pi}
      \Ft{g}_n(\omega_1) \Ft{\phi}_k(\omega_1) \e^{\iu \omega_1 t} \di{\omega_1} \right| \\
    &\hspace{3cm}=
    \left| \sum_{k=-N}^{N} \e^{\iu \omega t_k} \frac{1}{2\pi} \int_{-\pi}^{\pi}
      \Ft{g}(\omega_1) \Ft{\phi}_k(\omega_1) \e^{\iu \omega_1 t} \di{\omega_1} \right| \\
    &\hspace{3cm}=
    \frac{1}{2\pi}
    \int_{-\pi}^{\pi} \left| \sum_{k=-N}^{N} \e^{\iu \omega t_k} \Ft{\phi}_k(\omega_1)
    \right| \di{\omega_1} .
  \end{align*}
  Hence, taking the limit $n \rightarrow \infty$ on both sides of
  \begin{align*}
    &\sup_{\substack{\lVert \Ft{g} \rVert_{L^\infty[-\pi,\pi]} \leq 1\\\Ft{g} \in C[-\pi,\pi]}}
    \left| \sum_{k=-N}^{N} \e^{\iu \omega t_k} \frac{1}{2\pi} \int_{-\pi}^{\pi}
      \Ft{g}(\omega_1) \Ft{\phi}_k(\omega_1) \e^{\iu \omega_1 t} \di{\omega_1} \right| \\
    &\hspace{3cm} \geq
    \left| \sum_{k=-N}^{N} \e^{\iu \omega t_k} \frac{1}{2\pi} \int_{-\pi}^{\pi}
      \Ft{g}_n(\omega_1) \Ft{\phi}_k(\omega_1) \e^{\iu \omega_1 t} \di{\omega_1} \right|
  \end{align*}
  completes the proof.
  \qed
\end{proof}

\begin{proof}[Theorem~\ref{20}]
  \smartqed
  Let $\{t_k\}_{k \in \Z} \subset \R$ be an arbitrary but fixed ordered complete interpolating
  sequence for $\PWs_{\pi}^{2}$ and $\phi_k$ as defined in \eqref{8}. Further,
  let $t \in \R$ be arbitrary but fixed.

From Lemma~\ref{24} we see that
\begin{equation*}
  \sup_{N \in \N}
  \int_{-\pi}^{\pi} \left| \sum_{k=-N}^{N} \e^{\iu \omega t_k} \Ft{\phi}_k(\omega_1)
  \right| \di{\omega_1}
  =
  \infty
\end{equation*}
for all $\omega \in [-\pi,\pi]$.
Due to Lemma~\ref{37} this implies that
\begin{equation*}
  \sup_{N \in \N} \Biggl( 
    \sup_{\substack{\lVert \Ft{h}_T \rVert_{L^\infty[-\pi,\pi]} \leq 1\\\Ft{h}_T \in C[-\pi,\pi]}}
    \left| \sum_{k=-N}^{N} \e^{\iu \omega t_k} \frac{1}{2\pi} \int_{-\pi}^{\pi}
      \Ft{h}_T(\omega_1) \Ft{\phi}_k(\omega_1) \e^{\iu \omega_1 t} \di{\omega_1} \right|
  \Biggr)
  =
  \infty 
\end{equation*}
for all $\omega \in [-\pi,\pi]$.
Thus, according to the Banach--Steinhaus theorem \cite[p. 98]{rudin87_book}, for all
$\omega \in [-\pi,\pi]$ there exists a
function $\Ft{h}_{T_\omega} \in C[-\pi,\pi]$ such that
\begin{equation*}
  \limsup_{N \rightarrow \infty} \Biggl( 
    \left| \sum_{k=-N}^{N} \e^{\iu \omega t_k} \frac{1}{2\pi} \int_{-\pi}^{\pi}
      \Ft{h}_{T_\omega}(\omega_1) \Ft{\phi}_k(\omega_1) \e^{\iu \omega_1 t} \di{\omega_1} \right|
  \Biggr)
  =
  \infty .
\end{equation*}
Further, since $\Ft{h}_{T_\omega} \in C[-\pi,\pi] \subset L^\infty[-\pi,\pi]$, and since there is
the bijection \eqref{2} between $L^\infty[-\pi,\pi]$ and
the set of stable LTI systems $T_\omega \colon \PWs_{\pi}^{1} \to \PWs_{\pi}^{1}$, it
follows that for all $\omega \in [-\pi,\pi]$ there exists a stable LTI system $T_\omega$
such that
\begin{equation*}
  \limsup_{N \rightarrow \infty} \left| \sum_{k=-N}^{N} \e^{\iu \omega t_k} (T_\omega \phi_k)(t) \right|
  =
  \infty .
\end{equation*}
In particular, for $\omega=0$ there exists a stable LTI system $T_*=T_0$ such that
\begin{equation}\label{39}
  \limsup_{N \rightarrow \infty} \left| \sum_{k=-N}^{N} (T_* \phi_k)(t) \right|
  =
  \infty .
\end{equation}
$T_0$ is the desired stable LTI system $T_*$.

Next, let $0<\sigma<\pi$ be arbitrary but fixed.
For $f \in \PWs_{\sigma}^{1}$ and $N \in \N$ we have
\begin{equation*}
  \sum_{k=-N}^{N} f(t_k) (T_* \phi_k)(t)
  =
  \frac{1}{2\pi}
  \int_{-\sigma}^{\sigma} \Ft{f}(\omega_1) \sum_{k=-N}^{N} \e^{\iu \omega_1 t_k} (T_* \phi_k)(t)
  \di{\omega_1} .
\end{equation*}
Hence, it follows that
\begin{align*}
  \sup_{\lVert f \rVert_{\PWs_{\sigma}^{1}} \leq 1}
  \left| \sum_{k=-N}^{N} f(t_k) (T_* \phi_k)(t) \right|
  &=
  \max_{\omega_1 \in [-\sigma,\sigma]}
  \left| \sum_{k=-N}^{N} \e^{\iu \omega_1 t_k} (T_* \phi_k)(t) \right| \\
  &\geq
  \left| \sum_{k=-N}^{N} (T_* \phi_k)(t) \right| .
\end{align*}
Consequently, from \eqref{39} we obtain that
\begin{equation*}
  \limsup_{N \rightarrow \infty}
  \left(
    \sup_{\lVert f \rVert_{\PWs_{\sigma}^{1}} \leq 1}
    \left| \sum_{k=-N}^{N} f(t_k) (T_* \phi_k)(t) \right|
  \right)
  =
  \infty .
\end{equation*}
Thus, the Banach--Steinhaus theorem \cite[p. 98]{rudin87_book} implies that there exists a
signal $f_* \in \PWs_{\sigma}^{1}$ such that
\begin{equation*}
  \limsup_{N \rightarrow \infty}
  \left| \sum_{k=-N}^{N} f_*(t_k) (T_* \phi_k)(t) \right|
  =
  \infty .
\end{equation*}
This completes the proof. \qed
\end{proof}

\begin{proof}[Theorem~\ref{23}]
  \smartqed
  The proof of Theorem~\ref{23} is identical to
  the proof of Theorem~\ref{20}, except that we choose
  $\omega \in [\sigma_1,\sigma_2]$ instead of $\omega=0$. Since the divergence
  creating stable LTI system $T_*$ depends on the actual choice of $\omega$, we see that
  $T_*$ is no longer universal in the sense that it is independent of $\sigma_1$ and $\sigma_2$.
  \qed
\end{proof}

\section{General Measurement Functionals and Oversampling}\label{40}

\subsection{Basic Properties of General Measurement Functionals}

A key concept in signal processing is to process analog, i.e., continuous-time signals in
the digital domain. The fist step in this procedure is to convert the continuous-time
signal into a discrete-time signal, i.e., into a sequence of numbers. In
Section~\ref{4} we analyzed a sampling-based system approximation,
where the point evaluation functionals $f \mapsto f(t_k)$ are used to do this conversion.
Next, we will proceed to more general measurement functionals
\cite{partington96,butzer98,butzer00}.

The approximation of $Tf$ by the system approximation process
\begin{equation}\label{41}
  \sum_{k=-N}^{N} f(t_k) (T \phi_k)(t)
\end{equation}
can be seen as an approximation that uses the biorthogonal system
$\left\{ \e^{-\iu \spacedot t_k} , \Ft{\phi}_k \right\}_{k \in \Z}$.
In this setting, the sampling functionals, which define a certain measurement procedure, are given by
\begin{equation}\label{42}
  c_k(f)
  =
  f(t_k)
  =
  \frac{1}{2\pi} \int_{-\pi}^{\pi} \Ft{f}(\omega) \e^{\iu \omega t_k} \di{\omega} ,
\end{equation}
and the functions
\begin{equation*}
  \phi_k(t)
  =
  \frac{1}{2\pi} \int_{-\pi}^{\pi} \Ft{\phi}_k(\omega) \e^{\iu \omega t} \di{\omega}
\end{equation*}
serve as reconstruction functions in the approximation process
\eqref{41}.

In Theorem~\ref{20} we have seen that for $f \in
\PWs_{\pi}^{1}$ even with oversampling an approximation of $Tf$ using the process
\eqref{41} is not possible in general, because
there are signals $f \in \PWs_{\pi}^{1}$ and stable LTI systems $T$ such that
\eqref{41} diverges.

Next, we will study more general measurement procedures than
\eqref{42} in hopes of circumventing the divergence that was observed
in Theorem~\ref{20}. To this end, we consider a complete
orthonormal system $\{\Ft{\theta}_n\}_{n \in \N}$ in $L^2[-\pi,\pi]$.

For $f \in \PWs_{\pi}^{2}$ the situation is simple. The measurement functionals $c_n \colon
\PWs_{\pi}^{2} \to \C$ are given by
\begin{equation*}
  c_n(f)
  =
  \frac{1}{2\pi} \int_{-\pi}^{\pi} \Ft{f}(\omega) \cc{\Ft{\theta}_n(\omega)} \di{\omega}
  =
  \int_{-\infty}^{\infty} f(t) \cc{\theta_n(t)} \di{t} .
\end{equation*}
Further, we have
\begin{equation*}
  \lim_{N \rightarrow \infty} \frac{1}{2\pi} \int_{-\pi}^{\pi}
  \left| \Ft{f}(\omega) - \sum_{n=1}^{N} c_n(f) \Ft{\theta}_n(\omega) \right|^2 \di{\omega}
  =
  0
\end{equation*}
as well as
\begin{equation*}
  \lim_{N \rightarrow \infty} \int_{-\infty}^{\infty}
  \left| f(t) - \sum_{n=1}^{N} c_n(f) \theta_n(t) \right|^2 \di{t}
  =
  0
\end{equation*}
for all $f \in \PWs_{\pi}^{2}$.

In order that
\begin{equation}\label{43}
  c_n(f)
  =
  \frac{1}{2\pi} \int_{-\pi}^{\pi} \Ft{f}(\omega) \cc{\Ft{\theta}_n(\omega)} \di{\omega}
\end{equation}
is also a reasonable measurement procedure for $f \in
\PWs_{\pi}^{1}$, we need the functionals $c_n \colon \PWs_{\pi}^{1} \to \C$, defined by \eqref{43}, to be
continuous and uniformly bounded in $n$. Since 
\begin{equation*}
  \sup_{\lVert f \rVert_{\PWs_{\pi}^{1}} \leq 1} \lvert c_n(f) \rvert
  =
  \lVert \Ft{\theta}_n \rVert_{L^\infty[-\pi,\pi]} ,
\end{equation*}
this means we additionally have to require that the functions of the complete orthonormal
system $\{\Ft{\theta}_n\}_{n \in \N}$ satisfy 
\begin{equation}\label{44}
  \sup_{n \in \N} \lVert \Ft{\theta}_n \rVert_{L^\infty[-\pi,\pi]}
  <
  \infty .
\end{equation}

Using these more general measurement functionals \eqref{43}, the
system approximation process takes the form
\begin{equation}\label{45}
  \sum_{n=1}^{\infty} c_n(f) (T \theta_n)(t) .
\end{equation}
In the next section we study the approximation process
\eqref{45} and analyze its convergence behavior for
signals $f \in \PWs_{\sigma}^{1}$, $0<\sigma<\pi$. We will see that with these more
general linear measurement functionals a stable implementation of LTI systems is possible.

A special case of measurement functionals are local averages. Reconstruction of functions
from local averages was, for example, studied in \cite{sun02,sun02a,sun03,song06}.

\subsection{Approximation for General Measurement Functionals and Oversampling}\label{46}

The next theorem describes the convergence behavior of the approximation process
\eqref{45} in the case of oversampling.
\begin{theorem}\label{47}
  Let $0<\sigma<\pi$. There exists a complete orthonormal system $\{\Ft{\theta}_n\}_{n
    \in \N}$ in $L^2[-\pi,\pi]$ satisfying \eqref{44}, an
  associated sequence of measurement functionals $\{c_n\}_{n \in \N}$ as defined by
  \eqref{43}, and a constant
  $C_{6}$ such that for all stable LTI systems
  $T \colon \PWs_{\pi}^{1} \to \PWs_{\pi}^{1}$ and all $f \in \PWs_{\sigma}^{1}$ we have
  \begin{equation*}
    \sup_{t \in \R} \left| \sum_{n=1}^{N} c_n(f) (T \theta_n)(t) \right|
    \leq
    C_{6} \lVert f \rVert_{\PWs_{\sigma}^{1}}
    \lVert T \rVert
  \end{equation*}
  for all $N \in \N$, and further
  \begin{equation}\label{48}
    \lim_{N \rightarrow \infty} \left( \sup_{t \in \R} \left| (Tf)(t) - \sum_{n=1}^{N}
        c_n(f) (T \theta_n)(t) \right| \right)
    =
    0 .
  \end{equation}
\end{theorem}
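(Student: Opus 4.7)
The plan is to reduce Theorem~\ref{47} to a single question about uniform $L^{1}$-convergence of a partial-sum operator in the frequency domain, and then to exhibit an orthonormal basis for which that question has an affirmative answer on functions supported in $[-\sigma,\sigma]$.

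\emph{Step 1 (Reduction to $L^{1}$ in the frequency domain).} Using the frequency-domain representation \eqref{2}, the definition \eqref{43} of $c_n$, and linearity,
\begin{equation*}
(Tf)(t)-\sum_{n=1}^{N}c_n(f)(T\theta_n)(t)
=\frac{1}{2\pi}\int_{-\pi}^{\pi}\bigl[\Ft{f}(\omega)-(S_{N}\Ft{f})(\omega)\bigr]\Ft{h}_T(\omega)\,\e^{\iu\omega t}\di{\omega},
\end{equation*}
where $S_{N}\Ft{g}:=\sum_{n=1}^{N}c_n(g)\Ft{\theta}_n$ is the partial-sum operator of the basis. Since $\lvert\e^{\iu\omega t}\rvert=1$ and $\lVert T\rVert=\lVert\Ft{h}_T\rVert_{L^{\infty}[-\pi,\pi]}$,
\begin{equation*}
\sup_{t\in\R}\Bigl|(Tf)(t)-\sum_{n=1}^{N}c_n(f)(T\theta_n)(t)\Bigr|
\leq\frac{\lVert T\rVert}{2\pi}\,\bigl\lVert\Ft{f}-S_{N}\Ft{f}\bigr\rVert_{L^{1}[-\pi,\pi]},
\end{equation*}
and analogously $\sup_{t}\lvert\sum_{n=1}^{N}c_n(f)(T\theta_n)(t)\rvert\leq(\lVert T\rVert/(2\pi))\lVert S_{N}\Ft{f}\rVert_{L^{1}[-\pi,\pi]}$. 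Recalling $\lVert f\rVert_{\PWs_{\sigma}^{1}}=\lVert\Ft{f}\rVert_{L^{1}[-\sigma,\sigma]}/(2\pi)$, both conclusions of the theorem follow once we produce an orthonormal basis $\{\Ft{\theta}_n\}$ of $L^{2}[-\pi,\pi]$ satisfying \eqref{44} with
\begin{equation*}
\sup_{N\in\N}\lVert S_{N}\Ft{g}\rVert_{L^{1}[-\pi,\pi]}\leq C\lVert\Ft{g}\rVert_{L^{1}[-\pi,\pi]}
\quad\text{and}\quad
\lim_{N\to\infty}\lVert S_{N}\Ft{g}-\Ft{g}\rVert_{L^{1}[-\pi,\pi]}=0
\end{equation*}
for every $\Ft{g}\in L^{1}[-\pi,\pi]$ with $\operatorname{supp}\Ft{g}\subseteq[-\sigma,\sigma]$.

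\emph{Step 2 (Construction of the basis).} The plain trigonometric basis is forbidden: its functionals reduce to Shannon pointwise samples and must diverge by Theorem~\ref{20}. Instead, fix $\alpha\in(\sigma,\pi)$ and a smooth cut-off $\Ft{r}\in C^{\infty}(\R)$ with $\Ft{r}\equiv 1$ on $[-\sigma,\sigma]$, $\Ft{r}\equiv 0$ outside $[-\alpha,\alpha]$, and $0\leq\Ft{r}\leq 1$. Take $\{\Ft{\theta}_n\}$ to be a smooth local-trigonometric (Wilson-type) orthonormal basis of $L^{2}[-\pi,\pi]$ generated from a fixed smooth window; each $\Ft{\theta}_n$ is then a $C^{\infty}$ function whose $L^{\infty}$-norm is bounded by a constant depending only on the window, establishing \eqref{44}. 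The decisive feature is that when $\Ft{g}$ is supported in $[-\sigma,\sigma]$ the integral kernel of $S_{N}$ acts on $\Ft{g}$ only through the smooth-window part of the basis; a Schur-type estimate on that kernel, exploiting smoothness of the window together with the separation $\sigma<\alpha<\pi$, gives the uniform $L^{1}$-bound required in Step~1.

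\emph{Step 3 (Convergence and main obstacle).} For $\Ft{g}\in C_{c}^{\infty}((-\sigma,\sigma))$, integration by parts against the smooth basis functions yields rapid decay of $c_n(g)$, so $S_{N}\Ft{g}\to\Ft{g}$ in $L^{2}[-\pi,\pi]$ and, by H\"older on the compact interval, in $L^{1}[-\pi,\pi]$. Density of $C_{c}^{\infty}((-\sigma,\sigma))$ in the subspace of $L^{1}[-\pi,\pi]$ functions supported in $[-\sigma,\sigma]$, together with the uniform $L^{1}$-bound from Step~2, promotes this to $L^{1}$-convergence for every such $\Ft{g}$ via a standard $\varepsilon/3$-argument; combined with Step~1 this proves \eqref{48} and furnishes $C_{6}=C/(2\pi)$ for the uniform bound. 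The main obstacle is clearly Step~2: producing a concrete orthonormal basis of $L^{2}[-\pi,\pi]$ that is simultaneously $L^{\infty}$-bounded in the strong sense of \eqref{44} and whose partial-sum operator is uniformly $L^{1}$-bounded on a natural interior subspace. The oversampling slack $\sigma<\pi$ is precisely what makes such a construction possible, by allowing a smooth window to replace the singular Dirichlet kernel.
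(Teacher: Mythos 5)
Your Step 1 and Step 3 match the paper's strategy exactly: the paper also reduces everything to showing that the frequency-domain partial-sum operator $U_N$ is uniformly bounded from $L^1[-\sigma,\sigma]$ to $L^1[-\pi,\pi]$ and converges in $L^1$, and it also upgrades $L^2$-convergence on a dense subclass (the paper uses $\PWs_{\sigma}^{2}\subset\PWs_{\sigma}^{1}$ rather than $C_c^\infty$) to $L^1$-convergence via the uniform bound. The gap is in Step 2, which you correctly identify as the crux but then resolve incorrectly. What is actually needed is the pointwise Lebesgue-function bound
\begin{equation*}
  \sup_{N\in\N}\ \sup_{\omega\in[-\sigma,\sigma]}\ \frac{1}{2\pi}\int_{-\pi}^{\pi}\Bigl|\sum_{n=1}^{N}\Ft{\theta}_n(\omega)\,\Ft{\theta}_n(\omega_1)\Bigr|\di{\omega_1}<\infty ,
\end{equation*}
and a smooth local-trigonometric (Wilson/local-cosine) basis does \emph{not} satisfy this. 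Within a single window interval $I_j$ the order-$N$ kernel is $w_j(\omega)w_j(\omega_1)$ times a cosine Dirichlet kernel, whose $L^1$-norm in $\omega_1$ grows like $\log N$ for any fixed $\omega$ in the interior of the support of $w_j$; the logarithm comes from the $1/\lvert\omega-\omega_1\rvert$ tails near the diagonal, which no smooth window bounded below near $\omega$ can remove, and which a Schur test cannot see past (the Schur bound itself is $\gtrsim\log N$). The separation $\sigma<\alpha<\pi$ does not help because the divergence occurs on the diagonal, i.e.\ throughout $[-\sigma,\sigma]$, not at the band edge.

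This is not a repairable detail but the entire mathematical content of the theorem. By the Olevskii--Szarek circle of results (the same ones the paper uses for its negative theorems), \emph{every} uniformly bounded complete orthonormal system in $L^2[-\pi,\pi]$ must have Lebesgue functions growing at least logarithmically somewhere; the only freedom is to confine that growth to an arbitrarily small exceptional set. The paper's proof invokes Olevskii's theorem (Theorem~\ref{50}): there is a uniformly bounded orthonormal system $\{\psi_n\}$, closed in $C[0,1]$, whose Lebesgue functions are uniformly bounded for $x\in[\delta,1]$. Setting $\delta=(\pi-\sigma)/(2\pi)$ and $\Ft{\theta}_n(\omega)=\psi_n((\omega+\pi)/(2\pi))$ pushes the unavoidable divergence into the band $[-\pi,-\sigma)$ that the oversampled signal never occupies, which yields \eqref{52} and hence your desired uniform $L^1$-bound. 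So the oversampling slack is exploited through the location of Olevskii's exceptional set, not through smoothness of a window, and without a construction of Olevskii's type your argument does not close.
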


\begin{remark}
  Theorem~\ref{47} shows that, using oversampling and
  more general measurement functionals, it is possible to have a stable system
  approximation with the process \eqref{45}. This
  is in contrast to pointwise sampling, which was analyzed in
  Section~\ref{19}, where even oversampling is not able to prevent
  the divergence. It is interesting to note that
  Theorem~\ref{47} is not only an abstract existence
  result. The complete orthonormal system $\{\Ft{\theta}_n\}_{n \in \N}$ which is used in
  Theorem~\ref{47} can be explicitly constructed by a
  procedure given in \cite{olevskii66a,olevskii66}.
\end{remark}

\begin{remark}
  In Section~\ref{57} we will see that oversampling is necessary in order to
  obtain Theorem~\ref{47}, i.e., a stable system
  implementation is only possible with oversampling and suitable measurement functionals.
\end{remark}

\begin{remark}\label{49}
  Theorem~\ref{47} also shows that, for the space
  $\PWs_{\pi}^{1}$, it is sufficient to use a linear process for the system approximation
  if oversampling is used, which introduces a kind of redundance. However, for other
  Banach spaces this is not necessarily true. There exist Banach spaces where non-linear
  processes have to be used, even in the signal reconstruction problem
  \cite{partington96}.
\end{remark}

For the proof of Theorem~\ref{47} we need the following theorem from
\cite{olevskii66a,olevskii66}.
\begin{theorem}[Olevskii]\label{50}
  Let $0<\delta<1$. There exists an orthonormal system $\{\psi_n\}_{n \in \N}$ of
  real-valued functions that is
  closed in $C[0,1]$ such that
  \begin{equation*}
    \sup_{n \in \N} \lVert \psi_n \rVert_{L^\infty[0,1]}
    <
    \infty
  \end{equation*}
  and such that there exists a constant $C_{7}$ such that for all $x \in
  [\delta,1]$ and all $N \in \N$ we have
  \begin{equation*}
    \int_{0}^{1} \left| \sum_{n=1}^{N} \psi_n(x) \psi_n(\tau) \right| \di{\tau}
    \leq
    C_{7} .
  \end{equation*}
\end{theorem}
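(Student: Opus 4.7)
The plan follows Olevskii's original block construction. First, I would fix an increasing sequence of positive integers $N_0 = 0 < N_1 < N_2 < \cdots$ and define blocks $B_k = \{N_{k-1}+1, \ldots, N_k\}$ of indices. Within each block I would construct $\lvert B_k \rvert$ orthonormal functions on $[0,1]$ by applying a carefully chosen orthogonal transformation to a standard block of a base orthonormal system, such as a shifted/scaled trigonometric system or Walsh-type system. The final system $\{\psi_n\}_{n \in \N}$ is obtained by concatenating the blocks. The block sizes $\lvert B_k \rvert$, the base system used in each block, and the orthogonal transformation applied to it are all tuned together so that four properties are delivered simultaneously.

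The four properties to verify are: orthonormality and completeness of $\{\psi_n\}$ in $L^2[0,1]$; closedness in $C[0,1]$, meaning finite linear combinations are uniformly dense; the uniform pointwise bound $\sup_n \lVert \psi_n \rVert_{L^\infty[0,1]} < \infty$; and the kernel estimate $\int_0^1 \lvert K_N(x,\tau) \rvert \, d\tau \leq C_{7}$ for every $x \in [\delta,1]$ and every $N \in \N$, where $K_N(x,\tau) = \sum_{n=1}^{N} \psi_n(x)\psi_n(\tau)$. The first and third are straightforward to encode in the construction (take base systems with uniformly bounded sup norms and keep the orthogonal transformations unitary). For the kernel estimate, I would arrange the rotation inside block $k$ so that, for $x \in [\delta,1]$, the block kernel $\sum_{n \in B_k} \psi_n(x)\psi_n(\tau)$ (viewed as a function of $\tau$) is concentrated on a small measurable set, with its $L^1$-norm decaying geometrically in $k$; summing a geometric series across blocks then yields a uniform bound on $\int_0^1 \lvert K_N(x,\tau) \rvert \, d\tau$ independent of $N$ and of $x \in [\delta,1]$, by controlling the tail block and comparing truncated partial sums $K_N$ to full block sums.

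The main obstacle is the classical obstruction that for \emph{any} orthonormal system closed in $C[0,1]$, the global Lebesgue constants $\sup_{x \in [0,1]} \int_0^1 \lvert K_N(x,\tau) \rvert \, d\tau$ must be unbounded; otherwise Fourier partial sums would converge uniformly for every continuous function, contradicting the Banach--Steinhaus characterization of uniform convergence. Hence the unbounded behavior of the kernel \emph{must} concentrate somewhere, and the whole point of the construction is to quantitatively confine it to the initial interval $[0,\delta]$ while not sacrificing the other three properties. Balancing the geometric decay of the $L^1$-mass of block kernels on $[\delta,1]$ against the growth of concentrated mass on $[0,\delta]$ is where the delicate choice of the $\lvert B_k \rvert$ and of the rotations matters, and this is the technical heart of Olevskii's construction.

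Finally, closedness in $C[0,1]$ would be verified by a separate argument: since the kernel bound gives uniform convergence of partial sums on $[\delta,1]$ for continuous functions, one needs only to handle approximation on $[0,\delta]$, which can be arranged by including in the construction a fixed family of smooth bumps supported in $[0,\delta]$ whose linear span is uniformly dense in $C[0,\delta]$ and extending it orthonormally; letting $\delta \to 0$ through the construction (or, equivalently, handling arbitrarily small $\delta$ by nesting) shows closedness in $C[0,1]$.
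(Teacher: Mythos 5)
First, a point of comparison: the paper does not prove this statement at all --- it is imported verbatim from Olevskii's papers \cite{olevskii66a,olevskii66} (with Remark~\ref{51} only clarifying the meaning of ``closed''), so your proposal has to be judged on its own merits as a reconstruction of Olevskii's construction. As such it correctly identifies the shape of the argument (blockwise orthogonal transformations, confinement of the necessarily unbounded Lebesgue function to $[0,\delta]$), but two of your claims of ``straightforwardness'' are exactly where the difficulty lives. You assert that uniform boundedness is easy if one starts from a uniformly bounded base system and ``keeps the orthogonal transformations unitary''; this is false, since a unitary mixing of $m$ functions of sup-norm at most $M$ only gives the a priori bound $M\sqrt{m}$ for each output function (a row of a unitary matrix has $\ell^1$-norm up to $\sqrt{m}$), so boundedness is destroyed by generic rotations. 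Olevskii's construction runs in the opposite direction: it starts from the \emph{unbounded} Haar system and applies specially structured Hadamard-type matrices whose rows spread the mass of each Haar block evenly, and it is this specific algebraic structure --- not a generic ``carefully chosen rotation'' --- that simultaneously yields uniform boundedness and makes the partial-sum kernels computable. Relatedly, your kernel estimate must hold for \emph{every} $N$, including $N$ falling strictly inside a block; you defer this with the phrase ``comparing truncated partial sums $K_N$ to full block sums,'' but for an arbitrary unitary block transformation the mid-block truncations are not controlled by the full block kernel at all. Handling these intermediate truncations via the explicit matrix structure is the technical heart of Olevskii's proof, and the proposal contains no mechanism for it.

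The closedness argument is also broken, and unnecessarily so. It is circular to derive closedness from ``uniform convergence of partial sums on $[\delta,1]$ for continuous functions'': bounded Lebesgue constants give uniform boundedness of the partial-sum operators, but convergence additionally requires density of the span in the uniform norm, which is precisely what closedness asserts. Your repair --- adjoining smooth bumps supported in $[0,\delta]$ and ``extending orthonormally,'' then ``letting $\delta \to 0$'' --- does not work as stated: the Gram--Schmidt extension would wreck both the uniform bound and the kernel estimate, and the theorem demands a single fixed system (for the given $\delta$) closed in all of $C[0,1]$, so one cannot vary $\delta$ through the construction. The standard resolution is much cleaner and falls out of the Haar-based construction for free: blockwise unitary transformations preserve the linear span of each block, so the span of $\{\psi_n\}$ coincides with the span of the Haar system, which consists of the dyadic step functions and is uniformly dense in $C[0,1]$. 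In short, your outline names the right obstruction and the right global strategy, but the three load-bearing steps --- uniform boundedness, the all-$N$ kernel bound, and closedness --- are each either asserted without a mechanism or argued incorrectly, and all three are resolved in Olevskii's proof by the one ingredient your sketch omits: the explicit Haar-plus-Hadamard block structure.
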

\begin{remark}\label{51}
  In the above theorem, we adopted the notion of ``closed'' from \cite{olevskii75_book}.
  In \cite{olevskii75_book} a system $\{\psi_n\}_{n \in \N}$ is called closed in $C[0,1]$
  if every function in $C[0,1]$ can be uniformly approximated by finite linear combinations of
  the system $\{\psi_n\}_{n \in \N}$, that is if
  for every $\epsilon>0$ and every $f \in C[0,1]$ there exists an $N \in \N$ and a
  sequence $\{\alpha_n\}_{n=1}^{N} \subset \C$
  such that
  $\left\| f - \sum_{n=1}^{N} \alpha_n \psi_n  \right\|_{L^\infty[0,1]}
    <
    \epsilon$.
\end{remark}

\begin{proof}[Theorem~\ref{47}]
  \smartqed
  Let $0<\sigma<\pi$ be arbitrary but fixed and set
  $\delta=(\pi-\sigma)/(2\pi)$.
  Using the functions $\psi_n$ from Theorem~\ref{50}, we define
  \begin{equation*}
    \Ft{\theta}_n(\omega)
    \defequal
    \psi_n\left( \frac{\omega+\pi}{2\pi} \right) , \qquad \omega \in[-\pi,\pi] .
  \end{equation*}
  Due to the properties of the functions $\psi_n$, we see that
  $\{\Ft{\theta}_n\}_{n \in \N}$ is a complete orthonormal system for $L^2[-\pi,\pi]$, and that
  \begin{equation*}
    \sup_{n \in \N} \lVert \Ft{\theta}_n \rVert_{L^\infty[-\pi,\pi]}
    <
    \infty .
  \end{equation*}
  Furthermore, for $\omega \in [-\sigma,\sigma]$, we have
  \begin{align}
    \frac{1}{2 \pi} \int_{-\pi}^{\pi} \left| \sum_{n=1}^{N} \Ft{\theta}_n(\omega)
      \Ft{\theta}_n(\omega_1) \right| \di{\omega_1}
    &=
    \int_{0}^{1} \left| \sum_{n=1}^{N} \Ft{\psi}_n \left( \frac{\omega+\pi}{2 \pi} \right)
      \psi_n(\tau) \right| \di{\tau} \notag \\
    &\leq
    C_{7} , \label{52}
  \end{align}
  according to Theorem~\ref{50}, because for 
  $\omega \in [-\sigma,\sigma]$ we have $(\omega+\pi)/(2\pi) \in [\delta,1]$.
  Next, we study for $f \in \PWs_{\sigma}^{1}$ the expression
  \begin{align*}
    (U_N \Ft{f})(\omega)
    &\defequal
    \sum_{n=1}^{N} c_n(f) \Ft{\theta}_n(\omega) \\
    &=
    \frac{1}{2\pi} \int_{-\sigma}^{\sigma} \Ft{f}(\omega_1) \sum_{n=1}^{N}
    \Ft{\theta}_n(\omega) \Ft{\theta}_n(\omega_1) \di{\omega_1} .
  \end{align*}
  We have
  \begin{equation*}
    \lvert (U_N \Ft{f})(\omega) \rvert
    \leq
    \frac{1}{2\pi} \int_{-\sigma}^{\sigma} \lvert \Ft{f}(\omega_1) \rvert
    \left| \sum_{n=1}^{N}
    \Ft{\theta}_n(\omega) \Ft{\theta}_n(\omega_1) \right| \di{\omega_1} ,
  \end{equation*}
  which implies, using Fubini's theorem and \eqref{52}, that
  \begin{align}
    \frac{1}{2\pi} \int_{-\pi}^{\pi} \lvert (U_N \Ft{f})(\omega) \rvert \di{\omega}
    &\leq
    \frac{1}{2\pi} \int_{-\sigma}^{\sigma} \lvert \Ft{f}(\omega_1) \rvert
    \left( \frac{1}{2\pi} \int_{-\pi}^{\pi} \left| \sum_{n=1}^{N}
        \Ft{\theta}_n(\omega) \Ft{\theta}_n(\omega_1) \right| \di{\omega} \right) \di{\omega_1} \notag \\
    &\leq
    C_{7} \lVert f \rVert_{\PWs_{\sigma}^{1}} . \label{53}
  \end{align}

  Now, let $f \in \PWs_{\sigma}^{1}$ and $\epsilon>0$ be arbitrary but fixed.
  Then there exists an $f_\epsilon \in \PWs_{\sigma}^{2}$ such that
  \begin{equation}\label{54}
    \lVert f-f_\epsilon \rVert_{\PWs_{\sigma}^{1}}
    <
    \epsilon .
  \end{equation}
  We have
  \begin{align*}
    \frac{1}{2\pi} \int_{-\pi}^{\pi} & \lvert \Ft{f}(\omega) - (U_N \Ft{f})(\omega) \rvert
    \di{\omega} \\
    &\leq
    \frac{1}{2\pi} \int_{-\pi}^{\pi} \lvert \Ft{f}(\omega) - \Ft{f}_\epsilon(\omega)
    \rvert \di{\omega}
    +
    \frac{1}{2\pi} \int_{-\pi}^{\pi} \lvert \Ft{f}_\epsilon(\omega) - (U_N \Ft{f}_\epsilon)(\omega)
    \rvert \di{\omega} \\
    &\qquad+
    \frac{1}{2\pi} \int_{-\pi}^{\pi} \lvert (U_N (\Ft{f}-\Ft{f}_\epsilon))(\omega)
    \rvert \di{\omega} \\
    &\leq
    \epsilon + C_{7} \epsilon
    +
    \left( \frac{1}{2\pi} \int_{-\pi}^{\pi} \lvert \Ft{f}_\epsilon(\omega) - (U_N \Ft{f}_\epsilon)(\omega)
    \rvert^2 \di{\omega} \right)^{\frac{1}{2}} ,
  \end{align*}
  where we used \eqref{53} and \eqref{54}.
  Since $\PWs_{\sigma}^{2} \subset \PWs_{\pi}^{2}$ and $\{\Ft{\theta}_n\}_{n \in \N}$ is a
  complete orthonormal system in $L^2[-\pi,\pi]$, there exists a natural number
  $N_0=N_0(\epsilon)$ such that
  \begin{equation*}
    \left( \frac{1}{2\pi} \int_{-\pi}^{\pi} \lvert \Ft{f}_\epsilon(\omega) - (U_N \Ft{f}_\epsilon)(\omega)
      \rvert^2 \di{\omega} \right)^{\frac{1}{2}}
    <
    \epsilon
  \end{equation*}
  for all $N \geq N_0$. Hence, we have
  \begin{equation*}
    \frac{1}{2\pi} \int_{-\pi}^{\pi} \lvert \Ft{f}(\omega) - (U_N \Ft{f})(\omega) \rvert
    \di{\omega}
    \leq
    \epsilon(2+ C_{7})
  \end{equation*}
  for all $N \geq N_0$.
  This shows that
  \begin{equation}\label{55}
    \lim_{N \rightarrow \infty}
    \frac{1}{2\pi} \int_{-\pi}^{\pi} \lvert \Ft{f}(\omega) - (U_N \Ft{f})(\omega) \rvert
    \di{\omega}
    =
    0 .
  \end{equation}

  Next, let $T \colon \PWs_{\pi}^{1} \to \PWs_{\pi}^{1}$ be an arbitrary but fixed stable LTI
  system.
  We have
  \begin{align*}
    (Tf)(t)
    &-
    \sum_{n=1}^{N} c_n(f) (T\theta_n)(t) \\
    &=
    \frac{1}{2\pi}
    \int_{-\pi}^{\pi} \left(
      \Ft{f}(\omega) \Ft{h}_T(\omega) \e^{\iu \omega t}
      -
      \sum_{n=1}^{N} c_n(f) \Ft{h}_T(\omega) \Ft{\theta}_n(\omega) \e^{\iu \omega t}
    \right) \di{\omega} \\
    &=
    \frac{1}{2\pi}
    \int_{-\pi}^{\pi} (
      \Ft{f}(\omega) 
      -
      (U_N \Ft{f})(\omega)
    ) \Ft{h}_T(\omega) \e^{\iu \omega t} \di{\omega}
  \end{align*}
  and consequently
  \begin{equation}\label{56}
    \left|
      (Tf)(t)
      -
      \sum_{n=1}^{N} c_n(f) (T\theta_n)(t)
    \right|
    \leq
    \lVert \Ft{h}_T \rVert_{L^\infty[-\pi,\pi]}
    \frac{1}{2\pi}
    \int_{-\pi}^{\pi}
    \lvert 
    \Ft{f}(\omega) 
    -
    (U_N \Ft{f})(\omega)
    \rvert \di{\omega} 
  \end{equation}
  for all $t \in \R$.
  From \eqref{55} and \eqref{56} we see
  that
  \begin{equation*}
    \lim_{N \rightarrow \infty} \left( \sup_{t \in \R}
    \left|
      (Tf)(t)
      -
      \sum_{n=1}^{N} c_n(f) (T\theta_n)(t)
    \right| \right)
    =
    0 .
  \end{equation*}
  
  Further, we have
  \begin{align*}
    \left|
      \sum_{n=1}^{N} c_n(f) (T\theta_n)(t)
    \right|
    &\leq
    \frac{1}{2\pi}
    \int_{-\pi}^{\pi}
    \lvert (U_N \Ft{f})(\omega)  
    \Ft{h}_T(\omega) \rvert \di{\omega} \\
    &\leq
    C_{7} \lVert \Ft{h}_T \rVert_{L^\infty[-\pi,\pi]} \lVert f \rVert_{\PWs_{\sigma}^{1}} , 
  \end{align*}
  where we used \eqref{53} in the last inequality. \qed
\end{proof}

\begin{remark}
  Since $\{\Ft{\theta}_n\}_{n \in \N}$ is a complete orthonormal system in
  $L^2[-\pi,\pi]$, it follows that the projections of the functions $\{\theta_n\}_{n \in
    \N}$ onto $\PWs_{\sigma}^{2}$ form a Parseval frame for $\PWs_{\sigma}^{2}$,
  $0<\sigma<\pi$ \cite[p.~231]{heil11_book}. Although we have seen in
  Remark~\ref{22} that a frame does not necessary lead to a convergent
  approximation process, Theorem~\ref{47}
  shows that there are even Parseval frames for which we have convergence.
\end{remark}

\subsection{The Necessity of Oversampling}\label{57}

In Section~\ref{46} we have seen that if oversampling
and generalized measurement functionals are used, we can approximate $Tf$ by
\eqref{45}. The question whether this remains true
if no oversampling is used, is the subject of this section. We want to answer this
question for a large class of practically relevant measurement functionals.

We start with a biorthogonal system $\{\Ft{\gamma}_n,\Ft{\phi}_n\}_{n \in \N}$, i.e., a system
that satisfies
\begin{equation*}
  \frac{1}{2\pi} \int_{-\pi}^{\pi} \Ft{\gamma}_n(\omega) \cc{\Ft{\phi}_m(\omega)} \di{\omega}
  =
  \begin{cases}
    1,& m=n,\\
    0,& m \neq n .
  \end{cases}
\end{equation*}
Further, we assume that $\{\Ft{\gamma}_n\}_{n \in \N} \subset L^\infty[-\pi,\pi]$ and
$\{\phi_n \}_{n \in \N} \subset \PWs_{\pi}^{2}$, and define the measurement functionals by
\begin{equation}\label{58}
  c_n(f)
  \defequal
  \frac{1}{2\pi} \int_{-\pi}^{\pi} \Ft{f}(\omega) \cc{\Ft{\gamma}_n(\omega)} \di{\omega} ,
  \quad n\in \N .
\end{equation}
As discussed in Section~\ref{40}, we additionally require that
\begin{equation}\label{59}
  \sup_{n \in \N} \lVert \Ft{\gamma}_n \rVert_{L^\infty[-\pi,\pi]}
  <
  \infty ,
\end{equation}
in order that \eqref{58} defines reasonable measurement
functionals for $f \in \PWs_{\pi}^{1}$.
We further assume that there exists a constant $C_{8}$ such that
for any finite sequence $\{a_n\}$ we have
\begin{equation}\label{60}
  \int_{-\infty}^{\infty} \left| \sum_{n} a_n \phi_n(t) \right|^2 \di{t}
  \leq
  C_{8} \sum_{n} \lvert a_n \rvert^2 .
\end{equation}
Condition \eqref{60} relates the $l^2$-norm of the coefficients to
the $L^2(\R)$-norm of the continuous-time signal. If \eqref{60} is
fulfilled, the $L^2(\R)$-norm of the continuous-time signal is always bounded above by the
$l^2$-norm of the coefficients, i.e., the measurement values. This property is practically
interesting, because in digital signal processing we operate on the sequence of
coefficients $\{a_n\}_{n \in \N}$ by using stable $l^2 \to l^2$ mappings, and we always
want to be able to control the $L^2(\R)$-norm of the corresponding continuous-time signal.
Note that in the special case of equidistant pointwise sampling at Nyquist rate, the norms
are equal according to Parseval's equality.

\begin{remark}
  Instead of requiring \eqref{60} to hold we could also require that
  there exists a constant $C_{9}$ such that for any finite sequence $\{a_n\}$
  we have
  \begin{equation}\label{61}
    \int_{-\infty}^{\infty} \left| \sum_{n} a_n \gamma_n(t) \right|^2 \di{t}
    \geq
    C_{9} \sum_{n} \lvert a_n \rvert^2 .
  \end{equation}
  Indeed \eqref{61} is a weaker assumption than \eqref{60},
  because condition \eqref{60} implies condition \eqref{61}
  but the reverse direction is not true in general.
\end{remark}

\begin{remark}
  Note that the setting which we consider here is a generalization of the setting that
  arises when staring with complete interpolating sequences.
\end{remark}

\begin{theorem}\label{62}
  Let $\{\Ft{\gamma}_n,\Ft{\phi}_n\}_{n \in \N}$ be a biorthogonal system
  that satisfies \eqref{59} and
  \eqref{60}, and let $\{c_n\}_{n \in \N}$ be the associated sequence of
  measurement functionals as defined by \eqref{58}. For
  every $t \in \R$ there exist a stable LTI system $T_* \colon \PWs_{\pi}^{1} \to
  \PWs_{\pi}^{1}$ and a signal $f_* \in \PWs_{\pi}^{1}$ such that
  \begin{equation}\label{63}
    \limsup_{N \rightarrow \infty} \left| \sum_{n=1}^{N}
      c_n(f_*)
      (T_* \phi_n)(t)
    \right|
    =
    \infty .
  \end{equation}
\end{theorem}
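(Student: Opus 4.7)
I would mirror the structure of the proof of Theorem~\ref{20}: first rewrite the approximation process as a bilinear form in the Fourier data of $f$ and the transfer function of $T$, collapse the supremum over $T$ via the Lusin-theorem argument of Lemma~\ref{37}, prove that the remaining supremum over $f$ is unbounded as $N \to \infty$, and conclude by two successive applications of Banach--Steinhaus. Inserting the definition \eqref{58} of $c_n$ and the frequency-domain representation \eqref{2} of $(T\phi_n)(t)$ yields
\begin{equation*}
\sum_{n=1}^{N} c_n(f)\,(T\phi_n)(t)
= \frac{1}{2\pi} \int_{-\pi}^{\pi} \Ft{h}_T(\omega_1)\,\e^{\iu \omega_1 t}\,\sigma_N^f(\omega_1) \di{\omega_1},
\end{equation*}
where $\sigma_N^f(\omega_1) \defequal \sum_{n=1}^{N} c_n(f)\,\Ft{\phi}_n(\omega_1)$. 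The Lusin-theorem reasoning of Lemma~\ref{37} then applies verbatim and gives
\begin{equation*}
\sup_{\substack{\lVert \Ft{h}_T \rVert_{L^\infty[-\pi,\pi]} \leq 1 \\ \Ft{h}_T \in C[-\pi,\pi]}}
\left| \sum_{n=1}^{N} c_n(f)\,(T\phi_n)(t) \right|
= \frac{1}{2\pi} \int_{-\pi}^{\pi} \lvert \sigma_N^f(\omega_1) \rvert \di{\omega_1}.
\end{equation*}

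The main step, taking over the role of Lemma~\ref{24}, will be to show that
\begin{equation*}
\sup_{N \in \N}\,\sup_{\lVert f \rVert_{\PWs_{\pi}^{1}} \leq 1}
\frac{1}{2\pi} \int_{-\pi}^{\pi} \lvert \sigma_N^f(\omega_1) \rvert \di{\omega_1}
= \infty .
\end{equation*}
Because $\{\Ft{\gamma}_n, \Ft{\phi}_n\}_{n \in \N}$ is biorthogonal, $S_N f \defequal \sum_{n=1}^{N} c_n(f)\,\phi_n$ is a projection on $\PWs_{\pi}^{1}$ whose operator norm equals the displayed supremum, so the claim is the divergence of the Lebesgue constants $\lVert S_N \rVert_{\PWs_{\pi}^{1} \to \PWs_{\pi}^{1}}$. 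I would argue by contradiction: if $\sup_N \lVert S_N \rVert < \infty$, then the uniform $L^\infty$-bound~\eqref{59} on the dual functionals $c_n$, the Bessel-type inequality~\eqref{60} for $\{\phi_n\}$, and biorthogonality would together force $\{\phi_n\}_{n \in \N}$ to be a Schauder basis of its $\PWs_{\pi}^{1}$-closed linear span with biorthogonal functionals uniformly bounded in $L^\infty[-\pi,\pi]$. Such a configuration on the full frequency band $[-\pi,\pi]$---that is, without any oversampling gap $\delta>0$---contradicts the Olevskii-type obstruction that underlies the positive conclusion of Theorem~\ref{50}, whose construction essentially requires the restriction to $[\delta,1]$ with $\delta > 0$. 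I expect this step to be the genuine obstacle of the proof; the other ingredients are soft analysis.

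Granted the divergence of $\lVert S_N \rVert$, the two-stage Banach--Steinhaus argument at the end of the proof of Theorem~\ref{20} transfers verbatim: I would first extract a stable LTI system $T_*\colon \PWs_{\pi}^{1} \to \PWs_{\pi}^{1}$ such that
\begin{equation*}
\limsup_{N \to \infty} \sup_{\lVert f \rVert_{\PWs_{\pi}^{1}} \leq 1} \biggl| \sum_{n=1}^{N} c_n(f)\,(T_*\phi_n)(t) \biggr| = \infty,
\end{equation*}
and then, with $T_*$ fixed, a signal $f_* \in \PWs_{\pi}^{1}$ for which~\eqref{63} holds.
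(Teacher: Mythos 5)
Your overall architecture coincides with the paper's: rewrite the process as a bilinear form in $\Ft{f}$ and $\Ft{h}_T$, collapse the supremum over $T$ via the Lusin argument of Lemma~\ref{37}, reduce everything to the divergence of the Lebesgue constants
$\lVert S_N \rVert_{\PWs_{\pi}^{1} \to \PWs_{\pi}^{1}}
= \esssup_{\omega \in [-\pi,\pi]} \frac{1}{2\pi}\int_{-\pi}^{\pi} \bigl\lvert \sum_{n=1}^{N} \cc{\Ft{\gamma}_n(\omega)}\,\Ft{\phi}_n(\omega_1)\bigr\rvert \di{\omega_1}$,
and finish with two applications of Banach--Steinhaus exactly as at the end of the proof of Theorem~\ref{20}. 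The gap sits precisely where you flagged it: your justification of the key divergence is not a proof. You argue that if $\sup_N \lVert S_N \rVert < \infty$, the resulting uniformly bounded biorthogonal system with the Bessel property \eqref{60} would ``contradict the Olevskii-type obstruction that underlies Theorem~\ref{50}.'' But Theorem~\ref{50} is a positive existence result, and the observation that Olevskii's construction requires the restriction to $[\delta,1]$ with $\delta>0$ does not imply that no other system could achieve bounded partial sums with $\delta=0$. No obstruction theorem of the kind you invoke is stated or proved in your argument, so the contradiction never materializes.

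The missing ingredient is Szarek's theorem on the nonexistence of Besselian bases in $C(S)$ (Proposition~\ref{64} in the paper, from \cite{szarek80}): for any biorthogonal sequence $\{f_n,g_n\}$ with $\lVert g_n\rVert_{\infty}\leq 1$ and a Bessel bound $\int_S \lvert\sum_{n=1}^{N} s_n f_n\rvert^2 \di{m} \leq C\sum_{n=1}^{N}\lvert s_n\rvert^2$, one has
$\max_{1\leq M\leq N}\int_S\int_S \bigl\lvert\sum_{n=1}^{M}\cc{g_n(t)}\,f_n(s)\bigr\rvert \di{m}(t)\di{m}(s) \geq C'\log(N)$.
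Applied to $\{\Ft{\phi}_n,\Ft{\gamma}_n\}$ on $[-\pi,\pi]$ (after normalizing via \eqref{59}, and noting that \eqref{60} is exactly the required Bessel hypothesis), this yields a $\log N$ lower bound on the doubly averaged kernel, hence on its $\esssup$ over $\omega$, which is the quantity you need. With that quantitative input, the remainder of your argument goes through verbatim. Note also that Szarek's bound carries a $\max_{1\leq M\leq N}$, which is why the conclusion \eqref{63} is a $\limsup$ rather than a $\lim$; your contradiction framing (``$\sup_N\lVert S_N\rVert<\infty$ fails'') is consistent with this, but the positive route through Proposition~\ref{64} is what actually delivers it.
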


\begin{remark}
  The orthonormal sequence from Section~\ref{40} of course satisfies the
  conditions of Theorem~\ref{62}. This shows how important the
  assumption of oversampling, i.e., $f \in \PWs_{\sigma}^{1}$, $\sigma < \pi$, is in order
  to obtain Theorem~\ref{47}. 
\end{remark}

For the proof we use the following result from \cite{szarek80}, which is included here for
convenience, with a slightly modified notation.
\begin{proposition}[Szarek]\label{64}
  Let $(S,\mathcal{B},m)$ a probability space and $\{f_n,g_n\}_{n \in \N}$ a biorthogonal
  sequence of measurable functions on $S$ (i.e., $\int_S f_k \cc{g_n} \di{m} = \delta_{kn}$)
  such that
  \begin{enumerate}
  \item $\lVert g_n \rVert_{\infty} \leq 1$ for $n=1,2,\ldots,N$.
  \item $\int_S \lvert \sum_{n=1}^{N} s_n f_n \rvert^2 \di{m} \leq C \sum_{n=1}^{N} \lvert
    s_n \rvert^2$ for some $C>0$ and for all sequences of scalars $s_1, \ldots,s_N$ (and,
    as a consequence, $\int_S \lvert \sum_{n=1}^{N} t_n g_n \rvert^2 \di{m} \geq C^{-1}
    \sum_{n=1}^{N} \lvert t_n \rvert^2$ for all scalars $t_1, \ldots , t_N)$.
  \end{enumerate}
  Then there exists $C'>0$, depending only on $C$, such that
  \begin{equation*}
    \max_{1 \leq M \leq N} \int_S \int_S \left| \sum_{n=1}^{M} \cc{g_n(t)} f_n(s) \right| 
    \di{m}(t) \di{m}(s)
    \geq
    C' \log(N) .
  \end{equation*}
\end{proposition}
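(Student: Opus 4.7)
The plan is to reduce the claim to Szarek's Proposition~\ref{64} (which gives a $\log N$ lower bound on a double kernel integral) and then to produce the divergence-creating system $T_*$ and signal $f_*$ by two successive applications of the Banach--Steinhaus theorem.

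First I would rewrite the approximation sum in kernel form. Using \eqref{2} and \eqref{58} and setting
\begin{equation*}
  S_N(f,T) \defequal \sum_{n=1}^{N} c_n(f)\,(T\phi_n)(t),
\end{equation*}
a direct computation yields
\begin{equation*}
  S_N(f,T) = \frac{1}{(2\pi)^2}\int_{-\pi}^{\pi}\int_{-\pi}^{\pi} \Ft{f}(\omega_1)\,\Ft{h}_T(\omega)\,\e^{\iu\omega t}\, K_N(\omega,\omega_1) \di{\omega}\di{\omega_1},
\end{equation*}
where $K_N(\omega,\omega_1) \defequal \sum_{n=1}^{N} \cc{\Ft{\gamma}_n(\omega_1)}\,\Ft{\phi}_n(\omega)$. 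For fixed $t$, $S_N$ is a continuous bilinear form on $\PWs_{\pi}^{1}\times L^\infty[-\pi,\pi]$, the latter factor identified with the stable LTI systems via \eqref{2}. Computing the bilinear norm by the same Lusin-approximation argument as in Lemma~\ref{37}, one obtains
\begin{equation*}
  \lVert S_N\rVert = \sup_{\omega_1\in[-\pi,\pi]} \frac{1}{2\pi}\int_{-\pi}^{\pi} |K_N(\omega,\omega_1)| \di{\omega}.
\end{equation*}

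Second, I would apply Proposition~\ref{64} on the probability space $([-\pi,\pi], \frac{1}{2\pi}\di{\omega})$ with $f_n = \Ft{\phi}_n$ and $g_n = \Ft{\gamma}_n$. Biorthogonality is exactly the assumed relation; hypothesis~(1) follows from \eqref{59} after the harmless rescaling $f_n\mapsto M f_n$, $g_n\mapsto g_n/M$, where $M=\sup_n \lVert\Ft{\gamma}_n\rVert_{L^\infty[-\pi,\pi]}$; and hypothesis~(2) is
\begin{equation*}
  \frac{1}{2\pi}\int_{-\pi}^{\pi}\Bigl|\sum_n s_n \Ft{\phi}_n(\omega)\Bigr|^2\di{\omega} = \int_{-\infty}^{\infty}\Bigl|\sum_n s_n \phi_n(t)\Bigr|^2\di{t} \leq C_{8}\sum_n |s_n|^2,
\end{equation*}
which holds by Plancherel (since $\phi_n\in\PWs_{\pi}^{2}$) combined with \eqref{60}. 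Proposition~\ref{64} then yields a constant $C'$ and, for every $N$, some index $M(N)\leq N$ with
\begin{equation*}
  \frac{1}{(2\pi)^2}\int_{-\pi}^{\pi}\int_{-\pi}^{\pi} |K_{M(N)}(\omega,\omega_1)| \di{\omega}\di{\omega_1} \geq C'\log N.
\end{equation*}
Bounding the sup over $\omega_1$ from below by the $\omega_1$-average, this gives $\lVert S_{M(N)}\rVert \geq C'\log N$, so $\sup_M \lVert S_M\rVert = \infty$.

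Third, a double Banach--Steinhaus argument produces $T_*$ and $f_*$. For each $N$ define $A_N\colon L^\infty[-\pi,\pi] \to (\PWs_{\pi}^{1})^*$ by $A_N(\Ft{h})[f] = S_N(f,T_{\Ft{h}})$; this is a bounded linear map with $\lVert A_N\rVert = \lVert S_N\rVert$. The uniform boundedness principle on $L^\infty[-\pi,\pi]$ yields a stable LTI system $T_*$ with $\sup_N \lVert A_N\Ft{h}_{T_*}\rVert_{(\PWs_{\pi}^{1})^*} = \infty$, and a second application on the Banach space $\PWs_{\pi}^{1}$ produces $f_*\in\PWs_{\pi}^{1}$ realizing \eqref{63}. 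The main obstacle is verifying Szarek's hypothesis~(2) in the correct direction: his bound needs a Bessel-type inequality for the $f_n$, and assumption \eqref{60} provides exactly this for $\{\Ft{\phi}_n\}$ after passing to the frequency side via Plancherel. The remainder is bookkeeping, the only other delicate point being the identification of $\lVert S_N\rVert$ with the supremum over $\omega_1$ of the inner integral, which uses the same Lusin-approximation device as Lemma~\ref{37}.
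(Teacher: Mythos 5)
Your proposal does not prove the statement at hand. The statement \emph{is} Szarek's Proposition~\ref{64}, and your very first step is to ``reduce the claim to Szarek's Proposition~\ref{64}'' --- that is, you assume precisely what is to be shown. Everything you then write out --- the kernel form $K_N(\omega,\omega_1)=\sum_{n=1}^{N}\cc{\Ft{\gamma}_n(\omega_1)}\Ft{\phi}_n(\omega)$, the verification of hypotheses (1) and (2) from \eqref{59}, \eqref{60} and Plancherel, the passage from the averaged double integral to a supremum in one variable, and the two successive applications of Banach--Steinhaus --- is, nearly step for step, the paper's proof of Theorem~\ref{62}, for which Proposition~\ref{64} is indeed the key input. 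As a proof of Proposition~\ref{64} itself it is circular, and the actual content of the proposition is nowhere addressed. (A small side remark: even in your derivation of Theorem~\ref{62}, the identification of $\lVert S_N\rVert$ should involve an $\esssup$ over $\omega_1$, not a $\sup$, since $\Ft{f}$ is merely an $L^1$ function; the paper is careful about this.)

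What a genuine proof requires is the machinery that the paper reproduces as Lemma~\ref{27} and Lemma~\ref{30} (Szarek's Lemmas 2 and 3) and deploys, in a special case, in the proof of Lemma~\ref{24}; for Proposition~\ref{64} itself the paper deliberately gives no proof and cites \cite{szarek80}. The argument runs on the product probability space $S\times S$: one considers the kernels $K_M(t,s)=\sum_{n=1}^{M}\cc{g_n(t)}f_n(s)$ and their blocks $K_{k,n}=K_{k+n}-K_{k}$, proves the $L^2$ upper bound $\int_S\int_S\lvert K_{k,n}\rvert^2\di{m}\di{m}\leq Cn$ (fix $t$, apply hypothesis (2) with scalars $s_l=\cc{g_l(t)}$, and use $\lVert g_l\rVert_\infty\leq 1$), and complements it with a lower bound of the form $\int_S\int_S\lvert K_{k,n}\rvert^{5/4}\di{m}\di{m}\geq n^{1/4}/C$, which rests on biorthogonality and the dual Riesz-type estimate in hypothesis (2) --- this is the step that in Lemma~\ref{24} was carried out with the Hausdorff--Young inequality and a Plancherel--P\'olya inequality, and which in the abstract setting is Szarek's Lemma~1. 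These two bounds feed into Lemma~\ref{27}, whose level-set conclusions are exactly the hypotheses of Lemma~\ref{30}, and Lemma~\ref{30} then delivers $\max_{1\leq M\leq N}\int_S\int_S\lvert K_M\rvert\di{m}\di{m}\geq C'\log(N)$ with $C'$ depending only on $C$. (Lemmas~\ref{27} and \ref{30} are stated in the paper on $[-\pi,\pi]$ with normalized Lebesgue measure, but they hold on any probability space, which is Szarek's original formulation and is what is needed here.) None of this appears in your proposal, so as a proof of the assigned statement it is missing in its entirety.
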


\begin{proof}[Theorem~\ref{62}]
  \smartqed
  Let $\{\Ft{\gamma}_n,\Ft{\phi}_n\}_{n \in \N}$ be an arbitrary but fixed biorthogonal system
  that satisfies \eqref{59} and
  \eqref{60}.
  According to Proposition~\ref{64} we have
  \begin{equation*}
    \max_{1 \leq M \leq N}
    \left( \frac{1}{2\pi} \right)^2
    \int_{-\pi}^{\pi} \int_{-\pi}^{\pi} \left| \sum_{n=1}^{M} \cc{\Ft{\gamma}_n(\omega)}
      \Ft{\phi}_n(\omega_1) \right| \di{\omega} \di{\omega_1}
    \geq
    C_{10} \log(N)
  \end{equation*}
  with a universal constant $C_{10}$. This implies that
  \begin{equation*}
    \max_{1 \leq M \leq N}
    \esssup_{\omega \in [-\pi,\pi]}
    \frac{1}{2\pi}
    \int_{-\pi}^{\pi} \left| \sum_{n=1}^{M} \cc{\Ft{\gamma}_n(\omega)}
      \Ft{\phi}_n(\omega_1) \right| \di{\omega_1}
    \geq
    C_{10} \log(N) .
  \end{equation*}
  As in the proof of Theorem~\ref{20} it is shown that
  there exists a stable LTI system $T_* \colon \PWs_{\pi}^{1} \to \PWs_{\pi}^{1}$
  such that
  \begin{equation*}
    \limsup_{N \rightarrow \infty}
    \esssup_{\omega \in [-\pi,\pi]}
    \left| \sum_{n=1}^{N} \cc{\Ft{\gamma}_n(\omega)} (T_* \phi_n)(t) \right|
    =
    \infty .
  \end{equation*}
  And again by the same reasoning as in the proof of
  Theorem~\ref{20}, there exists a signal $f_* \in
  \PWs_{\pi}^{1}$ such that
  \begin{equation*}
    \limsup_{N \rightarrow \infty}
    \left| \sum_{n=1}^{N} \left( \frac{1}{2\pi} \int_{-\pi}^{\pi}  \Ft{f}_*(\omega)
      \cc{\Ft{\gamma}_n(\omega)} \di{\omega} \right) (T_* \phi_n)(t) \right|
    =
    \infty .
  \end{equation*}
  This completes the proof.  
  \qed
\end{proof}

\section{Convergence of Subsequences of Certain Measurement Procedures}\label{65}

So far, we have seen that a system approximation is possible if we use suitable
measurement functionals and oversampling. Further, the previous section has shown that
oversampling is necessary, because without oversampling we can always find a stable LTI
system $T_* \colon \PWs_{\pi}^{1} \to \PWs_{\pi}^{1}$ and a signal $f_* \in
\PWs_{\pi}^{1}$ such that \eqref{63} is true. Since in
\eqref{63} we have a $\limsup$, it is legitimate to ask
whether there exists an increasing subsequence $\{M_N\}_{N \in \N}$ of the natural numbers
such that
\begin{equation}\label{66}
  \lim_{N \rightarrow \infty} \left| (Tf)(t) - \sum_{n=1}^{M_N} c_n(f) (T \phi_n)(t)
  \right|
  =
  0 .
\end{equation}
If \eqref{66} was true it would show that a careful choice of the number of measurements
that are used in each step of the approximation could generate a convergent approximation
process, even without oversampling. Theorem~\ref{71} will answer this question
in the affirmative for a special pair of measurement functionals and reconstruction
functions.

For $k \in \N_0=\N \cup \{0\}$ we consider the functions
\begin{equation}\label{67}
  \Ft{\theta}_k(\omega)
  =
  w_k \left( \frac{\omega+\pi}{2\pi} \right) , \quad -\pi \leq \omega < \pi ,
\end{equation}
where $w_k$ are the Walsh functions.
Then $\{\Ft{\theta}_k\}_{k \in \N_0}$ is a complete orthonormal system in $L^2[-\pi,\pi]$.
Further, let $T\colon \PWs_{\pi}^{1} \to \PWs_{\pi}^{1}$ be a stable LTI system.
For $t \in \R$ we define
\begin{equation}\label{68}
  c_k(f,t)
  \defequal
  \frac{1}{2\pi} \int_{-\pi}^{\pi} \Ft{f}(\omega) \Ft{\theta}_k(\omega) \e^{\iu \omega t}
  \di{\omega} ,
\end{equation}
and analyze the convergence behavior of
\begin{equation}\label{69}
  \sum_{k=0}^{2^N} c_k(f,0) (T \theta_k)(t)
\end{equation}
and 
\begin{equation}\label{70}
  \sum_{k=0}^{2^N} c_k(f,t) (T \theta_k)(0) 
\end{equation}
as $N$ tends to infinity. In \eqref{69} we have the
ordinary system approximation process, except for the difference that the number of
measurements, and consequently the number of summands used for the approximation, is
doubled in each approximation step. In \eqref{70} we have
an alternative implementation of the system, where the variable $t$ is included in the
measurement functionals. As in \eqref{69}, the number of
measurements is doubled in each step.

We have the following result.
\begin{theorem}\label{71}
  Let $\{\theta_k\}_{k \in \N_0}$ be defined through its Fourier transform
  \eqref{67} and $c_k$ as in \eqref{68}. For all $f \in \PWs_{\pi}^{1}$
  and all stable LTI systems $T \colon \PWs_{\pi}^{1} \to \PWs_{\pi}^{1}$ we have
  \begin{equation}\label{72}
    \lim_{N \rightarrow \infty} \left( \sup_{t \in \R} \left| (Tf)(t) - \sum_{k=0}^{2^N}
        c_k(f,0) (T \theta_k)(t) \right| \right)
    =
    0
  \end{equation}
  and
  \begin{equation}\label{73}
    \lim_{N \rightarrow \infty} \left( \sup_{t \in \R} \left| (Tf)(t) - \sum_{k=0}^{2^N}
        c_k(f,t) (T \theta_k)(0) \right| \right)
    =
    0 .
  \end{equation}
\end{theorem}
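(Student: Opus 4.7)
The plan is to exploit the fact that Walsh--Fourier partial sums at dyadic indices coincide with conditional expectations. Setting $x=(\omega+\pi)/(2\pi)$ identifies $[-\pi,\pi]$ (with normalized measure $\di{\omega}/(2\pi)$) with $[0,1]$ and each $\Ft{\theta}_k$ with the Walsh function $w_k$. For every $g\in L^1[-\pi,\pi]$, the Walsh--Fourier partial sum
\begin{equation*}
  S_{2^N}g \defequal \sum_{k=0}^{2^N-1} \Bigl( \frac{1}{2\pi} \int_{-\pi}^{\pi} g(\omega_1)\Ft{\theta}_k(\omega_1) \di{\omega_1} \Bigr) \Ft{\theta}_k
\end{equation*}
equals the conditional expectation $\mathbb{E}[g \,|\, \mathcal{F}_N]$, where $\mathcal{F}_N$ is the $\sigma$-algebra generated by the dyadic intervals of length $2\pi/2^N$. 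The sums in \eqref{72} and \eqref{73} run to $k=2^N$ and therefore differ from $S_{2^N}g$ only by the single term with coefficient $\frac{1}{2\pi}\int g\,\Ft{\theta}_{2^N}\di{\omega_1}$; that extra summand has $L^1[-\pi,\pi]$-norm tending to zero for every $g\in L^1[-\pi,\pi]$, via a routine $L^2$-approximation combined with Bessel's inequality applied to the $L^2$-part.

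For \eqref{72}, expanding $(T\theta_k)(t)$ through \eqref{2} yields the identity
\begin{equation*}
  (Tf)(t) - \sum_{k=0}^{2^N} c_k(f,0)(T\theta_k)(t)
  =
  \frac{1}{2\pi} \int_{-\pi}^{\pi} \bigl(\Ft{f}(\omega) - S_{2^N+1}\Ft{f}(\omega)\bigr)\Ft{h}_T(\omega)\e^{\iu\omega t} \di{\omega},
\end{equation*}
and hence the bound
\begin{equation*}
  \sup_{t\in\R} \left\lvert (Tf)(t) - \sum_{k=0}^{2^N} c_k(f,0)(T\theta_k)(t) \right\rvert
  \leq
  \frac{\lVert \Ft{h}_T \rVert_{L^\infty[-\pi,\pi]}}{2\pi}\, \lVert \Ft{f} - S_{2^N+1}\Ft{f} \rVert_{L^1[-\pi,\pi]} .
\end{equation*}
Since $\Ft{f}\in L^1[-\pi,\pi]$ and $\bigvee_N \mathcal{F}_N$ is the full Borel $\sigma$-algebra on $[-\pi,\pi]$, the $L^1$-martingale convergence theorem gives $\lVert \Ft{f} - \mathbb{E}[\Ft{f}\,|\,\mathcal{F}_N] \rVert_{L^1} \to 0$, and combined with the vanishing extra term from the first paragraph this proves \eqref{72}.

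The main obstacle is \eqref{73}. Writing $F_t(\omega)\defequal \Ft{f}(\omega)\e^{\iu\omega t}$, the analogous calculation gives an error of $\frac{1}{2\pi}\int \Ft{h}_T(\omega)\bigl(F_t(\omega)-S_{2^N+1}F_t(\omega)\bigr)\di{\omega}$, but the naive estimate through $\lVert F_t - S_{2^N+1}F_t\rVert_{L^1}$ fails to be uniform in $t$: when $\lvert t\rvert \gg 2^N$, the rapid oscillation of $\e^{\iu\omega t}$ on each dyadic cell keeps that $L^1$-norm bounded away from zero. My way around this is to use that $F_t - \mathbb{E}[F_t\,|\,\mathcal{F}_N]$ is orthogonal, in the pairing $\frac{1}{2\pi}\int\cdot\di{\omega}$, to every $\mathcal{F}_N$-measurable function, so one may replace $\Ft{h}_T$ by $\Ft{h}_T - \mathbb{E}[\Ft{h}_T\,|\,\mathcal{F}_N]$ in the integral, thereby shifting the smoothing from the $t$-dependent $F_t$ onto the fixed $\Ft{h}_T$. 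To accommodate a general $\Ft{f}\in L^1$ I would split $\Ft{f} = \Ft{f}_\epsilon + g$ with $\Ft{f}_\epsilon\in L^\infty[-\pi,\pi]$ and $\lVert g \rVert_{L^1}<\epsilon$: the bounded piece contributes at most $\lVert \Ft{f}_\epsilon\rVert_\infty\,\lVert \Ft{h}_T - \mathbb{E}[\Ft{h}_T\,|\,\mathcal{F}_N]\rVert_{L^1}/\pi$ uniformly in $t$ (H\"older after the orthogonality substitution), which vanishes by $L^1$-martingale convergence applied to $\Ft{h}_T\in L^\infty\subset L^1$, while the $L^1$-small piece contributes at most $3\epsilon\lVert \Ft{h}_T\rVert_\infty/(2\pi)$ uniformly in $t$ and $N$ (using the $L^1$-contractivity of conditional expectation and $\lVert \Ft{\theta}_{2^N}\rVert_{L^\infty}=1$, $\lVert \Ft{\theta}_{2^N}\rVert_{L^1}=2\pi$). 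The remaining cross term from the extra Walsh index, $\bigl(\frac{1}{2\pi}\int F_t\Ft{\theta}_{2^N}\di{\omega}\bigr)\bigl(\frac{1}{2\pi}\int \Ft{\theta}_{2^N}\Ft{h}_T\di{\omega_1}\bigr)$, is controlled uniformly in $t$ by the trivial bound $\lVert \Ft{f}\rVert_{L^1}/(2\pi)$ on the first factor together with the Bessel-vanishing of the second factor for $\Ft{h}_T\in L^2[-\pi,\pi]$. Letting $N\to\infty$ and then $\epsilon\to 0$ yields \eqref{73}.
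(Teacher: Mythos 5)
Your proof is correct and follows essentially the same route as the paper's: \eqref{72} is reduced to the $L^1$-convergence of the dyadic partial Walsh sums of $\Ft{f}$ (the paper cites Fine's theorem where you invoke $L^1$-martingale convergence --- the same fact, since dyadic Walsh partial sums are conditional expectations onto the dyadic $\sigma$-algebras), and for \eqref{73} both arguments transfer the partial-sum operator onto $\Ft{h}_T$ and split $\Ft{f}$ into a bounded part, handled by the convergence of the Walsh expansion of $\Ft{h}_T$, plus an $L^1$-small part, handled by the $L^1$-norm-one bound \eqref{78} on the dyadic Walsh--Dirichlet kernel, which is exactly your contractivity of conditional expectation. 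The only cosmetic differences are that the paper decomposes the domain via a set $F_\epsilon$ where you decompose the function, and uses Cauchy--Schwarz with $L^2$-convergence of the expansion of $\Ft{h}_T$ where you use $L^\infty$--$L^1$ H\"older with $L^1$-convergence; your explicit treatment of the off-by-one term $k=2^N$ is a detail the paper absorbs into its citations.
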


Theorem~\ref{71} shows that there exists a complete orthonormal system that
leads to a stable system approximation process for all $f \in \PWs_{\pi}^{1}$ and all
stable LTI systems $T\colon \PWs_{\pi}^{1} \to \PWs_{\pi}^{1}$ if we restrict to a
suitable subsequence. It is important to note that the subsequence is universal because it
neither depends on the signal $f$ nor on the system $T$. It is also interesting that with
this kind of approximation we do not need oversampling in order to have convergence.

\begin{remark}\label{74}
  For sampling-based signal processing with equidistant sampling points at Nyquist rate
  such a result cannot exist, because for every subsequence $\{M_N\}_{N \in N}$ of the
  natural numbers there exists a signal
  $f_* \in \PWs_{\pi}^{1}$ and stable LTI system $T_* \colon \PWs_{\pi}^{1} \to
  \PWs_{\pi}^{1}$ such that
  \begin{equation*}
    \limsup_{N \rightarrow \infty} \left( \sup_{t \in \R} \left| (T_*f_*)(t) - \sum_{k=-M_N}^{M_N}
        f_*(k) (T_* \sinc(\spacedot -k))(t) \right| \right)
    =
    \infty .
  \end{equation*}
  This follows directly from the fact that there exists a positive
  constant $C_{11}$ such that
  \begin{equation}\label{75}
    \frac{1}{2 \pi} \int_{-\pi}^{\pi} \left| \sum_{k=-N}^{N} \e^{\iu k(\omega - \omega_1)}
    \right| \di{\omega_1}
    \geq
    C_{11} \log(N)
  \end{equation}
  for all $\omega \in [-\pi,\pi]$ and all $N \in \N$ \cite[p.~67]{zygmund02_vol1_book}.
\end{remark}

\begin{proof}[Theorem~\ref{71}]
  \smartqed Let $\{\theta_k\}_{k \in \N_0}$ be defined through its Fourier transform
  \eqref{67} and $c_k$ as in \eqref{68}. Further, let $f \in
  \PWs_{\pi}^{1}$ and $T \colon \PWs_{\pi}^{1} \to \PWs_{\pi}^{1}$ be a stable LTI system,
  both arbitrary but fixed.

  We first prove \eqref{72}.  
  In \cite{fine49} it was shown that
  \begin{equation*}
    \lim_{N \rightarrow \infty} \frac{1}{2\pi} \int_{-\pi}^{\pi} \left| \Ft{f}(\omega) -
      \sum_{k=0}^{2^N} c_k(f,0) \Ft{\theta}_k(\omega) \right| \di{\omega}
    =
    0 .
  \end{equation*}
  Further, since
  \begin{align*}
    &\left|
    (T f)(t)
    -
    \sum_{k=0}^{2^N} c_k(f,0) (T \theta_k)(t)
    \right|\\
    &=
    \left|
    \frac{1}{2\pi} \int_{-\pi}^{\pi} \Ft{f}(\omega) \Ft{h}_T(\omega) \e^{\iu \omega t}
    -
    \sum_{k=0}^{2^N} c_k(f,0) \Ft{\theta}_k(\omega) \Ft{h}_T(\omega) \e^{\iu \omega t}
    \di{\omega}
    \right| \\
    &=
    \left|
    \frac{1}{2\pi} \int_{-\pi}^{\pi}
    \left(
      \Ft{f}(\omega) - \sum_{k=0}^{2^N} c_k(f,0) \Ft{\theta}_k(\omega)
    \right)
    \Ft{h}_T(\omega) \e^{\iu \omega t}
    \di{\omega}
    \right| \\
    &\leq
    \lVert \Ft{h}_T \rVert_{L^\infty[-\pi,\pi]}
    \frac{1}{2\pi} \int_{-\pi}^{\pi}
    \left|
      \Ft{f}(\omega) - \sum_{k=0}^{2^N} c_k(f,0) \Ft{\theta}_k(\omega)
    \right|
    \di{\omega} ,
  \end{align*}
  the first assertion of the theorem is proved.
 
  Next, we prove \eqref{73}.
  Let $\epsilon>0$ be arbitrary but fixed. There exists a measurable set $F_\epsilon
  \subset [-\pi,\pi]$ such that
  \begin{equation*}
    \frac{1}{2\pi} \int_{F_\epsilon} \lvert \Ft{f}(\omega) \rvert
    \di{\omega}
    <
    \frac{\epsilon}{2}
  \end{equation*}
  and
  \begin{equation*}
    \esssup_{\omega \in [-\pi,\pi] \setminus F_\epsilon} \lvert \Ft{f}(\omega) \rvert
    =
    C(\Ft{f},F_\epsilon)
    <
    \infty .
  \end{equation*}
  Further, we have
  \begin{align}
    &\left|
      \frac{1}{2\pi} \int_{-\pi}^{\pi}
      \left(
        \Ft{h}_T(\omega) - \sum_{k=0}^{2^N} \Ft{\theta}_k(\omega) (T\theta_k)(0)
      \right)
      \Ft{f}(\omega) \e^{\iu \omega t}
      \di{\omega}
    \right| \notag \\
    &\leq
    \frac{1}{2\pi} \int_{-\pi}^{\pi}
    \left|
      \Ft{h}_T(\omega) - \sum_{k=0}^{2^N} \Ft{\theta}_k(\omega) (T\theta_k)(0)
    \right|
    \lvert \Ft{f}(\omega) \rvert
    \di{\omega} \notag \\
    &=
    \frac{1}{2\pi} \int_{F_\epsilon}
    \left|
      \Ft{h}_T(\omega) - \sum_{k=0}^{2^N} \Ft{\theta}_k(\omega) (T\theta_k)(0)
    \right|
    \lvert \Ft{f}(\omega) \rvert
    \di{\omega} \notag \\
    &\quad+
    \frac{1}{2\pi} \int_{[-\pi,\pi] \setminus F_\epsilon}
    \left|
      \Ft{h}_T(\omega) - \sum_{k=0}^{2^N} \Ft{\theta}_k(\omega) (T\theta_k)(0)
    \right|
    \lvert \Ft{f}(\omega) \rvert
    \di{\omega} .\label{76}
  \end{align}
  Next, we analyze the two summands on the right hand side of \eqref{76}. 
  For the first summand we have
  \begin{align}
    &\frac{1}{2\pi} \int_{F_\epsilon}
    \left|
      \Ft{h}_T(\omega) - \sum_{k=0}^{2^N} \Ft{\theta}_k(\omega) (T\theta_k)(0)
    \right|
    \lvert \Ft{f}(\omega) \rvert
    \di{\omega} \notag \\
    &\leq
    \frac{1}{2\pi} \int_{F_\epsilon}
    \lvert \Ft{h}_T(\omega) \rvert
    \lvert \Ft{f}(\omega) \rvert
    \di{\omega}
    +
    \frac{1}{2\pi} \int_{F_\epsilon}
    \left|
      \sum_{k=0}^{2^N} \Ft{\theta}_k(\omega) (T\theta_k)(0)
    \right|
    \lvert \Ft{f}(\omega) \rvert
    \di{\omega} \notag \\
    &\leq
    2 \lVert \Ft{h}_T \rVert_{L^\infty[-\pi,\pi]} \frac{1}{2\pi} \int_{F_\epsilon} \lvert
    \Ft{f}(\omega) \rvert \di{\omega} \notag \\
    &<
    \epsilon \lVert \Ft{h}_T \rVert_{L^\infty[-\pi,\pi]} , \label{77}
  \end{align}
  because
  \begin{align*}
    \left|
      \sum_{k=0}^{2^N} \Ft{\theta}_k(\omega) (T\theta_k)(0)
    \right|
    &=
    \left|
      \frac{1}{2\pi} \int_{-\pi}^{\pi} \Ft{h}_T(\omega_1) \sum_{k=0}^{2^N}
      \Ft{\theta}_k(\omega) \Ft{\theta}_k(\omega_1) \di{\omega_1}
    \right| \\
    &\leq
    \lVert \Ft{h}_T \rVert_{L^\infty[-\pi,\pi]}
    \frac{1}{2\pi} \int_{-\pi}^{\pi} \left| \sum_{k=0}^{2^N}
      \Ft{\theta}_k(\omega) \Ft{\theta}_k(\omega_1) \right| \di{\omega_1}
  \end{align*}
  and
  \begin{equation}\label{78}
    \frac{1}{2\pi} \int_{-\pi}^{\pi} \left| \sum_{k=0}^{2^N} \Ft{\theta}_k(\omega)
    \Ft{\theta}_k(\omega_1) \right| \di{\omega_1}
    =
    1
  \end{equation}
  for all $\omega \in [-\pi,\pi]$ \cite{fine49,schipp90_book}.
  For the second summand we have
  \begin{align*}
    &\frac{1}{2\pi} \int_{[-\pi,\pi] \setminus F_\epsilon}
    \left|
      \Ft{h}_T(\omega) - \sum_{k=0}^{2^N} \Ft{\theta}_k(\omega) (T\theta_k)(0)
    \right|
    \lvert \Ft{f}(\omega) \rvert
    \di{\omega} \\
    &\leq
    C(\Ft{f},F_\epsilon)
    \frac{1}{2\pi} \int_{[-\pi,\pi] \setminus F_\epsilon}
    \left|
      \Ft{h}_T(\omega) - \sum_{k=0}^{2^N} \Ft{\theta}_k(\omega) (T\theta_k)(0)
    \right| \di{\omega} \\
    &\leq
    C(\Ft{f},F_\epsilon)
    \left( \frac{1}{2\pi} \int_{-\pi}^{\pi}
    \left|
      \Ft{h}_T(\omega) - \sum_{k=0}^{2^N} \Ft{\theta}_k(\omega) (T\theta_k)(0)
    \right|^2 \di{\omega} \right)^{\frac{1}{2}} ,
  \end{align*}
  and because
  \begin{equation*}
    \lim_{N \rightarrow \infty} \frac{1}{2\pi} \int_{-\pi}^{\pi}
    \left|
       \Ft{h}_T(\omega) - \sum_{k=0}^{2^N} \Ft{\theta}_k(\omega) (T\theta_k)(0)
    \right|^2 \di{\omega}
    =
    0 ,
  \end{equation*}
  according to \cite{schipp90_book},
  there exists a natural number
  $N_0=N_0(\epsilon)$ such that
  \begin{equation}\label{79}
    \frac{1}{2\pi} \int_{[-\pi,\pi] \setminus F_\epsilon}
    \left|
      \Ft{h}_T(\omega) - \sum_{k=0}^{2^N} \Ft{\theta}_k(\omega) (T\theta_k)(0)
    \right|
    \lvert \Ft{f}(\omega) \rvert
    \di{\omega}
    <
    \epsilon
  \end{equation}
  for all $N \geq N_0$.
  Combining \eqref{76}, \eqref{77}, and \eqref{79}, we see that
  \begin{equation*}
    \left|
      \frac{1}{2\pi} \int_{-\pi}^{\pi}
      \left(
        \Ft{h}_T(\omega) - \sum_{k=0}^{2^N} \Ft{\theta}_k(\omega) (T\theta_k)(0)
      \right)
      \Ft{f}(\omega) \e^{\iu \omega t}
      \di{\omega}
    \right|
    \leq
    (\lVert \Ft{h}_T \rVert_{L^\infty[-\pi,\pi]}+1) \epsilon
  \end{equation*}
  for all $N \geq N_0$, and
  since
  \begin{align*}
    &\left| (Tf)(t) - \sum_{k=0}^{2^N}
      c_k(f,t) (T \theta_k)(0) \right| \\
    &=
    \left|
      \frac{1}{2\pi} \int_{-\pi}^{\pi}
      \left(
        \Ft{h}_T(\omega) - \sum_{k=0}^{2^N} \Ft{\theta}_k(\omega) (T\theta_k)(0)
      \right)
      \Ft{f}(\omega) \e^{\iu \omega t}
      \di{\omega}
    \right| ,
  \end{align*}
  the proof is complete.
  \qed
\end{proof}

Theorem~\ref{71} shows that if the summation is restricted to a suitable
subsequence of the natural numbers, we can have a convergent system approximation process
if we use measurement functionals. Now the question arises if this is also true for
pointwise sampling as analyzed in Section~\ref{4}. Since in
Theorem~\ref{20} we only have a $\limsup$ this could be the
case. However, we have the following conjecture.
\begin{conjecture}\label{80}
  Let $\{t_k\}_{k \in \Z} \subset \R$ be an ordered complete interpolating sequence for $\PWs_{\pi}^{2}$.
  Then there exists a positive constant $C_{12}$ such that
  \begin{equation}\label{81}
    \frac{1}{2\pi} \int_{-\pi}^{\pi} \left| \sum_{k=-N}^{N} \e^{\iu \omega t_k}
      \Ft{\phi}_k(\omega_1) \right| \di{\omega_1}
    \geq
    C_{12} \log(N)
  \end{equation}
  for all $\omega \in [-\pi,\pi]$ and all $N \in \N$.
\end{conjecture}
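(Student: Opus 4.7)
The conjecture strengthens Lemma~\ref{24} by removing the outer maximum $\max_{1\leq M\leq N}$; as Remark~\ref{26} observes, this is beyond what the Szarek-type mechanism used there can deliver. Set
\begin{equation*}
K_N(\omega,\omega_1):=\sum_{k=-N}^{N}\e^{\iu\omega t_k}\,\Ft{\phi}_k(\omega_1).
\end{equation*}
By Lemma~\ref{37}, the $L^1[-\pi,\pi]$-norm of $K_N(\omega,\cdot)$ coincides with the operator norm, over continuous test functions $\Ft{g}$ with $\lVert\Ft{g}\rVert_{L^\infty[-\pi,\pi]}\leq 1$, of the functional $\Ft{g}\mapsto \sum_{k=-N}^{N}\e^{\iu\omega t_k}\,\frac{1}{2\pi}\int_{-\pi}^{\pi}\Ft{g}(\omega_1)\Ft{\phi}_k(\omega_1)\di{\omega_1}$. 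The equispaced sanity check $t_k=k$ gives $K_N(\omega,\omega_1)=\sum_{k=-N}^{N}\e^{\iu(\omega-\omega_1)k}$, the classical Dirichlet kernel, whose $L^1$-norm is bounded below by $C\log N$ for every $\omega$ by \eqref{75}; hence \eqref{81} already holds in this special case, and the task is to extend the pointwise lower bound to arbitrary complete interpolating sequences.

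My plan is to reduce the general case to the equispaced case by exploiting the analytic structure of the generating function $\phi$ from \eqref{7} together with Pavlov's characterisation underlying Theorem~\ref{11}: the Riesz basis property of $\{\e^{\iu\omega t_k}\}_{k\in\Z}$ in $L^2[-\pi,\pi]$ is equivalent to a Muckenhoupt $A_2$-condition on a weight $|F(\omega)|^2$ built from the outer factor of $\phi$. This suggests a decomposition of the form $\Ft{\phi}_k(\omega)=\e^{-\iu\omega t_k}h(\omega)+r_k(\omega)$ with $h$, $1/h$ an $A_2$-pair and remainders $r_k$ controlled in $L^2$; substitution yields
\begin{equation*}
K_N(\omega,\omega_1)=h(\omega_1)\sum_{k=-N}^{N}\e^{\iu(\omega-\omega_1)t_k}+\sum_{k=-N}^{N}\e^{\iu\omega t_k}\,r_k(\omega_1).
\end{equation*}
The first summand is a non-equispaced Dirichlet-type kernel whose $L^1$-norm should be bounded below by $C\log N$ uniformly in $\omega$ via a weighted analogue of the classical argument behind \eqref{75}, using Pavlov's separation estimate $\inf_k(t_{k+1}-t_k)>0$ and a Kadec-type comparison with the integer lattice $t_k=k$. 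The second summand must be shown to be a lower-order perturbation, which I would handle via Cauchy--Schwarz and the $l^2$-control of $\{r_k\}$ inherited from the Riesz basis property.

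The main obstacle is the uniform-in-$\omega$ lower bound for the non-equispaced exponential sum $\sum_{k=-N}^{N}\e^{\iu(\omega-\omega_1)t_k}$ in $L^1(\di{\omega_1})$. Unlike the equispaced case there is no convolution structure to exploit, so the pointwise $\log N$ behaviour of the Lebesgue function of the partial-sum projection is not an immediate consequence of abstract Riesz basis theory, and the harmonic-analytic content of the $A_2$-condition must be used in an essential way. A fall-back, should the weighted estimate resist direct analysis, would be to test against $\Ft{g}_\omega(\omega_1)=\exp(-\iu\arg K_N(\omega,\omega_1))$ as in the proof of Lemma~\ref{37} and extract the $\log N$ bound from asymptotic information about \eqref{7} near the real axis, together with Hilbert transform estimates in the $A_2$-weighted $L^p$-scale. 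In either approach the decisive step that truly leaves the scope of Szarek's machinery is obtaining the lower bound for the single terminal index $N$ rather than for a varying partial sum $M\leq N$; I expect that step to be the hard part.
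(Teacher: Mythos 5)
This statement is Conjecture~\ref{80}; the paper offers no proof of it. Remark~\ref{26} states explicitly that the authors' technique (Szarek's Lemmas~\ref{27} and~\ref{30}) only yields the lower bound with the outer $\max_{1\leq M\leq N}$ in place, i.e.\ Lemma~\ref{24}, and that they merely \emph{conjecture} the pointwise-in-$N$ strengthening \eqref{81}. So there is no proof in the paper to compare yours against; the only question is whether your proposal closes the gap the authors could not. It does not: it is a research plan whose decisive steps are left open, as you yourself acknowledge in the final sentences. Your framing of the problem is accurate (the equidistant sanity check via \eqref{75} is correct, and you rightly locate the difficulty in passing from a varying partial-sum index $M\leq N$ to the single terminal index $N$), but identifying the hard step is not the same as carrying it out.

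Two concrete gaps in the plan itself. First, the decomposition $\Ft{\phi}_k(\omega)=\e^{-\iu\omega t_k}h(\omega)+r_k(\omega)$ is asserted, not derived. Pavlov's characterisation behind Theorem~\ref{11} gives a Muckenhoupt condition on a weight built from the generating function $\phi$ of \eqref{7}, but it does not produce a single $k$-independent multiplier $h$ together with remainders that are small in any usable sense; for a genuinely non-uniform complete interpolating sequence the $\Ft{\phi}_k$ are not modulations of a fixed profile. Second, and more fatally, the perturbation term $\sum_{k=-N}^{N}\e^{\iu\omega t_k}r_k(\omega_1)$ cannot be dismissed by Cauchy--Schwarz plus $l^2$ control: a Bessel-type estimate gives only $\bigl(\sum_{\lvert k\rvert\leq N}\lVert r_k\rVert_{L^2}^2\bigr)^{1/2}$, which is $O(\sqrt{N})$ unless the $r_k$ decay, and $\sqrt{N}$ dwarfs the $\log N$ main term you need to preserve. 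You would need genuine smallness or cancellation of the $r_k$, which is precisely the information the Riesz-basis hypothesis does not supply. Finally, the ``weighted analogue of \eqref{75}'' for the non-equidistant kernel $\sum_{k=-N}^{N}\e^{\iu(\omega-\omega_1)t_k}$, uniformly in $\omega$, is itself an open problem of essentially the same depth as Conjecture~\ref{80}; reducing one open problem to another does not constitute a proof. The statement remains, as far as this paper is concerned, open.
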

If this conjecture is true then the derivations in this work imply that a theorem such as
Theorem~\ref{71} cannot hold for the sampling-based system approximation that
was treated in Section~\ref{4}. Because then, for every subsequence
$\{M_N\}_{N \in \N}$ of the natural numbers and all ordered complete interpolating
sequences $\{t_k\}_{k \in \Z} \subset \R$ we have
\begin{equation}\label{82}
  \limsup_{N \rightarrow \infty}
  \left( \sup_{t \in \R} \left| (T_* f_*)(t) - \sum_{k=-M_N}^{M_N}
      f_*(t_k) (T_* \phi_k)(t) \right| \right)
  =
  \infty
\end{equation}
for some $f_* \in \PWs_{\pi}^{1}$ and some stable LTI system $T_* \colon \PWs_{\pi}^{1}
\to \PWs_{\pi}^{1}$. In fact, in order to obtain this negative result for sampling-based
system approximation it would suffice to have an arbitrary sequence $\{L_N\}_{N \in \N}$
with $\lim_{N \rightarrow \infty} L_N = \infty$ on the right-hand side of
\eqref{81}. Note that we already know from \eqref{75}
and Remark~\ref{74} that Conjecture~\ref{80} and
\eqref{82} are true for the special case of equidistant sampling.

\section{More General Measurement Functionals}\label{83}

In this section we consider even more general measurement functionals than those in
Section~\ref{40}. For this, we restrict ourselves to stable LTI systems
$T$ with continuous $\Ft{h}_T$.

Now let $\{\Ft{g}_n\}_{n \in \N} \subset C[-\pi,\pi]$ be a sequence of functions with the
following properties:
\begin{enumerate}
\item\label{84} $\sup_{n \in \N} \lVert \Ft{g}_n \rVert_{L^\infty[-\pi,\pi]} < \infty$ and $\inf_{n \in \N}
\lVert \Ft{g}_n \rVert_{L^\infty[-\pi,\pi]} > 0$.
\item\label{85} $\{\Ft{g}_n\}_{n \in \N}$ is closed in $C[-\pi,\pi]$ and minimal, in
  the sense that for all $m \in \N$ the function $\Ft{g}_m$ is not in the closed span of
  $\{\Ft{g}_n\}_{n \neq m}$.
\item\label{86} There exists a constant
$C_{13}>0$ such that for any finite sequences
$\{a_n\}$ we have
\begin{equation}\label{87}
  \left\| \sum_{n} a_n \Ft{g}_n \right\|_{L^\infty[-\pi,\pi]}
  \geq
  \frac{1}{ C_{13}}
  \left( \sum_{n} \lvert a_n \rvert^2 \right)^{\frac{1}{2}} .
\end{equation}
\end{enumerate}

Property~\ref{85} guarantees that there exists a unique sequence of functionals
$\{u_n\}_{n \in \N}$ which is biorthogonal to $\{\Ft{g}_n\}_{n \in \N}$ \cite[p.~155]{heil11_book}.

We shortly discuss the structure of measurement functionals and approximation processes
which are based on sequences $\{\Ft{g}_n\}_{n \in \N} \subset C[-\pi,\pi]$ that satisfy
the properties \ref{84}--\ref{86}. Let $\{u_n\}_{n \in \N}$ be the unique
sequence of functionals which is biorthogonal to $\{\Ft{g}_n\}_{n \in \N}$. Since we
assume that $\Ft{h}_T \in C[-\pi,\pi]$, it follows that there exist finite regular Borel
measures $\mu_n$ such that
\begin{equation*}
  u_n(\Ft{h}_T)
  =
  \frac{1}{2\pi} \int_{-\pi}^{\pi} \Ft{h}_T(\omega) \di{\mu_n}(\omega) .
\end{equation*}
In \cite{szarek80} it was shown that, due to property \ref{86}, there exists a
regular Borel measure $\nu$ such that
\begin{equation*}
  \sum_{n=1}^{\infty} \lvert c_n(\Ft{h}_T) \rvert^2
  \leq
  C_{14} \int_{-\pi}^{\pi} \lvert \Ft{h}_T(\omega) \rvert^2 \di{\nu}(\omega) .
\end{equation*}
Further, all Borel measures $\mu_n$ are absolutely continuous with respect to $\nu$, and the
Radon--Nikodym derivatives of $\mu_n$ with respect to $\nu$, which we call $F_n$, are in
$L^2(\nu)$, i.e, we have
\begin{equation*}
  \int_{-\pi}^{\pi} \lvert F_n(\omega) \rvert^2 \di{\nu}(\omega)
  <
  \infty .
\end{equation*}
It follows that
\begin{equation*}
  \frac{1}{2\pi} \int_{-\pi}^{\pi} \Ft{g}_n(\omega) F_l(\omega) \di{\nu}(\omega)
  =
  \begin{cases}
    1,& n=l,\\
    0,& n \neq l,
  \end{cases}
\end{equation*}
i.e., the system $\{\Ft{g}_n,\cc{F_n}\}_{n \in \N}$ is a biorthogonal system with respect
to the measure $\nu$.

Note that this time we have a system that is
biorthogonal with respect to the regular Borel measure $\nu$ and not with respect to
the Lebesgue measure, as before. Thus, if we only require property
\ref{86}, we cannot find a corresponding biorthogonal system for the Lebesgue
measure in general, but only for more general measures. Nevertheless, we can obtain the
divergence result that is stated in Theorem~\ref{89}.

In \cite{szarek80} it was analyzed whether a basis for $C[-\pi,\pi]$ that satisfies the
above properties \ref{84}--\ref{86} could exist, and the nonexistence of
such a basis was proved. We employ this result to prove the following theorem, in
which we use the abbreviations
\begin{equation*}
  c_n(f,t)
  \defequal
  \frac{1}{2\pi} \int_{-\pi}^{\pi} \Ft{f}(\omega) \Ft{g}_n(\omega) \e^{\iu \omega t}
  \di{\omega} .
\end{equation*}
and
\begin{equation}\label{88}
  w_n(\Ft{h}_T,t)
  =
  \frac{1}{2 \pi} \int_{-\pi}^{\pi}  \Ft{h}_T(\omega) \e^{\iu \omega t} F_n(\omega)
  \di{\nu}(\omega) .
\end{equation}

\begin{theorem}\label{89}
  Let $\{\Ft{g}_n\}_{n \in \N} \subset C[-\pi,\pi]$ be an arbitrary sequence of functions
  that satisfies the above properties \ref{84}--\ref{86}, and let $t \in
  \R$. Then we have:
  \begin{enumerate}
  \item\label{90} There exists a stable LTI system $T_{*1} \colon \PWs_{\pi}^{1} \to \PWs_{\pi}^{1}$
  with $\Ft{h}_{T_{*1}} \in C[-\pi,\pi]$ and a signal $f_{*1} \in \PWs_{\pi}^{1}$ such that
  \begin{equation}\label{91}
    \limsup_{N \rightarrow \infty}
    \left| \sum_{n=1}^{N} c_n(f_{*1},t) w_n(\Ft{h}_{T_{*1}},0) \right|
    =
    \infty .
  \end{equation}

  \item\label{92} There exists a stable LTI system $T_{*2} \colon \PWs_{\pi}^{1} \to \PWs_{\pi}^{1}$
  with $\Ft{h}_{T_{*2}} \in C[-\pi,\pi]$ and a signal $f_{*2} \in \PWs_{\pi}^{1}$ such that
  \begin{equation}\label{93}
    \limsup_{N \rightarrow \infty}
    \left| \sum_{n=1}^{N} c_n(f_{*2},0) w_n(\Ft{h}_{T_{*2}},t) \right|
    =
    \infty .
  \end{equation}
  \end{enumerate}
\end{theorem}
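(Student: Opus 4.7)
The strategy is to reduce both parts of the theorem to the failure of $\{\Ft{g}_n\}_{n \in \N}$ to be a Schauder basis of $C[-\pi,\pi]$, which the Szarek nonexistence theorem \cite{szarek80}---cited in the discussion preceding the statement---rules out under properties \ref{84}--\ref{86}. Throughout I write $u_n(\Ft{h}) = \frac{1}{2\pi}\int_{-\pi}^{\pi} \Ft{h}(\omega) F_n(\omega) \di{\nu}(\omega)$ for the biorthogonal functionals, so that $w_n(\Ft{h}_T,0) = u_n(\Ft{h}_T)$, and
\begin{equation*}
  (S_N \Ft{h})(\omega) \defequal \sum_{n=1}^{N} u_n(\Ft{h})\,\Ft{g}_n(\omega)
\end{equation*}
for the $N$th partial-sum operator acting on $C[-\pi,\pi]$.

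The first step is to show $\sup_N \lVert S_N \rVert_{C[-\pi,\pi] \to C[-\pi,\pi]} = \infty$. If this supremum were finite, then uniform boundedness together with the fact that $S_N$ reproduces any finite linear combination of the $\Ft{g}_n$---and, by property \ref{85}, such combinations are dense in $C[-\pi,\pi]$---would force $S_N \Ft{h} \to \Ft{h}$ uniformly for every $\Ft{h} \in C[-\pi,\pi]$, making $\{\Ft{g}_n\}$ a Schauder basis of $C[-\pi,\pi]$ that still satisfies \ref{84}--\ref{86}, contradicting the main result of \cite{szarek80}. A Fubini rearrangement then gives, for every $N \in \N$,
\begin{align*}
  \sum_{n=1}^{N} c_n(f,t)\,w_n(\Ft{h}_T,0)
  &= \frac{1}{2\pi}\int_{-\pi}^{\pi} \Ft{f}(\omega)\,\e^{\iu \omega t}\,(S_N \Ft{h}_T)(\omega)\di{\omega},\\
  \sum_{n=1}^{N} c_n(f,0)\,w_n(\Ft{h}_T,t)
  &= \frac{1}{2\pi}\int_{-\pi}^{\pi} \Ft{f}(\omega)\,(S_N M_t \Ft{h}_T)(\omega)\di{\omega},
\end{align*}
where $(M_t \Ft{h})(\omega) \defequal \Ft{h}(\omega)\e^{\iu \omega t}$ is multiplication by a unimodular continuous function and hence an isometric automorphism of $C[-\pi,\pi]$. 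Consequently $\lVert S_N M_t \rVert_{C \to C} = \lVert S_N \rVert_{C \to C}$, so the operators $S_N M_t$ are unbounded too.

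The rest follows the two-step Banach--Steinhaus scheme used in the proofs of Theorem~\ref{20} and Theorem~\ref{62}. For fixed $\Ft{h}_T \in C[-\pi,\pi]$, $L^1$--$L^\infty$ duality together with the Lusin-style density argument of Lemma~\ref{37} yields
\begin{equation*}
  \sup_{\lVert f \rVert_{\PWs_{\pi}^{1}} \leq 1}
  \left|\frac{1}{2\pi}\int_{-\pi}^{\pi} \Ft{f}(\omega)\,\e^{\iu \omega t}\,(S_N \Ft{h}_T)(\omega)\di{\omega}\right|
  =
  \lVert S_N \Ft{h}_T \rVert_{L^\infty[-\pi,\pi]},
\end{equation*}
and a further supremum over $\lVert \Ft{h}_T \rVert_{L^\infty[-\pi,\pi]} \leq 1$ recovers $\lVert S_N \rVert_{C \to C}$, which is unbounded. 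A first application of Banach--Steinhaus produces $\Ft{h}_{T_{*1}} \in C[-\pi,\pi]$ with $\limsup_N \lVert S_N \Ft{h}_{T_{*1}} \rVert_{L^\infty} = \infty$; via the bijection \eqref{2}, this function is the transfer function of a stable LTI system $T_{*1}$. A second application of Banach--Steinhaus, to the linear functionals $f \mapsto \frac{1}{2\pi}\int \Ft{f}(\omega)\,\e^{\iu \omega t}(S_N \Ft{h}_{T_{*1}})(\omega)\di{\omega}$ on $\PWs_\pi^1$, yields $f_{*1} \in \PWs_\pi^1$ for which \eqref{91} holds. Part \ref{92} is obtained by the identical argument with $S_N$ replaced by $S_N M_t$.

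The main obstacle is the first step: extracting $\sup_N \lVert S_N \rVert_{C \to C} = \infty$ from Szarek's nonexistence theorem using nothing beyond \ref{84}--\ref{86}. One must verify that Szarek's hypotheses on a hypothetical basis for $C[-\pi,\pi]$ match exactly what a uniformly bounded $\{S_N\}$ would deliver here, despite the fact that our biorthogonality is with respect to the measure $\nu$ rather than Lebesgue measure. Once this hurdle is cleared, the argument is a straightforward transfer of the template already developed in Section~\ref{19} to the biorthogonal system $\{\Ft{g}_n,\cc{F_n}\}_{n \in \N}$.
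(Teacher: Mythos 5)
Your proposal is correct and follows essentially the same route as the paper: Szarek's nonexistence theorem gives $\limsup_N\lVert S_N\rVert_{C\to C}=\infty$, a first application of Banach--Steinhaus produces the transfer function $\Ft{h}_{T_{*1}}$, the $L^1$--$L^\infty$ duality identifies the sup over the $\PWs_{\pi}^{1}$ unit ball with $\lVert S_N\Ft{h}_{T_{*1}}\rVert_{L^\infty}$, and a second application of Banach--Steinhaus yields $f_{*1}$. Your treatment of part two via the isometry $M_t$ is the same idea as the paper's explicit choice $\Ft{h}_{T_{*2}}(\omega)=\e^{-\iu\omega t}\Ft{h}_{T_{*1}}(\omega)$, which makes $S_N M_t\Ft{h}_{T_{*2}}=S_N\Ft{h}_{T_{*1}}$ and so reuses the divergence already established.
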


\begin{proof}
  \smartqed
  We start with the proof of assertion \ref{90}.
  In \cite{szarek80} it was proved that there exists no basis for $C[-\pi,\pi]$ with the
  above properties \ref{84}--\ref{86}. That is, if we set
  \begin{equation*}
    (S_N \Ft{h}_T)(\omega)
    =
    \sum_{n=1}^{N} w_n(\Ft{h}_T,0) \Ft{g}_n(\omega), \quad \omega \in [-\pi,\pi],
  \end{equation*}
  then, for
  \begin{equation*}
    \lVert S_N \rVert
    =
    \sup_{
      \substack{\Ft{h}_T \in C[-\pi,\pi],\\
      \lVert \Ft{h}_T \rVert_{L^\infty[-\pi,\pi]} \leq 1}
    }
    \lVert S_N \Ft{h}_T \rVert_{L^\infty[-\pi,\pi]}
  \end{equation*}
  we have according to \cite{szarek80} that
  \begin{equation*}
    \limsup_{N \rightarrow \infty} \lVert S_N \rVert
    =
    \infty .
  \end{equation*}
  Due to the Banach--Steinhaus theorem \cite[p. 98]{rudin87_book} there exists a
  $\Ft{h}_{T_{*1}} \in C[-\pi,\pi]$ such that
  \begin{equation}\label{94}
    \limsup_{N \rightarrow \infty} \left( \max_{\omega \in [-\pi,\pi]} \left|
        \sum_{n=1}^{N} w_n(\Ft{h}_{T_{*1}},0) \Ft{g}_n(\omega)
      \right| \right)
    =
    \infty .
  \end{equation}
  Since
  \begin{align*}
    &\sum_{n=1}^{N} \left( \frac{1}{2\pi} \int_{-\pi}^{\pi} \Ft{f}(\omega) \Ft{g}_n(\omega)
      \e^{\iu \omega t} \di{\omega} \right) w_n(\Ft{h}_{T_{*1}},0) \\
    &=
    \frac{1}{2\pi} \int_{-\pi}^{\pi} \Ft{f}(\omega) \e^{\iu \omega t} \left( \sum_{n=1}^{N}
      w_n(\Ft{h}_{T_{*1}},0) \Ft{g}_n(\omega) \right) \di{\omega} ,
  \end{align*}
  and
  \begin{align*}
    &\sup_{\lVert f \rVert_{\PWs_{\pi}^{1}} \leq 1}
    \sum_{n=1}^{N} \left( \frac{1}{2\pi} \int_{-\pi}^{\pi} \Ft{f}(\omega) \Ft{g}_n(\omega)
      \e^{\iu \omega t} \di{\omega} \right) w_n(\Ft{h}_{T_{*1}},0) \\
    &=
    \max_{\omega \in [-\pi,\pi]} \left|
        \sum_{n=1}^{N} w_n(\Ft{h}_{T_{*1}},0) \Ft{g}_n(\omega)
      \right| ,
  \end{align*}
  it follows from \eqref{94} and the Banach--Steinhaus theorem \cite[p.
  98]{rudin87_book} that there exists an $f_{*1} \in \PWs_{\pi}^{1}$ such that
  \eqref{91} is true.

  Now we prove assertion \ref{92}.
  For $\Ft{h}_T \in C[-\pi,\pi]$, it follows for fixed $t \in \R$ that $\Ft{h}_T(\omega) \e^{\iu
    \omega t}$ is a continuous function on $[-\pi,\pi]$, and hence the integral 
  \eqref{88} exists.
  Let $t \in \R$ be arbitrary but fixed, and let  $\Ft{h}_{T_{*1}} \in C[-\pi,\pi]$ be the
  function from \eqref{94}. We define
  \begin{equation*}
    \Ft{h}_{T_{*2}}(\omega)
    =
    \e^{-\iu \omega t} \Ft{h}_{T_{*1}}(\omega), \quad \omega \in [-\pi,\pi] ,
  \end{equation*}
  and clearly we have $\Ft{h}_{T_{*2}} \in C[-\pi,\pi]$.
  It follows that
  \begin{equation*}
    \sum_{n=1}^{N} w_n(\Ft{h}_{T_{*2}},t) \Ft{g}_n(\omega)
    =
    \sum_{n=1}^{N} w_n(\Ft{h}_{T_{*1}},0) \Ft{g}_n(\omega)
  \end{equation*}
  for all $\omega \in [-\pi,\pi]$ and all $N \in \N$.
  Hence, we see from \eqref{94} that
  \begin{equation*}
    \limsup_{N \rightarrow \infty} \left( \max_{\omega \in [-\pi,\pi]} \left|
      \sum_{n=1}^{N} w_n(\Ft{h}_{T_{*2}},t) \Ft{g}_n(\omega)
    \right| \right)
    =
    \infty ,
  \end{equation*}
  and, by the same reasoning that was used in the proof of assertion \ref{90}, there exists an
  $f_{*2} \in \PWs_{\pi}^{1}$ such that
  \eqref{93} is true. \qed
\end{proof}

\begin{remark}
  Clearly, the development of an implementation theory, as outlined in the introduction,
  is a challenging task. Some results are already known. For example, in \cite{boche10f}
  it was shown that for bounded bandlimited signals a low-pass filter cannot be
  implemented as a linear system, but only as a non-linear system. Further, problems that
  arise due to causality constraints were discussed in \cite{pohl09_book}.

  At this point, it is worth noting that Arnol'd's \cite{arnold57} and Kolmogorov's
  \cite{kolmogorov57} solution of Hilbert's thirteenth problem \cite{hilbert02} give
  another implementation for the analog computation of functions. For a discussion of the
  solution in the context of communication networks, we would like to refer the reader to
  \cite{goldenbaum13}.

  Finally, it would also be interesting to connect the ideas of this work with Feynman's
  ``Physics of Computation'' \cite{feynman99_book} and Landauer's principle
  \cite{landauer61,landauer96}. Right now we are at the beginning of this development.
\end{remark}

{
\raggedleft
"Wir, so gut es gelang,
haben das Unsre [(vorerst)] getan."\\
Friedrich H\"olderlin "Der Gang aufs Land - An Landauer"

}

\begin{acknowledgement}
  The authors would like to thank Ingrid Daubechies for valuable discussions of
  Conjectures~\ref{16} and \ref{18} and for pointing out connections to frame
  theory at the Strobl'11 conference and the ``Applied Harmonic Analysis and Sparse
  Approximation'' workshop at the Mathematisches Forschungsinstitut Oberwolfach in 2012.
  Further, the authors are thankful to Przemys{\l}aw \mbox{Wojtaszczyk} and Yurii
  Lyubarskii for valuable discussions of Conjecture~\ref{16} at the Strobl'11
  conference, and Joachim Hagenauer and Sergio Verd\'u for drawing our attention to
  \cite{butzer11a} and for discussions of related topics. We would also like to thank
  Mario Goldenbaum for carefully reading the manuscript and providing helpful comments.
\end{acknowledgement}

\end{document}